\newcommand{\bbZ}{{\Bbb Z}}
\newcommand{\bbR}{{\Bbb R}}
\newcommand{\bbN}{{\Bbb N}}
\renewcommand{\cite}{\citeyear}
\begin{document}

\title{On the wavelet-based simulation of anomalous diffusion
\thanks{The first author was supported in part by the Louisiana Board of Regents award LEQSF(2008-11)-RD-A-23.
The second author was supported in part by the NSF grant DMS-0714939.}
\thanks{The authors would like to thank Vladas Pipiras for his comments on this work.}
\thanks{{\em AMS Subject classification}. Primary:  60G10, 60H35, 42C40, 60G15.}
\thanks{{\em Keywords and phrases}: simulation; wavelets; anomalous diffusion; generalized Langevin equation; fractional Ornstein-Uhlenbeck process.} }

\author{Gustavo Didier \\ Tulane University  \and John Fricks\\ Penn State University}

\bibliographystyle{agsm}

\maketitle

\begin{abstract}

The field of microrheology is based on experiments involving particle diffusion. Microscopic tracer beads are placed into a non-Newtonian fluid and tracked using high speed video capture and light microscopy. The modeling of the behavior of these beads is now an active scientific area which demands multiple stochastic and statistical methods.

We propose an approximate wavelet-based simulation technique for two classes of continuous time anomalous diffusion models, the fractional Ornstein-Uhlenbeck process and the fractional generalized Langevin equation. The proposed algorithm is an iterative method that provides approximate discretizations that converge quickly and in an appropriate sense to the continuous time target process. As compared to previous works, it covers cases where the natural discretization of the target process does not have closed form in the time domain. Moreover, we propose to minimize the border effect via smoothing.

\end{abstract}

%\begin{keywords}
%simulation; wavelets; anomalous diffusion; generalized Langevin equation; fractional Ornstein-Uhlenbeck process.
%\end{keywords}
%\begin{classcode}
%60G10, 60H35, 42C40, 60G15.
%\end{classcode}
%\bigskip

%---------------------------------INTRODUCTION--------------------------------------

\section{Introduction}\label{s:intro}

The characterization of particle diffusion is a classical problem in physics and probability theory dating back to the early work of Einstein on Brownian motion. The case of Newtonian fluids, which includes water, is now well-understood; however, the case of complex fluids is an active area of experimental and theoretical inquiry. Of particular interest are fluids of biological origin such as mucus, which is both highly viscous due to the presence of compounds such as mucin, a high molecular weight protein that enters into the composition of human mucus, and elastic due to the cross-linking of the mucin. Models of diffusion in biological fluids have been developed for many pharmaceutical and medical applications (e.g., Lai et al.\ \cite{lai:wang:hanes:2009}, Suk and Lai \ \cite{suk:lai:2009}, Suh et al.\ \cite{suh:dawson:hanes:2005}, Dixit et al.\ \cite{dixit:ross:goldman:holzbaur:2008}). The related field of microrheology is based on experiments in which tracer beads, whose radii range from tens of nanometers to micrometers, are placed into the complex fluid and tracked at millisecond sampling rates using high speed video capture and light microscopy (e.g., Mason and Weitz \cite{mason:weitz:1995}). Tens of those beads can be tracked at a time, and the data is preprocessed using particle tracking software to yield the position $X(t)$, as a function of time, of the individual diffusing particles. The study of the behavior of these beads generates a wide range of interconnected statistical problems such as physically-informed probabilistic modeling, time and spectral domain inference, the characterization of first passage times, and simulation. For a detailed description of these problems, see Didier et al.\ \cite{didier:mckinley:hill:fricks:2012} and references therein.

In the physics literature, a position process is called diffusive if its second moment satisfies Einstein's Brownian diffusion law, i.e.,
\begin{equation}\label{e:diffusivity_long_term}
EX^2(t) \sim t^{\alpha}, \quad t \rightarrow \infty,
\end{equation}
with $\alpha = 1$. However, if $\alpha < 1$ or $>1$, then it is said to be anomalously diffusive (sub- and superdiffusive, respectively; see Kou \cite{kou2008sat}, and references therein). In this paper, our goal is to provide a fast, approximate wavelet-based method for two classes of continuous time anomalous diffusion models. We approach this problem with wavelet methods along the lines of the papers of Didier and Pipiras \cite{didier:pipiras:2008,didier:pipiras:2010}, as well as many other works including Meyer et al.\ \cite{MST:1999}, Zhang and Walter \cite{zhang:walter:1994}, Sellan \cite{sellan:1995}, Pipiras \cite{pipiras:2004}. In particular, the work in Didier and Pipiras \cite{didier:pipiras:2008} explores the idea that it is natural to consider simulating a continuous time process by means of its discretization at a given scale (related to a grid) and to use a fast wavelet algorithm to connect discretizations over multiple scales. This approach works well with processes such as the Ornstein-Uhlenbeck (OU). Due to its Markovian nature, its (exact) discretization over any regular grid is an AR(1) sequence, which displays a rather simple correlation structure, available in closed form. By contrast, the discretization of anomalous diffusion models usually exhibit intricate correlation structures (see Pipiras \cite{pipiras:2005}, section 4, on a similar problem with fBm and its simulation). In this paper, we propose a simple method to generate (approximate) discretizations of certain classes of continuous time anomalous diffusion models whose spectral densities exhibit fractional behavior at the origin and for which correlation structures are not available in closed form. These sequences of approximate discretizations quickly converge to the continuous time process. Furthermore, we propose a way to smooth the associated simulation filters in the Fourier domain so to accelerate their time domain decay and thus minimize the effect of the truncation of infinite length filters upon computational implementation. While the smoothing is not necessary to obtain convergence, it improves the stochastic accuracy of the method in practice. We now explain these issues in more detail.

For a viscous fluid such as water, the velocity $V$ of a free particle satisfies the Langevin equation. The stationary solution for $V$ gives the well-known OU process. In order to model the non-Markovian nature of non-Newtonian, viscoelastic fluids such as many biological fluids, one generalizes the Langevin equation by including a memory kernel $\Gamma(t)$. The resulting generalized Langevin equation (GLE) for the velocity process of a free particle is
\begin{equation} \label{eq:gle}
	m\frac{d}{dt} V(t) = -\zeta \int_{-\infty}^t \Gamma(t-s) V(s) ds + F(t), \quad t \geq 0,
\end{equation}
where $m$ is the particle mass, $\zeta$ is the friction constant, $k_B$ is the Boltzmann constant. The term $F(t)$ is a stationary Gaussian process with autocorrelation function $E{F(t)F(s)} = k_B \tau \zeta \Gamma(|t-s|)$, where $\tau$ is the temperature. The special case of the classical Langevin equation is obtained by setting $\Gamma$ to a Dirac delta distribution. In this paper, we are particularly interested in the (subdiffusive) fractional GLE (fGLE), since its correlation structure is now well-studied (see Kou \cite{kou2008sat}). This corresponds to setting $F(t)dt$ to $dB_{H}(t)$ up to a constant, where $B_{H}(t)$ is a fractional Brownian motion (fBm; see, for instance, Taqqu \cite{taqqu:2003}). %In the framework of the fGLE, the memory kernel satisfies
%\begin{equation}\label{e:fGLE_kernel}
%\Gamma(t) = 2H(2H-1)|t|^{2H-2}, \quad t \neq 0, \quad \frac{1}{2} < H < 1.
%\end{equation}
The fractional Ornstein-Uhlenbeck process (fOU) is another model for anomalous diffusion of interest in this paper. It is the a.s.\ continuous solution to the fBm-driven Langevin equation
\begin{equation} \label{e:fOU_SDE}
dV(t) = - \zeta V(t) dt + \sigma dB_{H}(t), \quad t \geq 0, \quad 0 < H < 1
\end{equation}
(see Cheridito et al.\ \cite{cheridito:kawaguchi:maejima:2003}, Prakasa Rao \cite{prakasarao:2010}).  Its simulation is interesting in its own right, since the fOU is a model for both sub- and superdiffusion. In addition, simulation of the fOU can be viewed as a step towards the simulation of the full fGLE, both for analytical convenience and due to their similar correlation structures.  For a comparison between these processes, see Section \ref{s:Fourier_integ}.

Our proposed simulation procedure is based on a wavelet analysis of the velocity process $V$. This analysis encompasses three components:
\begin{enumerate}
\item [$(i)$] a wavelet-based decomposition of $V(t)$;
\item [$(ii)$] a sequence of stationary discrete time processes $V_j =
\{V_{j,n}\}_{n \in \mathbb{Z}}$,
\begin{equation}\label{e:V_j_Wold}
V_{j,k} = \sum^{\infty}_{n=-\infty} g_{j,n} \xi_{k-n},  \quad \{\xi_{k}\} \stackrel{\textnormal{i.i.d.}}\sim \textnormal{WN}(0,1),
\end{equation}
that can be thought of as (exact or approximate) discretizations of the continuous time process $V$ at the (wavelet) scale
$2^{-j}$;
\item [$(iii)$] a Fast Wavelet Transform (FWT)-like algorithm relating
$V_{j}$ across different scales.
\end{enumerate}
The theoretical backbone of the simulation technique, which makes use of a Meyer multiresolution analysis (MRA; see Mallat \cite{mallat:1999}), is given by $(i)$. More details can be found in Didier and Pipiras \cite{didier:pipiras:2008}. % We now explain the components $(ii)$ and $(iii)$ in more detail including what needs to be calculated to implement the simulation algorithm.
The simulation procedure itself, which is the subject of this paper, can be directly expressed as components $(ii)$ and $(iii)$ (see Section \ref{s:choice} for a schematic description of the algorithm). The latter is based on the Cram\'{e}r-Wold Fourier domain representation of the stationary velocity process
\begin{equation}\label{e:Cramer-Wold_spec}
V(t) = \int_{\bbR}e^{itx}\widehat{g}(x) \widetilde{B}(dx),
\end{equation}
where $\widehat{g}(x) \in L^2(\bbR)$ is called a spectral filter and $\widetilde{B}(dx)$ is a complex valued Brownian measure (see \eqref{e:Brownian_measure}). %In either the discrete \eqref{e:V_j_Wold} or continuous \eqref{e:Cramer-Wold_spec} time setting, the autocovariance function of the process in question is given by the inverse Fourier transform of the spectral density, i.e., of the functions $|\widehat{g}_{j}(x)|^2$ and $|\widehat{g}(x)|^2$, respectively.
The simulation method relies on designing an appropriate sequence of discrete time filters $g_{j}$, $j=0,1,...,J$, where $J$ is the finest scale chosen (component $(ii)$ above). Then, the filters $g_{j}$ are used to generate an induced sequence via a wavelet filter-based recursion of discretizations $V_j$, at scale $j \in \bbN$ (component $(iii)$ above). We can show that
\begin{equation}\label{e:wavelet crime}
2^{J/2}V_{J,\lfloor 2^{J}t \rfloor } \approx V(t), \quad J \rightarrow \infty,
\end{equation}
in a suitable sense, where $\lfloor x \rfloor$ denotes the integer part of $x \in \bbR$. This in turn leads to a natural approximation of the position process $X(t)$ as a Riemann sum (see Theorem \ref{t:VJ_conv_unif} and Corollary \ref{c:X_Riemann} for the rigorous statements).

The choice of the sequence of discretization filters $g_j$ is the key requirement. Heuristically, it should be such that
\begin{equation}\label{e:GJ_approx_1}
G_{j}(2^{-j}x) = \frac{\widehat{g}_{j}(2^{-j}x)}{\widehat{g}(x)}\approx 1, \quad j \rightarrow \infty, \quad x \in \bbR,
\end{equation}
where $\widehat{g}_{j}(x)$ is periodically extended to $\bbR$ (see Assumption 3 in Section \ref{s:choice} for the actual meaning of the expression \eqref{e:GJ_approx_1}). Intuitively, the discretization filter approximates, up to a scaling factor, the continuous time filter as the scale becomes finer and finer. In this regard, the case of the OU process is quite special. In the simulation framework of this paper, the specification
\begin{equation}\label{e:GJ_OU_exact}
G_{j}(x) = 2^{-j} \frac{(1 - e^{-\zeta 2^{-j}}e^{- i x})^{-1}}{(\zeta + 2^j i x)^{-1}}
\end{equation}
is then a natural choice, since $(1 - e^{-\zeta 2^{-j}}e^{- i x})^{-1}$ is a spectral filter of the associated discrete time AR(1) sequence at scale $j$. Note that the heuristic relation (\ref{e:GJ_approx_1}) is indeed satisfied in a pointwise sense. %Some exact and analytically convenient discretization schemes such as \eqref{e:GJ_OU_exact} are discussed in Didier and Pipiras \cite{didier:pipiras:2008}.
However as a rule, the discretization of a continuous time process leads to a substantially more intricate expression for the spectral density, such as for fBm, the fGLE and the fOU and unlike the case of the OU process. Though expressions for the covariance structure are available for the fGLE and the fOU, they do not appear in closed form and will require numerical methods (see Corollary \ref{c:increm_subdiff_specdens}).

The main contribution of this paper is to go beyond Didier and Pipiras \cite{didier:pipiras:2008} in the sense of proposing a simple method to obtain discretization spectral filters (i.e., $G_j$) for the velocity process of the fGLE and the fOU. We exchange exact discretization for mathematical and computational manageability without hindering convergence. The discretization sequences $\{V_{j,n}\}$  obtained are approximate discretizations (at scale $j$) of the continuous time process $V$, but we show that the convergence on finite intervals (property \eqref{e:wavelet crime}) still holds. The method truncates the continuous time spectral filter in the Fourier domain and generates (non-causal) wavelet simulation filters. One should note that the fast time domain decay of filters is a quite desirable property because upon computational implementation, infinite length filters used in convolution-based algorithms must be truncated (the border effect). Therefore, time domain filters with ``lighter tails" tend to improve the stochastic accuracy of the simulation method in practice. So, we show that the proposed wavelet filters also display the property of quadratic decay obtained otherwise by exact discretization. Moreover, we also propose a way to smooth such filters in the Fourier domain to further accelerate their time domain decay in computational practice while preserving the convergence property \eqref{e:wavelet crime}.

Moreover, the simulation procedure shares the positive properties of Didier and Pipiras \cite{didier:pipiras:2008}, such as: it is computationally fast, potentially reaching complexity $O(N)$, since it is based on a Fast Wavelet Transform-like algorithm; it provides iterative discretizations that converge uniformly over compact intervals a.s.; the convergence speed is exponentially fast and depends on the sample path smoothness of the limiting process. It is also iterative both intensively and extensively. In other words, a generated discretization at scale $J$ over a compact interval $[0,T]$, $T \in \bbR_+$, can be used to generate a finer discretization at scale $J+1$ over $[0,T]$ or some expanded interval $[0,T + \chi]$, $\chi \in \bbR_+$. Our simulation procedure also serves to simulate the position process $X$ with the same convergence rate as the velocity process $V$ over compact intervals. Moreover, the method is not intrinsically Gaussian, although Gaussian processes are the primary focus of this paper.

The paper is structured as follows. In Section \ref{s:filters_via_trunc}, we introduce the Fourier domain integral representations for the velocity processes $V$ for the fGLE and the fOU process, describe the simulation method, and develop the discretization filters that enter into the simulation procedure. In Section \ref{s:eval_sim}, we evaluate the accuracy of the wavelet-based simulation method in comparison to other, exact methods. Appendices \ref{s:adaptation_proofs} and \ref{s:aux} contain all the proofs and auxiliary results, respectively. Appendix \ref{s:tables} contains the tables with the simulation results, while Appendix \ref{s:accuracy_numerical_integration} shows a study of the numerical accuracy of the computational techniques. For clarity, pseudocode for the simulation method is provided in Appendix \ref{s:pseudocode}.

\begin{remark}
In wavelet terminology, the sequences $\{V_{J,k}\}$ would be called approximation coefficients, and in this sense they would make up approximations to the continuous time process $V(t)$. However, to avoid confusion with the predominantly deterministic and computational uses of the word ``approximation" in this paper, we opted for only calling $\{V_{J,k}\}$ ``discretizations".
\end{remark}

%
%\begin{itemize}
%\item we propose a discretization (related to the function $G_J$) that works for a process whose spectral density or its inverse exhibits a singularity at the origin. This requires some adaptation of the original arguments in Didier and Pipiras \cite{didier:pipiras:2008};
%\item we propose a discretizations (related to the function $G_J$) with spectral filters without closed form Fourier series representations, and which exhibit fractional behavior at the origin. In particular, the associated filters have quadratic decay, which is good for computational approximation purposes;
%\item our simulation procedure serves to simulate the position process $X$ with the same convergence rate over compact intervals.
%\end{itemize}

%
%Another goal of this paper is to retrieve the fact that wavelets are a natural tool for simulating continuous time stochastic processes exhibiting spectral densities with singularities at the origin just as they are for the Ornstein-Uhlenbeck process, or processes whose densities are ratios of polynomials (see Didier and Pipiras \cite{didier:pipiras:2008}, section 4). In particular, we show that AWD also work quite well with spectral filters without closed form Fourier series representations.

\section{Filters in the spectral domain}\label{s:filters_via_trunc}

In this section, we propose approximate discretization filters with the purpose of simulation. The focus is on the fGLE and fOU, but the OU process will be revisited frequently in order to contrast exact and approximate discretization procedures.

\subsection{Fourier domain representations}\label{s:Fourier_integ}

First, we express the integral representations which will be the basis for the construction of the simulation procedures for each class of processes. It is convenient to rewrite the velocity process $V$ as a Fourier domain stochastic integral with respect to a Brownian measure.  So, we define $\widetilde{B}_1$, $\widetilde{B}_2$  as two real-valued Brownian motions and $\widetilde{B}(dx) = \widetilde{B}_1(dx) + i \widetilde{B}_2(dx)$ as the induced random measure satisfying
\begin{equation}\label{e:Brownian_measure}
\widetilde{B}(-dx) = -\overline{\widetilde{B}(dx)} \hspace{2mm}\textnormal{a.s.}, \quad E|\widetilde{B}(dx)|^2 = dx.
\end{equation}

\begin{remark}
Throughout the paper, the Fourier transform of either a discrete or continuous time function/filter $g$ is denoted by $\widehat{g}$. We use the parameter $\delta$ to represent the fractional behavior of a spectral density around the origin. In other words,
\begin{equation}\label{e:specdens_powerlaw}
|\widehat{g}(x)|^2 \sim x^{-2 \delta}, \quad -\frac{1}{2}< \delta < \frac{1}{2}.
\end{equation}
If $\delta > 0$ or $\delta < 0$ , then the process is said to be long range dependent or antipersistent, respectively. As shown in this section, both the fOU and the fGLE have spectral densities that satisfy \eqref{e:specdens_powerlaw}. Moreover, in both cases we will define
\begin{equation}\label{e:d=H-1/2}
d = H - 1/2,
\end{equation}
where $H$ is the Hurst parameter of the driving fBm. Note, however, that for the fOU, $\delta = d$, whereas for the fGLE, $\delta = -d$. In both cases, $\alpha$ in \eqref{e:diffusivity_long_term} and $\delta$ are connected by means of the relation $\alpha = 1 + 2 \delta$ (see Didier et al.\ \cite{didier:mckinley:hill:fricks:2012}).
\end{remark}

The fOU process admits the spectral representation \eqref{e:Cramer-Wold_spec} with
\begin{equation}\label{e:fracOU_spec}
\widehat{g}(x) = \sigma \sqrt{\Gamma(2d+2) \sin(\pi (d+1/2))} \frac{1}{\sqrt{ \zeta^2 + x^2 }} |x|^{-d}, \quad -\frac{1}{2} < d < \frac{1}{2},
\end{equation}
from Proposition \ref{p:V_gle_spec_repres} of Cheridito et al.\ \cite{cheridito:kawaguchi:maejima:2003}, pp.\ 5-8. The integral representation of the OU process can be obtained by setting $d=0$ in \eqref{e:fracOU_spec}, where \eqref{e:d=H-1/2} holds and $H$ is as in \eqref{e:fOU_SDE}.  The velocity process for the fGLE \eqref{eq:gle} can be represented as in \eqref{e:Cramer-Wold_spec} with
\begin{equation}\label{p:V_gle_spec_repres}
\widehat{g}(x) = c(d) \frac{1}{(\gamma_0 + \gamma_1 |x|^{\beta} + \gamma_2 |x|^{2\beta})^{1/2}} |x|^{d}, \quad 0 < d < \frac{1}{2},
\end{equation}
where $\beta = 1 + 2 d$, and the constants are defined by
$$
\gamma_0 = a^2 + b^2, \quad \gamma_1 = 2bm, \quad \gamma = m^2, \quad c(d) = \sqrt{2 \zeta k_B T \hspace{1mm}\Gamma(2d+2)\sin(\pi(d+1/2))}
$$
and $a = \zeta \Gamma(2d+2)\sin(\pi(d+1/2))$, $b = \zeta \Gamma(2d+2)\cos(\pi(d+1/2))$. Note that $\zeta > 0$, and $0 < d < 1/2$, so $a > 0$ and $b < 0$.

The expression \eqref{p:V_gle_spec_repres} can be established by following the techniques of Kou \cite{kou2008sat}, Theorem 2.1. One begins by expressing the velocity process of a fBm-driven free particle \eqref{eq:gle} in terms of a pathwise defined Riemann-Stieltjes integral
\begin{equation}\label{e:V_gle_def}
V(t) = \sqrt{2 \zeta k_B \tau}\int_{\bbR}r(t-u)B_{H}(du).
\end{equation}
The time domain filter $r$ is given through the inverse Fourier transform
$$
r(t) = \frac{1}{2\pi}\int_{\bbR}\frac{1}{\zeta
\widetilde{K}^{+}_{H}(x) - imx}e^{-itx}dx,
$$
where $\widetilde{K}^{+}_{H}(x) = |x|^{1 - 2H}\Gamma(2H + 1)(\sin(H \pi) - i \textnormal{sign}(x) \cos(H \pi) )$. To obtain \eqref{p:V_gle_spec_repres}, rewrite the integrand \eqref{e:V_gle_def} as a fractional integral as in Pipiras and Taqqu \cite{pipiras:taqqu:2000} or, equivalently, adapt the expression for the spectral density of $V$ developed in Kou \cite{kou2008sat}, p.\ 524. We arrive at a Fourier domain integral with respect to a measure \eqref{e:Brownian_measure}. The spectral filter is
$$\frac{1}{\zeta \kappa(x) - im \hspace{0.5mm}\textnormal{sign}(x)|x|^{1+2d}} |x|^{d},$$
up to a constant; this leads to the filter in \eqref{p:V_gle_spec_repres} by elimination of the imaginary part.

Expression \eqref{e:fracOU_spec} shows that the fractional parametrization $\delta$ of the fOU encompasses the full range in \eqref{e:specdens_powerlaw}, whereas the fGLE is necessarily antipersistent. For the sake of illustration, the correlation structure of the fGLE is depicted in Figure \ref{f:fGLE_correl_structure}. Its autocovariance function displays the characteristic fluctuations associated with antipersistence. The spectral densities of the OU and fOU processes are displayed in Figure \ref{f:fOU_OU_specdens}; the latter has the characteristic singularity at the origin associated with long range dependence. To view qualitative differences between the processes, sample paths of the OU, fOU, and fGLE processes are shown in Figure \ref{f:OU_fracOU_samplepaths}.

\begin{figure}
	\begin{center}
	\includegraphics[height=2in,width=2.5in]{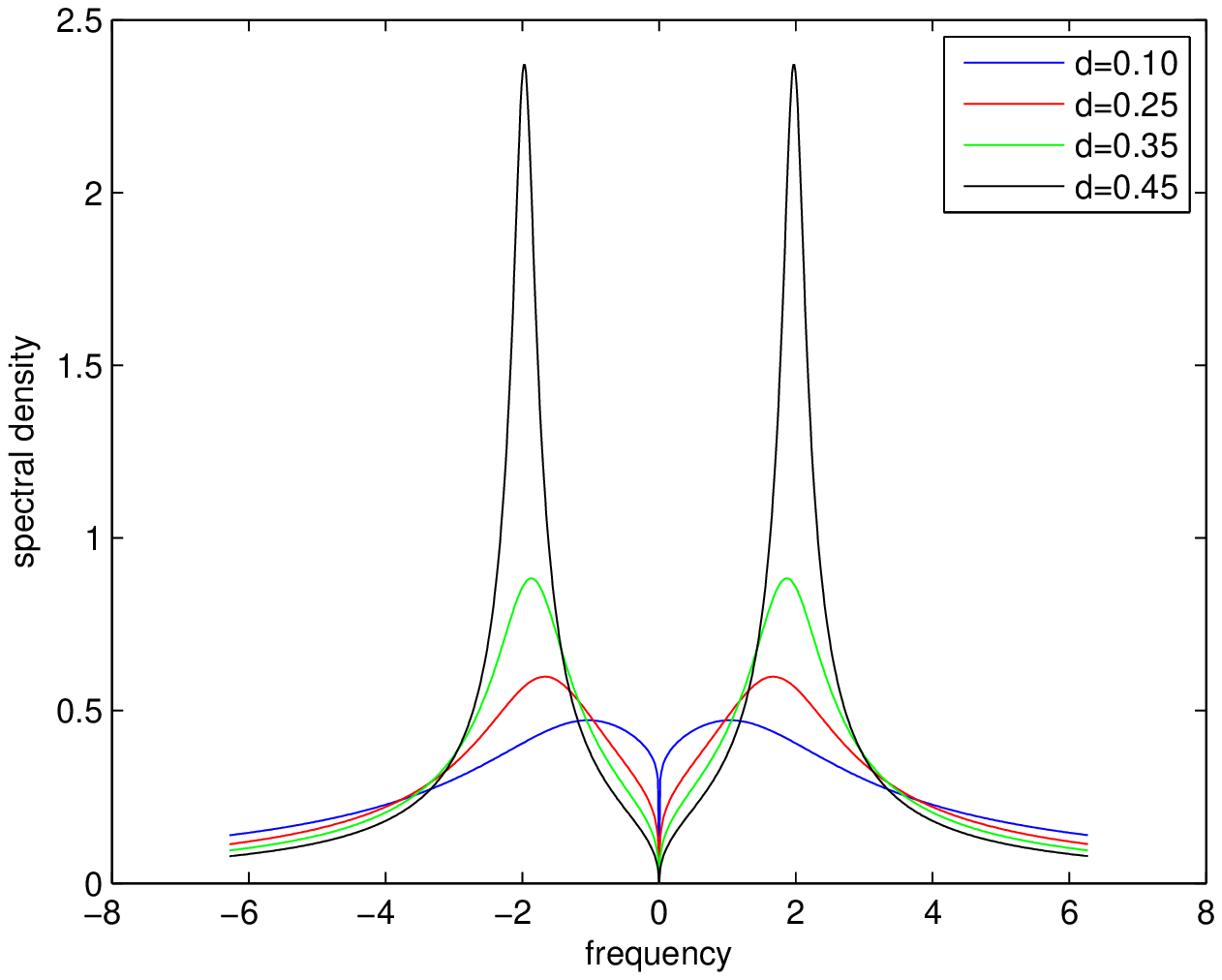} \ \includegraphics[height=2in,width=2.5in]{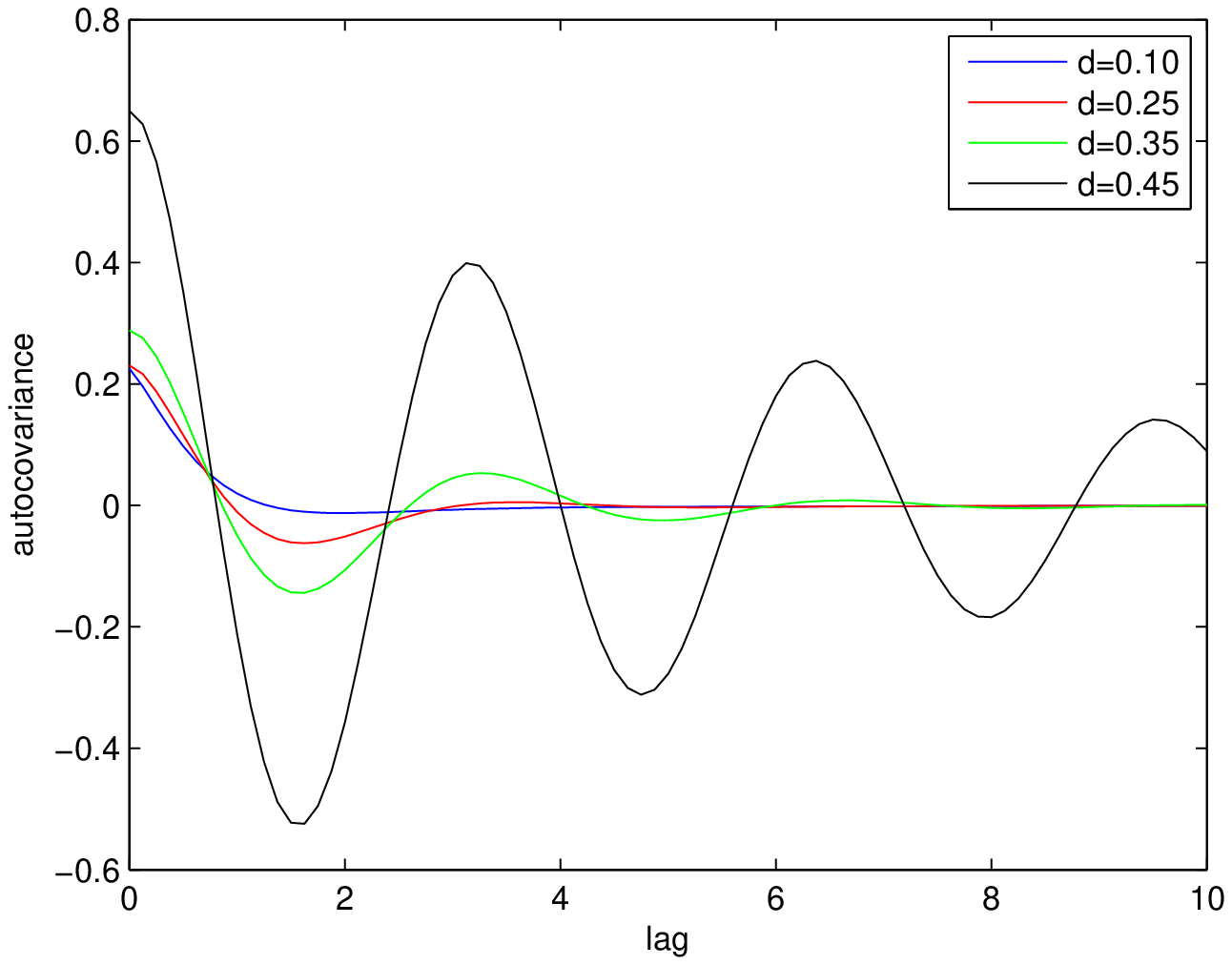}
	\caption{\label{f:fGLE_correl_structure} fGLE, correlation structure up to a multiplicative constant ($d =0.10,0.25,0.35,0.45$, $\zeta = 2$, $ m =1$). Left plot: spectral density. Right plot: autocovariance function.
	}
	\end{center}
\end{figure}

\begin{figure}
	\begin{center}
	\includegraphics[height=2in,width=2.5in]{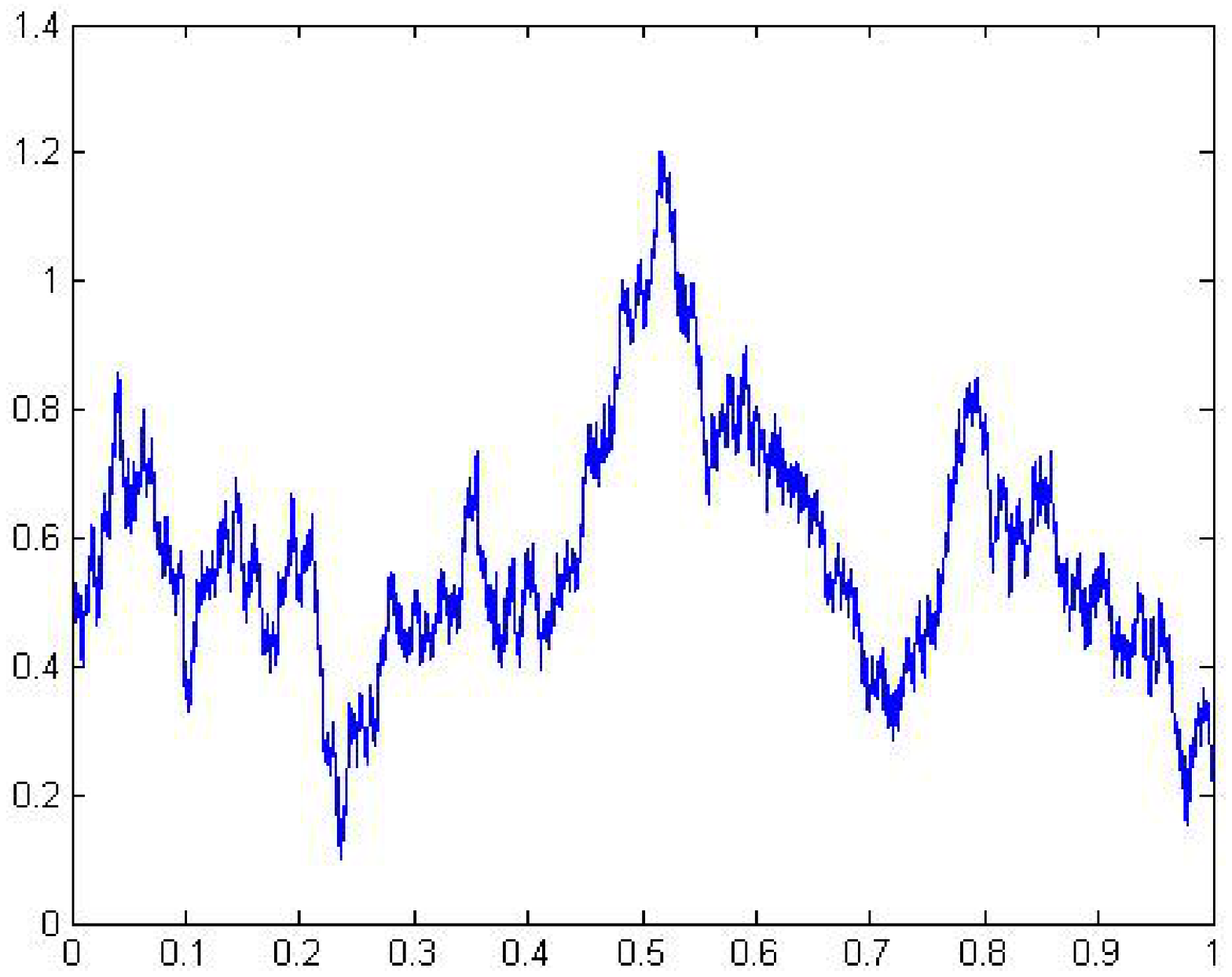} \ \includegraphics[height=2in,width=2.5in]{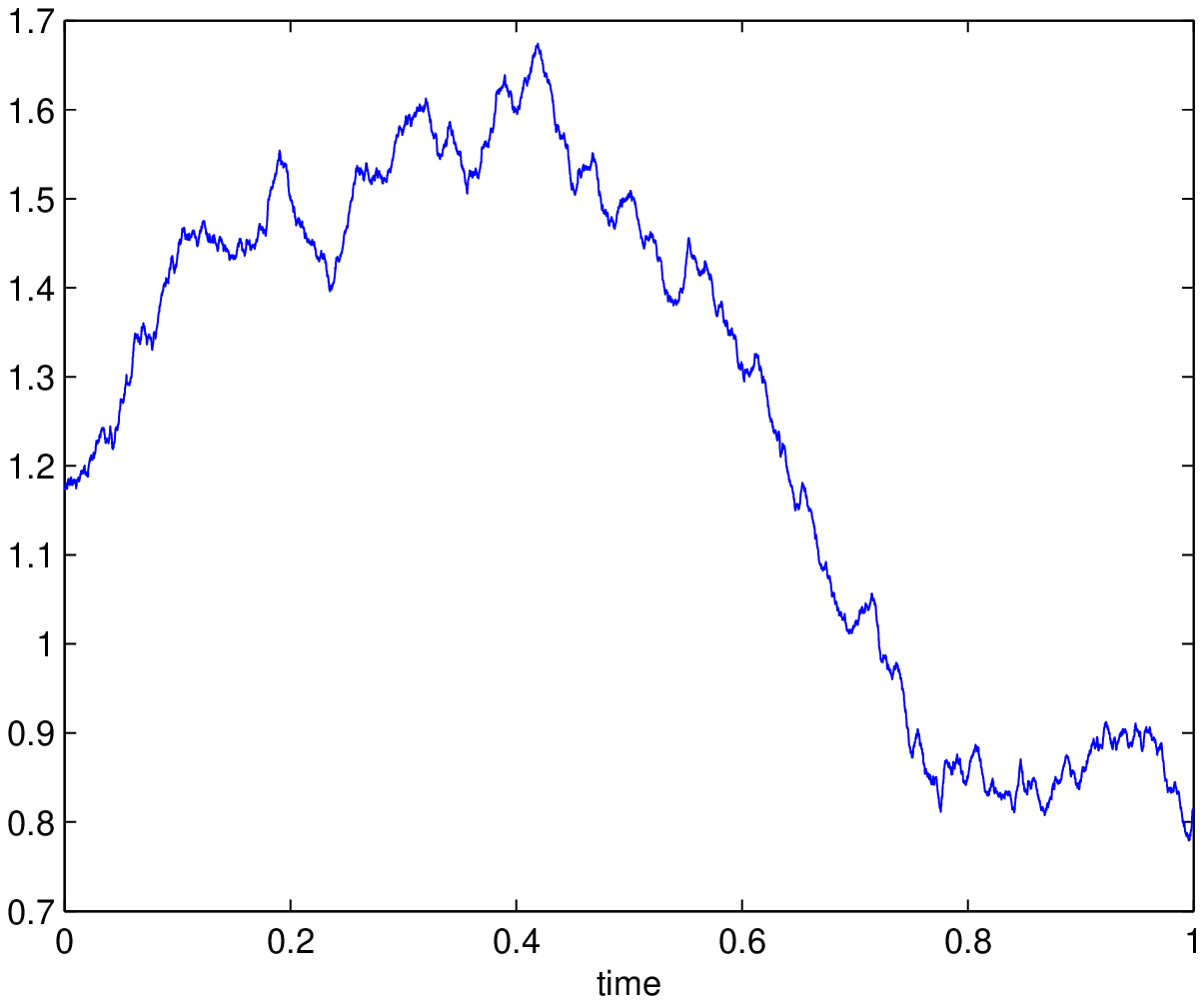}
\includegraphics[height=2in,width=2.5in]{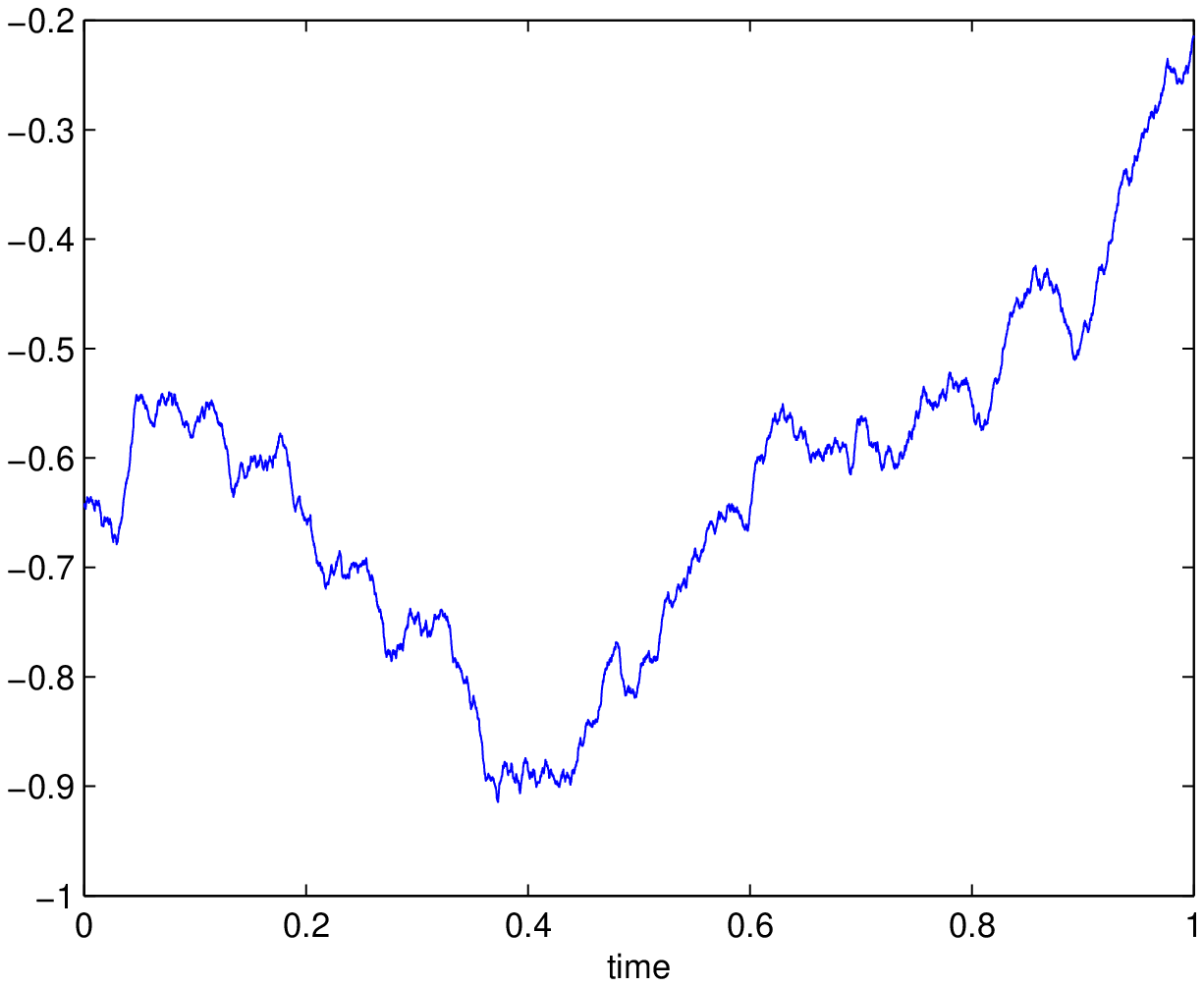}
	\caption{\label{f:OU_fracOU_samplepaths} Sample paths. Left: OU process, parameter value $\zeta =1$. Right: fOU process, parameter values $\zeta = 1$, $\delta =0.25$. Bottom: fGLE process, parameter values $\zeta = 2$, $m = 1$, $\delta = - 0.25$ (for more on the parametrization, see Section \ref{s:Fourier_integ}).
	}
	\end{center}
\end{figure}

\begin{figure}
	\begin{center}
	\includegraphics[height=2in,width=2.5in]{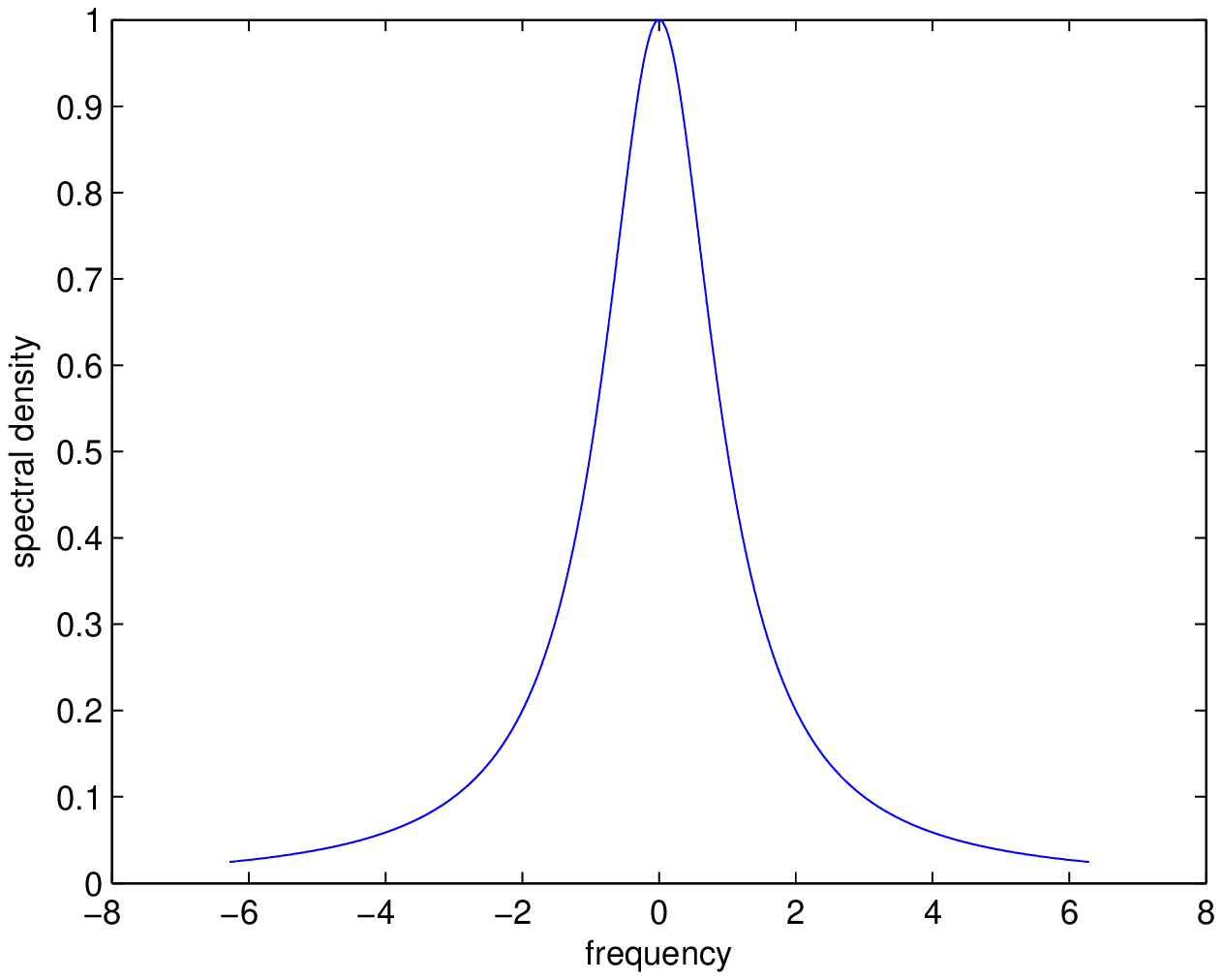} \ \includegraphics[height=2in,width=2.5in]{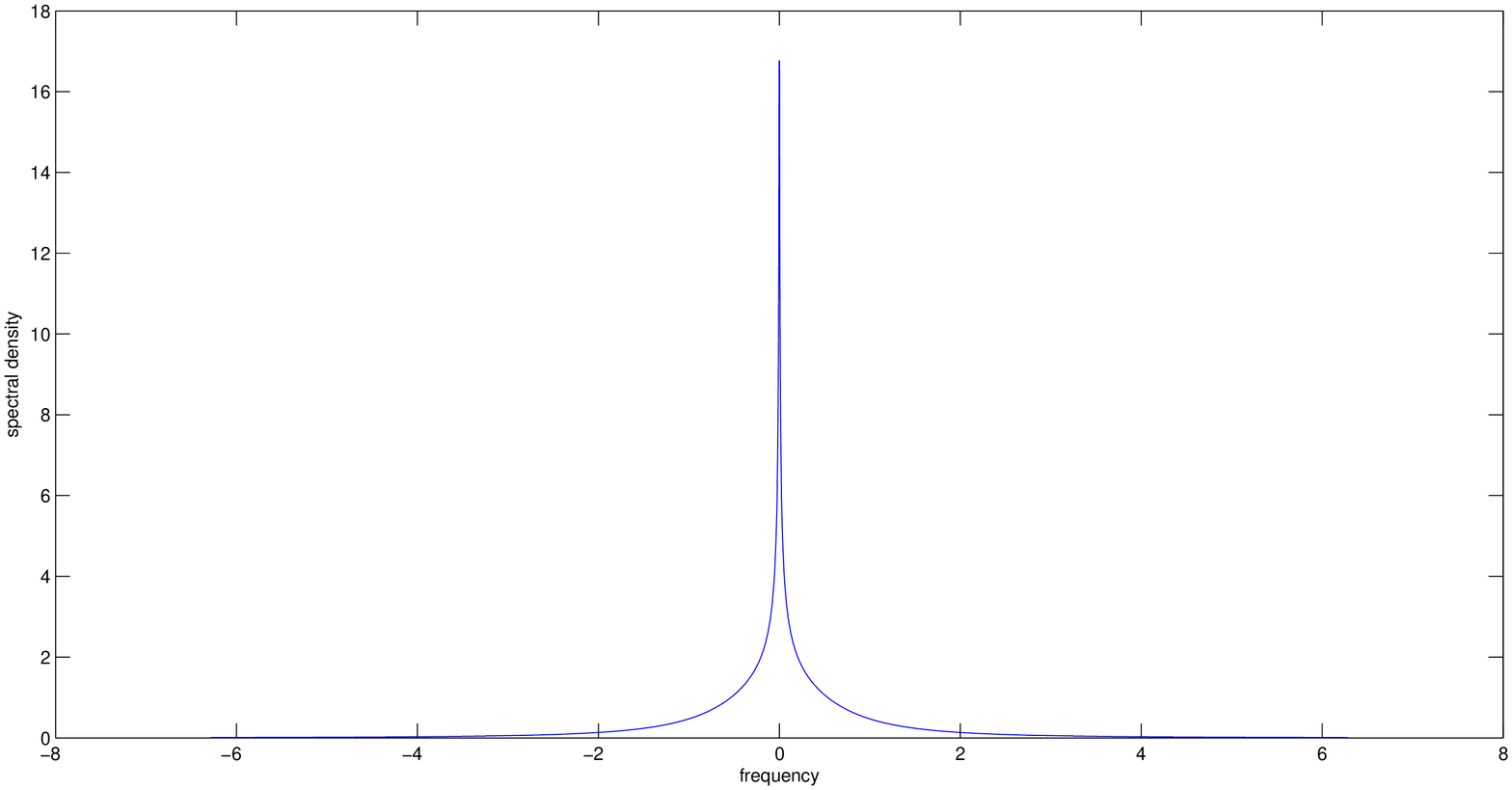}
	\caption{\label{f:fOU_OU_specdens} Spectral densities. Left: OU process, parameter value $\zeta =1$. Right: fOU process, parameter values $\zeta = 1$, $\delta =0.25$. (for more on the parametrization, see Section \ref{s:Fourier_integ}).
	}
	\end{center}
\end{figure}

\subsection{The simulation algorithm, wavelet decompositions and the choice of a discretization filter $g_j$}\label{s:choice}

In this subsection, we will present the wavelet-based simulation method and summarize previous results relevant to the current situation.  In addition, we will present the computational details of our proposed implementation for the simulation procedure.  Before proceeding to details, we will present sufficient conditions for the underlying wavelet decomposition to hold and give intuitive explanations for these assumptions.

The wavelet-based simulation procedure can be described as follows.
{\small
\begin{center}
%\begin{table}[ht]\label{table:schematic_description_wavealg}
%\caption{Nonlinear Model Results} % title of Table
\centering % used for centering table
\begin{tabular}{|l|}
\hline \label{table:schematic_description_wavealg}
\\
Schematic description of wavelet-based simulation   \\ [0.5ex]
\\
\hline
\\
\textbf{Initialization}: generate via an exact method (e.g., Circulant Matrix
Embedding) \\ one first discretization sequence $V_0$;\\
\\
\textbf{Scale/step $j \in \bbN$}: given a discretization sequence $V_j$, obtain the next discrete\\
discretization $V_{j+1}$ at scale/step $j+1$ via the relation\\
%\\
%$V_{j+1} = u_j \ast \uparrow_2 V_j + v_j \ast \uparrow_2 \varepsilon_j$,\\
%\\
\\
$V_{j+1} = u_j \ast \uparrow_2 V_j + v_j \ast \uparrow_2 \varepsilon_j$,\\
\\
where $\{\varepsilon_j\}$ is a noise sequence, and $(\uparrow_2 V_j)_k = V_{j,k/2}1_{\{\textnormal{even $k$}\}}$
is the upsampling \\
by factor 2 operator.
\\ [1ex]
\hline
\end{tabular}
\end{center}
}
%\end{table}
\noindent The wavelet filters $u_j$ and $v_j$ are defined in the Fourier domain as
\begin{equation}\label{e:uj,vj}
\widehat{u}_j(x) = \frac{\widehat{g}_{j+1}(x)}{\widehat{g}_{j}(2x)}\widehat{u}(x), \quad \widehat{v}_j(x) = \widehat{g}_{j+1}(x)\widehat{v}(x),
\end{equation}
where $u$, $v$ are the conjugate mirror filters (CMF) of an underlying wavelet multiresolution analysis (MRA). For the theoretical purposes of this paper, we use a Meyer MRA, due to the compact Fourier domain support of both the scaling and wavelet functions (see Mallat \cite{mallat:1999}, chapter 7).

The algorithm works because at step $j+1$, the FWT annihilates the correlation structure $\widehat{g}_{j}$ of the discretization $\{V_{j,k}\}$ at scale $j$ (see expression \eqref{e:V_j_Wold}) and replaces it with a new, pre-chosen correlation structure $\widehat{g}_{j+1}$. The properties of the CMFs $u$ and $v$ play an important role, which we can explain heuristically. By taking Fourier transforms on both sides of \eqref{e:V_j_Wold},
$\widehat{V}_{j}(x) = \widehat{g}_{j}(x) \widehat{\xi}(x)$, where $\xi$ is a white noise sequence and $\widehat{\xi}(x)$ is its Fourier transform. Then
$$
\widehat{V}_{j+1}(x) = \frac{\widehat{g}_{j+1}(x)}{g_{j}(2x)} \widehat{u}(x) \widehat{V}_{j}(2x) + \widehat{g}_{j+1}(x) \widehat{v}(x) \widehat{\varepsilon}(2x),
$$
where $\varepsilon$ is another independent white noise sequence. Therefore, $ \widehat{V}_{j+1}(x) = \widehat{g}_{j+1}(x) (\widehat{u}(x) \widehat{\xi}(2x) + \widehat{v}(x) \widehat{\varepsilon}(2x))$. Since $u$ and $v$ are CMFs, then the term $\widehat{u}(x) \widehat{\xi}(2x) + \widehat{v}(x) \widehat{\varepsilon}(2x)$ is itself distributed as white noise. Thus, in law, $\widehat{V}_{j+1}(x) = \widehat{g}_{j+1}(x) \widehat{\eta}(x)$,
where $\eta$ is another white noise sequence; see also Pipiras \cite{pipiras:2005} for details.

In order to construct a wavelet-based decomposition of the velocity process $V$, the following technical assumptions must be met (see Didier and Pipiras \cite{didier:pipiras:2008}).\\

\noindent {\sc Assumption 1:}
$$
    \widehat g^{-1} \in L^2_{loc}(\bbR).
$$

\medskip
\noindent {\sc Assumption 2:} for any $j\in\bbZ$,
$$
    G_j,\ G_j^{-1}\in L^2_{loc}(\bbR).
$$

\medskip
\noindent {\sc Assumption 3:} for any $j_0\in\bbZ$,
$$
\max_{p = -1,1} \max_{k = 0,1,2} \sup_{j\geq j_0} \sup_{|x|\leq 4\pi/3} \left|
\frac{\partial^k (G_j(x))^p}{\partial x^k} \right| < \infty.
$$

\medskip
\noindent {\sc Assumption 4:} for large $|x|$,
$$
\left| \frac{\partial^k \widehat g(x)}{\partial x^k}\right| \leq
\frac{\mbox{const}}{|x|^{k+1}}, \quad k=0,1,2.
$$

\medskip
\noindent {\sc Assumption 5:} for large $j$,
$$
   | G_j(0) - 1 | \leq \textnormal{const} \hspace{1mm} 2^{-j}.\\
$$

\medskip

Assumptions 2, 3 and 5 pertain the choice of the discretization filter $\widehat{g}_{j}$. Assumption 3 is the rigorous version of the intuitive expression \eqref{e:GJ_approx_1} under a Meyer MRA. Assumption 5 states that the discrete and continuous time filters are arbitrarily close at frequency zero and is used to establish the main convergence result of this paper (Theorem \ref{t:VJ_conv_unif}). The imposition of assumptions on the inverses $\widehat{g}^{-1}$, $G^{-1}_j$ is related to the use of a biorthogonal wavelet basis (see Didier and Pipiras \cite{didier:pipiras:2008} for more details). The local square integrability in Assumptions 1 and 2 reflects the $L^2$ nature of the wavelet analysis coupled with the compact Fourier domain support of the underlying wavelet basis. Assumption 1 pertains only to $\widehat{g}$ and is clearly satisfied by \eqref{e:fracOU_spec}, \eqref{p:V_gle_spec_repres}. In order to clarify and explicitly incorporate processes with fractional spectral densities, Assumption 4 is replaced in Appendix \ref{s:adaptation_proofs} by the slightly modified Assumption $4^{'}$, which is used to establish the decay of the underlying modified wavelet basis via integration by parts.

As explained in Section \ref{s:intro}, the discretization of continuous time processes will in general induce discrete time processes whose spectral densities have intricate analytical forms. So, a natural question is whether one could construct converging discretizations by means of a simple method that applies to a wider class of stochastic processes, in particular, the fGLE and the fOU. Indeed, this can be done by developing (non-causal) discretization filters $\widehat{g}_{j}$ in three elementary steps:
\begin{itemize}
\item [($t.1$)] extend the truncated function $\widehat{g}(x)1_{[-\pi,\pi)}$ periodically to $\bbR$  implying that $\widehat{g}_j$ stems directly from $\widehat{g}$;
\item [($t.2$)] modify the resulting function with rescaling terms (e.g., $2^{j}$) so that relation (\ref{e:GJ_approx_1}) holds;
\item [($t.3$)] smooth the resulting function at $-\pi$ and $\pi$  to speed up the time domain decay of the filter in theory and computational practice.
\end{itemize}
Step $(t.1)$ replaces exact discretization filters such as those for the AR(1). Step $(t.2)$ is necessary to ensure the coherence between the discretization and the limiting process. Step $(t.3)$ minimizes the border effect, the consequence of the truncation of infinite-length filters which is always present in computational practice.  The method described in steps $(t.1) - (t.3)$ produces a sequence of discrete time processes $V_{j} = g_{j}\ast \varepsilon$ which is \textit{approximate}, since $g_j$ is selected for analytical and computational convenience. However, the choice of $g_j$ must still be such that the sequences $V_j$ convergence in an appropriate sense to the continuous time process.

%
%In this section, we investigate other discretizations. The purpose is to shed more light on the role of the discretizations, since in general the spectral filters of the natural discretizations of a continuous time process are not easy to handle, mathematically speaking. Moreover, in the case of fBm-driven subdiffusions as in Section \ref{s:choice}, we propose a discretization filter for the homogenenous component $|x|^\delta$ which is of a different nature, i.e., it involves a truncation of the function $|x|^\delta$ at $\pm \pi$. It is thus useful to study the analogous procedure in the well-known case of the Ornstein-Uhlenbeck process.
%
%We look into two specifications/procedures:
%\begin{itemize}
%\item $\widehat{g}_{j}(x) = c(j) 2^{-j}(1 - e^{- \sqrt{a^2 2^{-2j} + x^2}})$;
%\item some truncation of $g(x)$ at $-\pi$ and $\pi$.
%\end{itemize}
%[speak of the first possibility. Analyze the induced filters, with some plots.]

We first look at the truncated procedure and filters obtained via steps $(t.1)$ and $(t.2)$. The idea of constructing $\widehat{g}_j(x)$ by truncating $\widehat{g}(x)$ at $\pm \pi$ and extending the function periodically to $\bbR$ has the obvious advantage of being a simple method for obtaining a discretization. Table \ref{t:truncated_filters} contains the proposed truncated filters generated based on components of the spectral filters for the OU, fOU and fGLE processes. The subscript $p$ denotes periodic extension beyond the domain $[-\pi,\pi)$. In all cases, we choose to deal with purely real filters for analytical simplicity.

\begin{table}[h]
\caption{\label{t:truncated_filters} Spectral density components and associated truncated filters (up to a constant)}
%{\centerline {($\gamma_0 = a^2 + b^2$, $\gamma_1 = 2bm$, $\gamma_2 = m^2$)}}
\centering
\begin{tabular}{ccccc}\hline
process & $|\widehat{g}(x)|^2$ & & &$\widehat{g}_{j}(x)$ \\ \hline
OU, fOU & $|\zeta^2 + x^2|^{-1}$  & & & $(\zeta^2 + 2^{2j}x^2)^{-1/2}_p$ \\
fOU, fGLE & $|x|^{-2\delta}$ & & & $2^{-j\delta}|x|^{-\delta}_p$ \\
fGLE & $|\gamma_0 + \gamma_1 |x|^{\beta}+ \gamma_2 |x|^{2\beta}|^{-1}$  & & & $ (\gamma_0 + \gamma_1 2^{j \beta} |x|^{\beta}+ \gamma_2 2^{2j \beta}|x|^{2\beta})^{-1/2}_p $\\
\end{tabular}
\end{table}
The associated functions \eqref{e:GJ_approx_1} are
\begin{equation}\label{e:G_{j,a}}
G_{j,\zeta}(x) = \frac{(\zeta^2 + 2^{2j}x^2)^{-1/2}_p}{(\zeta^2 + (2^j x)^2)^{-1/2}},
\end{equation}
\begin{equation}\label{e:G_{j,d}}
G_{j,d}(x) = \frac{2^{jd} |x|^d_{p}}{|2^j x|^d} = \frac{|x|^d_p}{|x|^d},
\end{equation}
\begin{equation}\label{e:G_{j,gamma,d}}
G_{j,\gamma,d}(x) = \frac{(\gamma_0 + \gamma_1 2^{j\beta}|x|^{\beta}+ \gamma_2 2^{2j\beta}|x|^{2\beta})^{-1/2}_{p}}{(\gamma_0 + \gamma_1 |2^{j} x|^{\beta} + \gamma_2 |2^{j}x|^{2 \beta})^{-1/2}},
\end{equation}
for $x \in \bbR$.

Even though it will typically create filters with discontinuities at $\pm \pi$ in the derivatives of the function $\widehat{g}_j(x)$, we can show that the filters exhibit good theoretical decay under assumptions. Mathematically, this requires replacing Assumption 3 with a set of weaker conditions. This is precisely stated in the next proposition.
\begin{proposition}\label{p:decay_AWD_filters}
Let $u$, $v$ be Meyer CMFs such that $\widehat{u}(x) \in C^{2}[-\pi,\pi)$, and let $\widehat{u}_j$, $\widehat{v}_j$ be wavelet-based filters as in \eqref{e:uj,vj}. Then, under Assumptions $3'$ and $4'$ (see Appendix \ref{s:adaptation_proofs}),
\begin{equation}\label{e:u,v_decay}
|u_{j,k}|, |v_{j,k}| \leq O\Big(\frac{1}{|k|^{2}} \Big).
\end{equation}
\end{proposition}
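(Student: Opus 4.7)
The plan is to obtain the $O(1/|k|^2)$ decay by integrating by parts twice in the inverse Fourier representations
\[
u_{j,k} = \frac{1}{2\pi}\int_{-\pi}^{\pi}\widehat{u}_j(x)\,e^{ikx}\,dx,\qquad v_{j,k} = \frac{1}{2\pi}\int_{-\pi}^{\pi}\widehat{v}_j(x)\,e^{ikx}\,dx,
\]
exploiting the $2\pi$-periodicity of $\widehat{u}_j,\widehat{v}_j$ (they are Fourier transforms of discrete filters) so that the endpoint contributions at $\pm\pi$ cancel automatically. The substance of the proof then reduces to a regularity analysis of the multipliers $\widehat{u}_j = (\widehat g_{j+1}/\widehat g_j(2\cdot))\,\widehat u$ and $\widehat{v}_j = \widehat g_{j+1}\,\widehat v$ on one fundamental period.

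First I would argue continuity. All truncated filters in Table~\ref{t:truncated_filters} arise as periodic extensions of even, real functions on $[-\pi,\pi)$, so their one-sided limits at $\pm\pi$ match and $\widehat g_{j+1}$ and $\widehat g_j(2\cdot)$ are continuous on $\bbR$; together with the smoothness of the Meyer CMFs $\widehat u,\widehat v$ and the non-vanishing guaranteed by Assumption~2 (so the quotient in $\widehat u_j$ is well-defined), this gives continuity of $\widehat u_j$ and $\widehat v_j$ and kills the first boundary term after integration by parts. Next, I would locate the finite set of possibly singular points in $[-\pi,\pi]$ where the periodic extensions can produce first-derivative jumps: namely $\{\pm\pi\}$ for $\widehat v_j$, and $\{\pm\pi/2,\pm\pi\}$ for $\widehat u_j$ (the points where either $\widehat g_{j+1}(x)$ or $\widehat g_j(2x)$ crosses the boundary of the fundamental domain). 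Off these points, applying the product, quotient and chain rules and invoking Assumption~$3'$ (uniform control of $G_j$ and its first two derivatives on the compact support-region of $\widehat u,\widehat v$) together with Assumption~$4'$ (polynomial decay of $\widehat g$ and its derivatives), one obtains a uniform $L^\infty$ bound on $\widehat u_j''$ and $\widehat v_j''$ on the complement of the singular set.

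The second integration by parts then yields schematically
\[
u_{j,k} = -\frac{1}{2\pi k^2}\int_{-\pi}^{\pi}\widehat u_j''(x)\,e^{ikx}\,dx + \frac{1}{2\pi k^2}\sum_{x_\ell}[\widehat u_j'](x_\ell)\,e^{ikx_\ell},
\]
and similarly for $v_{j,k}$, where $[\,\cdot\,](x_\ell)$ denotes the jump of the first derivative at the singular point $x_\ell$. The continuous-part integral is $O(1/k^2)$ by the uniform bound just discussed, and the jump sum has only finitely many terms with bounded magnitude, so it is also $O(1/k^2)$, giving the claim. The main obstacle I expect is the bookkeeping at the singular points: one has to check that the one-sided derivative limits truly exist (not blow up), that their jumps are bounded uniformly in $j$, and that the product-rule expressions for $\widehat u_j''$ and $\widehat v_j''$ on the interior genuinely inherit the decay of $\widehat g$ through Assumption~$4'$. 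This last step is precisely where the modified $4'$, adapted to fractional spectral densities, must do the work that Assumption~$4$ alone would not.
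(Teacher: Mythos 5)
Your argument is essentially the paper's own proof: both rest on integrating by parts twice, isolating the finitely many kink points of the periodically extended filters ($\pm\pi/2$ for $\widehat u_j$ via $G_j(2\cdot)$, $\pm\pi$ for $\widehat v_j$), and using Assumption $3'$ (existence and uniform boundedness of the one-sided first and second derivatives of $G_j^{\pm1}$) together with Assumption $4'$ and the compact, origin-avoiding support of the Meyer CMFs to bound the interior second-derivative integral and the finitely many first-derivative jump terms by $O(1/k^2)$. The paper phrases this as splitting the integral into subintervals and applying its one-sided integration-by-parts lemma on each piece, whereas you keep a single integral plus explicit jump terms, but these are the same computation.
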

Furthermore, we can show that the approximate discretizations $\{V_{J,k}\}$ thus generated still converge exponentially fast to the correct limiting process. The two main mathematical results of this paper are found next. They state that the discretizations converge to the velocity and position processes a.s.\ uniformly over compact intervals (see also Remark \ref{r:smoothed_filters_ensure_conv} below).
\begin{theorem}\label{t:VJ_conv_unif}
Let $\{V(t)\}_{t \geq 0}$ be the velocity process for the fGLE \eqref{p:V_gle_spec_repres} or the fOU \eqref{e:fracOU_spec}, and let $\{V_{J,k}\}$ be discretization sequences generated according to the truncated filters in Table \ref{t:truncated_filters}. Then
$$
\sup_{t \in K}|2^{J/2}V_{J,\lfloor 2^{J}t \rfloor} - V(t)| \leq A 2^{-J \nu} \quad \textnormal{a.s.}
$$
for some $\nu \in (0,1)$, where $K$ is compact interval and $A$ is a random variable that does not depend on $J$.
\end{theorem}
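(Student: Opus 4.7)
The plan is to adapt the Meyer MRA framework of Didier and Pipiras \cite{didier:pipiras:2008} to the truncated filters of Table \ref{t:truncated_filters}, now working under the weaker Assumptions $3'$ and $4'$ of Appendix \ref{s:adaptation_proofs}. The starting point is the a.s.\ wavelet decomposition
\begin{equation*}
V(t) = \sum_{k \in \bbZ} \alpha_{J,k}\, 2^{J/2}\phi(2^J t - k) + \sum_{j \geq J}\sum_{k \in \bbZ}\beta_{j,k}\, 2^{j/2}\psi(2^j t - k),
\end{equation*}
valid on compact intervals under Assumptions 1, 2 and $4'$, where $\phi, \psi$ are the Meyer scaling and wavelet functions and $\alpha_{J,k}, \beta_{j,k}$ are the corresponding Gaussian coefficients. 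Using the partition-of-unity property $\sum_k \phi(2^J t - k) = 1$, I would split the target error as $2^{J/2} V_{J,\lfloor 2^J t\rfloor} - V(t) = T_1(t) + T_2(t) - T_3(t)$, with
\begin{equation*}
T_1(t) = 2^{J/2}\sum_{k}(V_{J,k} - \alpha_{J,k})\phi(2^J t - k), \quad T_2(t) = 2^{J/2}\sum_{k}(V_{J,\lfloor 2^J t\rfloor} - V_{J,k})\phi(2^J t - k),
\end{equation*}
\begin{equation*}
T_3(t) = \sum_{j \geq J}\sum_k \beta_{j,k}\, 2^{j/2}\psi(2^j t - k).
\end{equation*}

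Each term is then bounded in $L^2$ and promoted to a uniform a.s.\ estimate on $K$. For the detail tail $T_3$, Assumption $4'$ yields $\mathrm{Var}(\beta_{j,k}) = O(2^{-j(1+2\nu)})$ for some $\nu \in (0,1)$, and the Schwartz-type decay of $\psi$ localizes the inner sum. For the coefficient-mismatch term $T_1$, the variance of $V_{J,k} - \alpha_{J,k}$ reduces to an $L^2$ distance between the truncated filter $\widehat g_J$ and the continuous-time filter $\widehat g$ at scale $J$; this is controlled near the origin by $|G_J(0) - 1| = O(2^{-J})$ from Assumption 5 and away from the origin by the boundedness provided by Assumption $3'$, producing an $O(2^{-J})$ bound. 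For the wavelet-crime term $T_2$, the local oscillations $V_{J,k} - V_{J,\lfloor 2^J t\rfloor}$ have variance controlled by the spectral density associated with $\widehat g_J$, yielding the same rate.

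The final step is passing from $L^2$ to a.s.\ uniform convergence on compact $K$, which I would handle via Gaussian maximal inequalities and Dudley--Fernique chaining applied to each $T_i$ restricted to $K$, combined with a Borel--Cantelli argument along $J$; the geometric rate $2^{-J\nu}$ from each piece ensures summability. The main obstacle will be the coefficient-mismatch term $T_1$. Because $\widehat g_J$ is only a periodic truncation of $\widehat g$, the derivatives of $\widehat g_J$ have genuine discontinuities at $\pm \pi$ and the classical smoothness in Assumption 3 fails. Proposition \ref{p:decay_AWD_filters} is therefore essential: it guarantees the $1/k^2$ decay of the iterative wavelet filters $u_j, v_j$, without which the cascade defining $V_{J,k}$ in the algorithm of Section \ref{s:choice} could accumulate error across scales and destroy the $2^{-J\nu}$ rate. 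With that decay in hand, the cascade remains stable and the $T_1$ contribution stays at the same order as $T_2$ and $T_3$, so the three bounds combine to give the theorem.
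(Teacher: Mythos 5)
Your overall strategy --- decompose the error into a detail tail, a coefficient-mismatch term, and a ``wavelet-crime'' term, bound each in $L^2$, and upgrade to a.s.\ uniform bounds by chaining and Borel--Cantelli --- is a legitimate route in principle, but it is not the paper's route, and as written it has a genuine gap at the term $T_1$. First, the theorem couples $\{V_{J,k}\}$ (generated recursively from white noise) with the continuous-time path $V(t)$ on the same probability space; the quantity $V_{J,k}-\alpha_{J,k}$ is meaningless until that coupling is specified, and the rate you extract depends entirely on it. Second, even under the natural coupling through a common Brownian measure, your claimed $O(2^{-J})$ bound for $T_1$ does not follow from the assumptions you invoke: Assumption 5 controls $G_J$ only at the single frequency $x=0$, and Assumption $3'$ gives only uniform boundedness of $G_J^{\pm1}$, not closeness to $1$; neither yields a quantitative $L^2$ estimate of $\widehat g_J(2^{-J}\cdot)-\widehat g(\cdot)$ over the whole line at the rate you need. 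Relatedly, your closing claim that Proposition \ref{p:decay_AWD_filters} is needed to prevent ``error accumulation across the cascade'' misreads the mechanism: because $u$ and $v$ are conjugate mirror filters, each step of the recursion reproduces exactly the law $V_{j+1}=g_{j+1}\ast\eta$ for a fresh white noise $\eta$, so there is no accumulating stochastic error; the $1/k^2$ decay of $u_j,v_j$ matters for the truncation/border effect in computational practice, not for the convergence theorem.

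The paper sidesteps $T_1$ altogether by working in the modified biorthogonal MRA of Didier and Pipiras, in which the $V_{J,k}$ are, by construction and in law, the scale-$J$ analysis coefficients $\langle V,\Phi_{J}(\cdot-2^{-J}k)\rangle$ of $V$ itself. The proof of Theorem \ref{t:VJ_conv_unif} then reduces to verifying the hypotheses of Proposition 2 of Didier and Pipiras \cite{didier:pipiras:2008}: (i) the pathwise H\"{o}lder condition $|V(t)-V(s)|\le A|t-s|^{\nu}$ on compacts, established in Lemma \ref{l:X=int_V} via the Cram\'{e}r--Leadbetter criterion $\int_0^\infty x^{2\nu}\log(1+x)|\widehat g(x)|^2\,dx<\infty$; (ii) the quadratic decay $|2^{-J/2}\Phi_J(2^{-J}u)|\le C/(1+u^2)$ with $C$ uniform in $J$, obtained in Lemma \ref{l:lemma1_in_AWD1} by piecewise integration by parts under Assumptions $3'$ and $4'$ (this is precisely where the kinks of the truncated filters at $\pm\pi$ must be handled); and (iii) Assumptions 2 and 5, checked in Lemma \ref{l:Gj_satisfies_A2_A3'_A5}. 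With these ingredients the estimate is a purely pathwise localization argument: $2^{J/2}\langle V,\Phi_{J,k}\rangle$ differs from $V(2^{-J}k)G_J(0)$ by $O(A2^{-J\nu})$ thanks to (i) and (ii), and $|G_J(0)-1|=O(2^{-J})$ by (iii); no Gaussian chaining or Borel--Cantelli is required. If you wish to salvage your decomposition, the essential missing piece is the identification of $V_{J,k}$ with the biorthogonal analysis coefficients (which makes $T_1$ vanish identically) or, failing that, a genuine quantitative estimate for $T_1$ that the stated assumptions do not supply.
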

\begin{corollary}\label{c:X_Riemann}
Let $\{V(t)\}_{t \geq 0}$ be the velocity process associated with the fGLE \eqref{p:V_gle_spec_repres} or the fOU \eqref{e:fracOU_spec}, and let $T \in \bbN$. Consider the subsequences $\{V_{J,k}\}_{k = 0,\hdots,2^{J}T-1}$ of the sequences in Theorem \ref{t:VJ_conv_unif}, where each term can be interpreted as
$V_{J,k} = V_{J,\lfloor2^J \frac{k}{2^J}\rfloor}$. Then
\begin{equation}\label{e:sup|X - sum V|}
\sup_{t \in [0,T]}\Big|X(t) - \sum^{\lfloor (2^J-1)t \rfloor}_{k=0}2^{J/2}V_{J,k}\frac{1}{2^J}\Big| \leq A' 2^{-J \nu}, \quad \nu \in (0,1),
\end{equation}
where the random variable $A'$ does not depend on $J$.
\end{corollary}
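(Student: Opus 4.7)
The plan is to express the position process as the integral of the velocity, $X(t)=\int_{0}^{t}V(s)\,ds$, and to compare the right-hand side of \eqref{e:sup|X - sum V|} with a Riemann sum of $V$ on the dyadic grid. Concretely, I would insert and subtract $\sum_{k=0}^{\lfloor(2^J-1)t\rfloor}V(k/2^J)\,2^{-J}$ and split the error as
$$
X(t)-\sum_{k=0}^{\lfloor(2^J-1)t\rfloor}2^{J/2}V_{J,k}\,\tfrac{1}{2^J}
\;=\;R_J(t)\;+\;E_J(t),
$$
where $R_J(t)=X(t)-\sum_{k=0}^{\lfloor(2^J-1)t\rfloor}V(k/2^J)\,2^{-J}$ is the deterministic Riemann-sum error for the continuous integrand $V$, and $E_J(t)=\sum_{k=0}^{\lfloor(2^J-1)t\rfloor}\bigl(V(k/2^J)-2^{J/2}V_{J,k}\bigr)\,2^{-J}$ is the discretization error.

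For $E_J(t)$, Theorem \ref{t:VJ_conv_unif} applied with $t_k:=k/2^J\in[0,T]$ (so that $\lfloor 2^J t_k\rfloor=k$) yields $|V(k/2^J)-2^{J/2}V_{J,k}|\leq A\,2^{-J\nu}$ uniformly in $k$, for the same random constant $A$ and rate $\nu\in(0,1)$. Since there are at most $2^JT+1$ terms each weighted by $2^{-J}$, we obtain $\sup_{t\in[0,T]}|E_J(t)|\leq A(T+2^{-J})\,2^{-J\nu}$, which already has the desired form. For $R_J(t)$, I would invoke the sample path regularity of $V$: the spectral filters \eqref{e:fracOU_spec} and \eqref{p:V_gle_spec_repres} ensure that the stationary Gaussian velocity is locally Hölder continuous of every order strictly less than some $\gamma>0$ on $[0,T]$ a.s. (via Kolmogorov's criterion, using that the covariance is smooth enough away from the origin and the behavior at zero is controlled by the fractional parameter $d$). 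Hence the usual left-endpoint Riemann-sum error on an interval of length at most $T$ with mesh $2^{-J}$ is dominated by $B\,2^{-J\gamma}$ a.s., with $B$ independent of $J$, uniformly in $t\in[0,T]$; a small boundary contribution of size $O(2^{-J})\cdot\sup_{s\in[0,T]}|V(s)|$ absorbs the discrepancy between $\lfloor(2^J-1)t\rfloor$ and $\lfloor 2^Jt\rfloor-1$.

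Combining the two bounds and setting $\nu':=\min(\nu,\gamma)\in(0,1)$ gives $\sup_{t\in[0,T]}|X(t)-\sum_{k}2^{J/2}V_{J,k}\,2^{-J}|\leq A'\,2^{-J\nu'}$ for a random $A'$ independent of $J$, which is exactly \eqref{e:sup|X - sum V|} after relabelling $\nu\leftarrow\nu'$. The main obstacle I anticipate is the regularity step for $R_J(t)$: one has to translate the Fourier-domain description \eqref{e:fracOU_spec}/\eqref{p:V_gle_spec_repres} into a quantitative modulus of continuity for $V$ that is uniform over $[0,T]$ a.s., and verify that the resulting exponent $\gamma$ can always be chosen in $(0,1)$ across the full parameter range considered (in particular for the antipersistent fGLE and for $d$ close to $\pm 1/2$ in the fOU). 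Once $\gamma>0$ is secured, the remaining estimates are routine, and one can keep the random constant $A'$ deterministic-in-$J$ by taking $A'=A(T+1)+B+\sup_{s\in[0,T]}|V(s)|$.
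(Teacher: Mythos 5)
Your proposal is correct and follows essentially the same route as the paper: decompose the error into the discretization part (bounded via Theorem \ref{t:VJ_conv_unif} summed over at most $2^JT$ grid points), the Riemann-sum error for the true velocity (bounded via a.s.\ H\"{o}lder continuity of $V$), and a residual boundary term controlled by $\sup_{s\in[0,T]}|V(s)|\cdot O(2^{-J})$. The regularity step you flag as the main obstacle is exactly what the paper supplies in Lemma \ref{l:X=int_V}, where the H\"{o}lder condition \eqref{e:Holder} is verified via the Cram\'{e}r--Leadbetter integral criterion on the spectral filters \eqref{e:fracOU_spec} and \eqref{p:V_gle_spec_repres}.
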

The proofs of Proposition \ref{p:decay_AWD_filters}, Theorem \ref{t:VJ_conv_unif} and Corollary \ref{c:X_Riemann} can be found in Appendix \ref{s:adaptation_proofs}.
\begin{remark}
Other approximation schemes with better computational error estimates can be attempted in Corollary \ref{c:X_Riemann}, such as a trapezoidal rule. We opted for an ordinary Riemann sum for simplicity and mathematical convenience.
\end{remark}
\begin{example}
In order to contrast the exact and approximate discretization filters we revisit the well-known case of the OU process.
The rationale behind the proposed $\widehat{g}_j(x)$ is simple. Multiplying the argument of a function by a number less than 1 has the geometric effect of ``stretching" the original function away from the origin. Therefore, we have that $\widehat{g}_{j}(2^{-j} x) =  (\zeta^2 + x^2)^{-1/2}$ (pointwise) for any fixed $x \in \bbR$ and large enough $j$, as desired. In other words, the discrete time process whose spectral density is $\widehat{g}_{j}(x)$ is, indeed, a discretization to the OU process in the sense explained in Section \ref{s:intro}. The difference between the truncated filter and the exact filters \eqref{e:GJ_OU_exact} can be seen in the Fourier domain in Figure \ref{f:OU_compare_filter}, left plot. As expected, even though $\widehat{g}_j(x)$ is continuous at $\pm \pi$, the same is not true for its first derivative. The left derivative of $\widehat{g}_j$ at $\pi$ is
$$
\widehat{g}^{'}_{j,-}(\pi) = - \frac{2^{2j}\pi}{(\zeta^2 + 2^{2j}\pi^2)^{3/2}}.
$$
Moreover, since $\widehat{g}_{j}(x) = 1/\sqrt{\zeta^2 + 2^{2j}(2 \pi - x)^2}$ for $x \in [\pi,2 \pi)$, then the right derivative of $\widehat{g}_j$ at $\pi$ is $\widehat{g}^{'}_{j,+}(\pi) = - \widehat{g}^{'}_{j,-}(\pi)$  by periodic extension.\\
\end{example}

%More specifically, so not to deal with the imaginary part of $\widehat{g}(x)$, we can consider the discretization obtained from the square root of $|\widehat{g}(x)|^2 = (a^2 + x^2)^{-1}$, i.e.,
%\begin{equation}\label{e:trunc_OU_real}
%\widehat{g}_{j}(x)= \frac{1}{\sqrt{a^2 + 2^{2j}x^2}}, \quad x \in [-\pi,\pi),
%\end{equation}
%extended periodically to $\bbR$.

%Next, in order to generalize this for the fGLE, let
%$$
%\gamma_0 = a^2 + b^2, \quad \gamma_1 = 2bm, \quad \gamma_2 = m^2.
%$$
%In analogy to \eqref{e:trunc_OU_real}, the natural discrete filter for the bounded component of (\ref{e:modified_g}) would be
%\begin{equation}\label{e:g_j_gamma}
%\widehat{g}_{j,\gamma,d}(x) = \frac{1}{(\gamma_0 + \gamma_1 2^{j \beta} |x|^{\beta}+ \gamma_2 2^{2j \beta}|x|^{2\beta})^{1/2}_p}, \quad x \in \bbR,
%\end{equation}
%where the subscript $p$ denotes a function defined over $[-\pi,\pi)$ and periodically extended to $\bbR$. For the fractional component of (\ref{e:modified_g}),
%\begin{equation} \label{e:gj_frac}
%\widehat{g}_{j,d}(x) = 2^{jd} |x|^d_p, \quad x \in \bbR.
%\end{equation}

%Note that, since we do have $g_j \in C^{0}[-\pi,\pi)$, we obtain time domain filters with reasonable decay [show plots].

\begin{figure}
	\begin{center}
	\includegraphics[height=2in,width=2.5in]{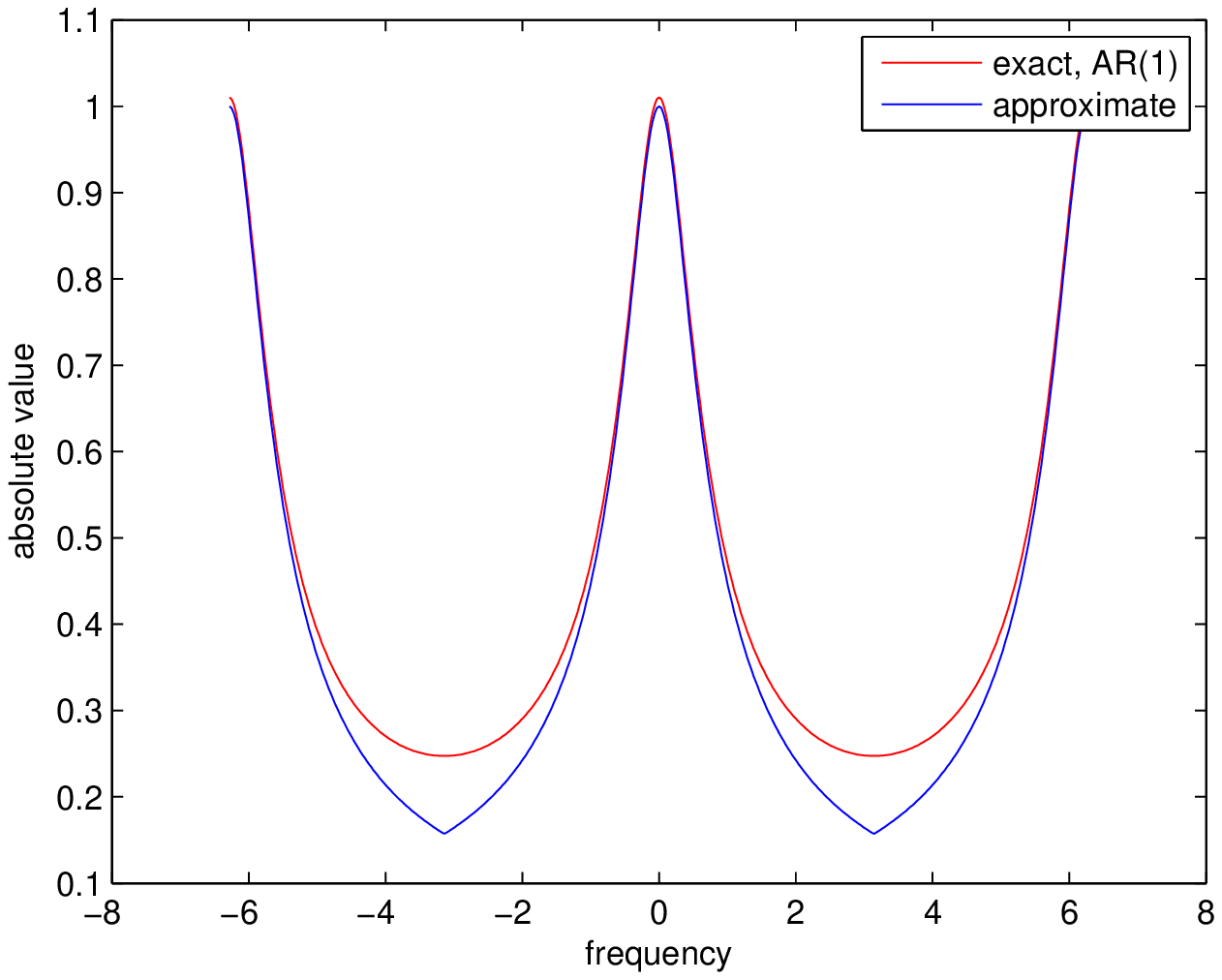} \ \includegraphics[height=2in,width=2.5in]{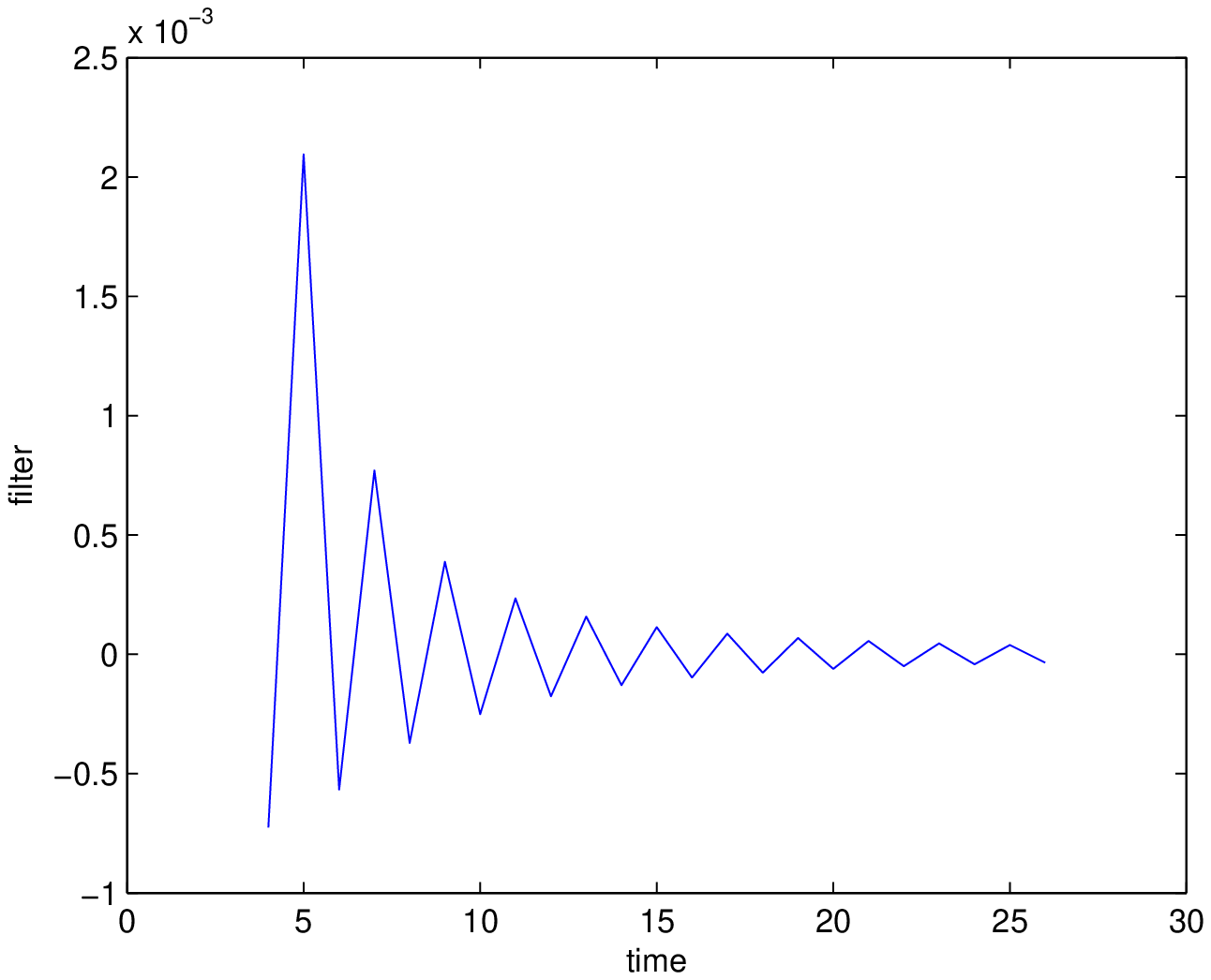}
	\caption{\label{f:OU_compare_filter} OU process, filters at $j=1$ ($\zeta = 1$, no zero moments).  Left: exact versus non-smoothed high-pass filter, spectral domain. Right: non-smoothed high-pass filter, time domain.
	}
	\end{center}
\end{figure}

Although the theoretical guarantees on the asymptotic decay of the truncated filters are good (Proposition \ref{p:decay_AWD_filters}), the discontinuities at $\pm \pi$ in the derivatives of the function $\widehat{g}_j(x)$ will create ripples in the time domain expression of the associated filters. This can be seen for the OU process filters in Figure \ref{f:OU_compare_filter}, right plot. In computational practice, one could ask whether it is possible to eliminate or at least to minimize the ripples observed in the plots. Moreover, and related to this, filters with fast time domain decay are desirable because in practice, infinite length filters used in convolution-based algorithms must be truncated (the border effect). One classical way to speed up the time domain decay of the designed filter is to generate a smoother Fourier domain expression. Step ($t.3$) consists of addressing this issue by smoothing the kinks of $\widehat{g}_j(x)$ at $\{(2k+1) \pi \}_{k \in \bbZ}$. There is clearly more than one way to do this. Table \ref{t:truncated_filters-smoothed} contains the proposed smoothed filters. Denote the discretization filters in the first through third rows of Table \ref{t:truncated_filters} by $\widehat{g}_{j,\zeta}(x)$, $\widehat{g}_{j,d}(x)$, $\widehat{g}_{j,\gamma,d}(x)$, respectively. Thus, the proposed filters for the OU, fOU and fGLE processes are $\widehat{g}_{j,\zeta}(x)$, $\widehat{g}_{j,\zeta}(x)\widehat{g}_{j,d}(x)$ and $\widehat{g}_{j,\zeta}(x)\widehat{g}_{j,\gamma,d}(x)$, respectively. In contrast with exact discretization filters (see Figure \ref{f:timedomain_truncpatch}), they generally give rise to non-causal low- and high-pass filters $u_j$ and $v_j$. Causality can be a desirable property for certain applications (for instance, see Didier and Pipiras \cite{didier:pipiras:2010}, section 6.2), but its absence has no effect on our simulation method.

\begin{table}[h]
\caption{\label{t:truncated_filters-smoothed} Spectral density components and associated truncated-smoothed filters (up to a constant)}
%{\centerline{ ($\upsilon > 0$ is such that $\frac{\pi^2}{\upsilon} > \frac{\zeta^2}{2}$)}}
\centering
\begin{tabular}{lccc}\hline
process & $|\widehat{g}(x)|^2$ & $\widehat{g}_{j}(x)$ & $x^*(j) > 0$\\ \hline
OU, fOU & $|\zeta^2 + x^2|^{-1}$  & $\frac{\exp({\frac{\upsilon}{2 \pi^2}(\frac{x^{*}(j)x}{\pi})^2})}{(\zeta^2 + 2^{2j}(\frac{x^{*}(j)x}{\pi})^2)^{1/2}_p}$ & $\sqrt{\frac{\pi^2}{\upsilon} - \frac{\zeta^2}{2^{2j}}}$, \\
& & & where $\upsilon > 0$, \\
& & & $\frac{\pi^2}{\upsilon} > \frac{\zeta^2}{2}$ \\
 & & & \\
fOU, fGLE & $|x|^{-2\delta}$ & $\frac{ \exp(\frac{\textnormal{sign}(\delta)}{2 \pi^2} ( \frac{x^{*}(j)x}{\pi})^2 )}{2^{j \delta} | \frac{x^{*}(j)x}{\pi}|^{\delta}_p }$ & $\pi \sqrt{|\delta|}$\\
 & & & \\
fGLE & $|\gamma_0 + \gamma_1 |x|^{\beta}+ \gamma_2 |x|^{2\beta}|^{-1}$  & $\frac{\exp(\frac{\beta}{2 \pi^2} x^2 )}{(\gamma_0 + \gamma_1 2^{j\beta}|x|^{\beta}+ \gamma_2 2^{2j\beta}|x|^{2\beta})^{1/2}_{p}}$ & - \\
\end{tabular}
\end{table}

\begin{remark}\label{r:smoothed_filters_ensure_conv}
The difference between the non-smoothed, truncated filters and the smoothed ones rests solely on a multiplicative exponential factor and rescaling of the argument. Based on the calculations for the former (see Lemma \ref{l:Gj_satisfies_A2_A3'_A5}), it is thus easy to show that the smoothed filters in Table \ref{t:truncated_filters-smoothed} satisfy the conditions for convergence used in Theorem \ref{t:VJ_conv_unif} and Corollary \ref{c:X_Riemann}.
\end{remark}

\begin{example}
Multiplying the original truncated discretization filter for the OU process by a term of the form $e^{\frac{\upsilon x^2}{2 \pi^2}}$ creates in the former two global minima, symmetrically to the left and to the right of the origin, since the rapid growth of the exponential term eventually prevails over the decay to zero of the inverse polynomial. The parameter $\upsilon$ should only ensure that $x^*(j) \in \bbR$. By relocating these minima $x^*(j)$ to $\pm \pi$ via rescaling, we obtain periodic functions $ \widehat{g}_j \in C^{\infty}[-\pi,\pi)$, which decay in the time domain faster than any inverse polynomial. As a consequence, the high-pass filter $\widehat{v}_{j}(x) = \widehat{g}_{j}(x)\widehat{v}(x)$ is also quite smooth. The fact that some neighborhood $B(0,\delta)$ is not contained in $\textnormal{supp}(\widehat{v})$ implies that the near-spikes of $\widehat{g}_{j}(x)$ at $x = 0$, a potential source of ripples in the time domain filter, disappear in the inverse Fourier transform of $\widehat{v}_j$. Similar remarks apply to the low-pass filter $\widehat{u}_{j}(x)$. The graph of the smoothed filter for $j=2$ can be be seen in Figure \ref{f:spec_density_truncpatch}. The domain taken is $[-\pi,3\pi)$ to illustrate that the periodic extension to $\bbR$ is quite smooth.  Figure \ref{f:timedomain_truncpatch} displays the resulting time domain filter after numerical integration; moreover, for any fixed $x$, $\widehat{g}_{j}(2^{-j}x) \rightarrow \widehat{g}(x)$ as $j \rightarrow \infty$.
\end{example}

\begin{figure}
	\begin{center}
	\includegraphics[height=2in,width=2.5in]{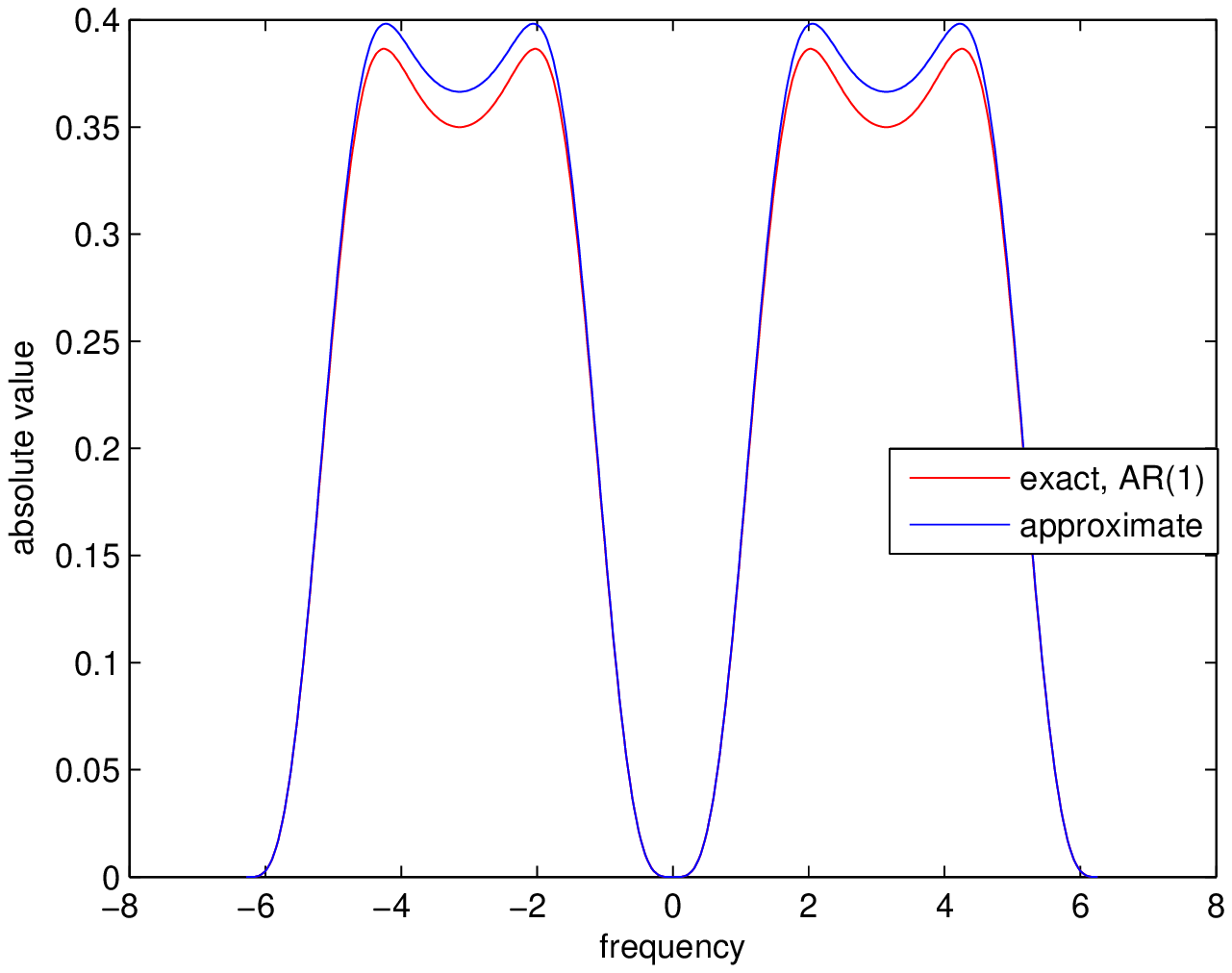} \ \includegraphics[height=2in,width=2.5in]{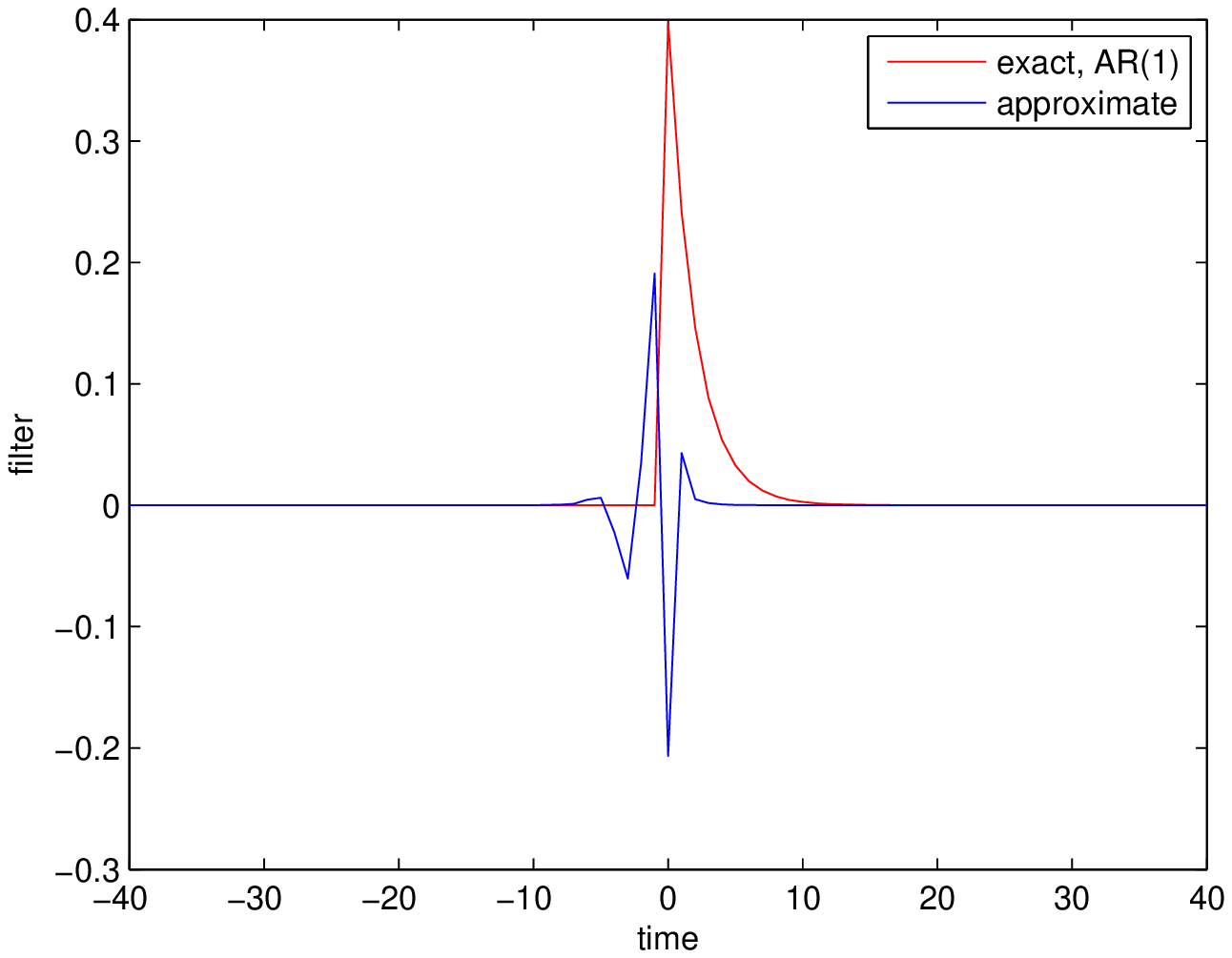}
	\caption{\label{f:timedomain_truncpatch} OU process, exact versus approximate high-pass filters at $j=1$ ($\zeta = 1$, 4 zero moments). Left: spectral domain. Right: time domain.
	}
	\end{center}
\end{figure}

\begin{figure}[t]
\begin{center}
	\includegraphics[height=2in,width=2.5in]{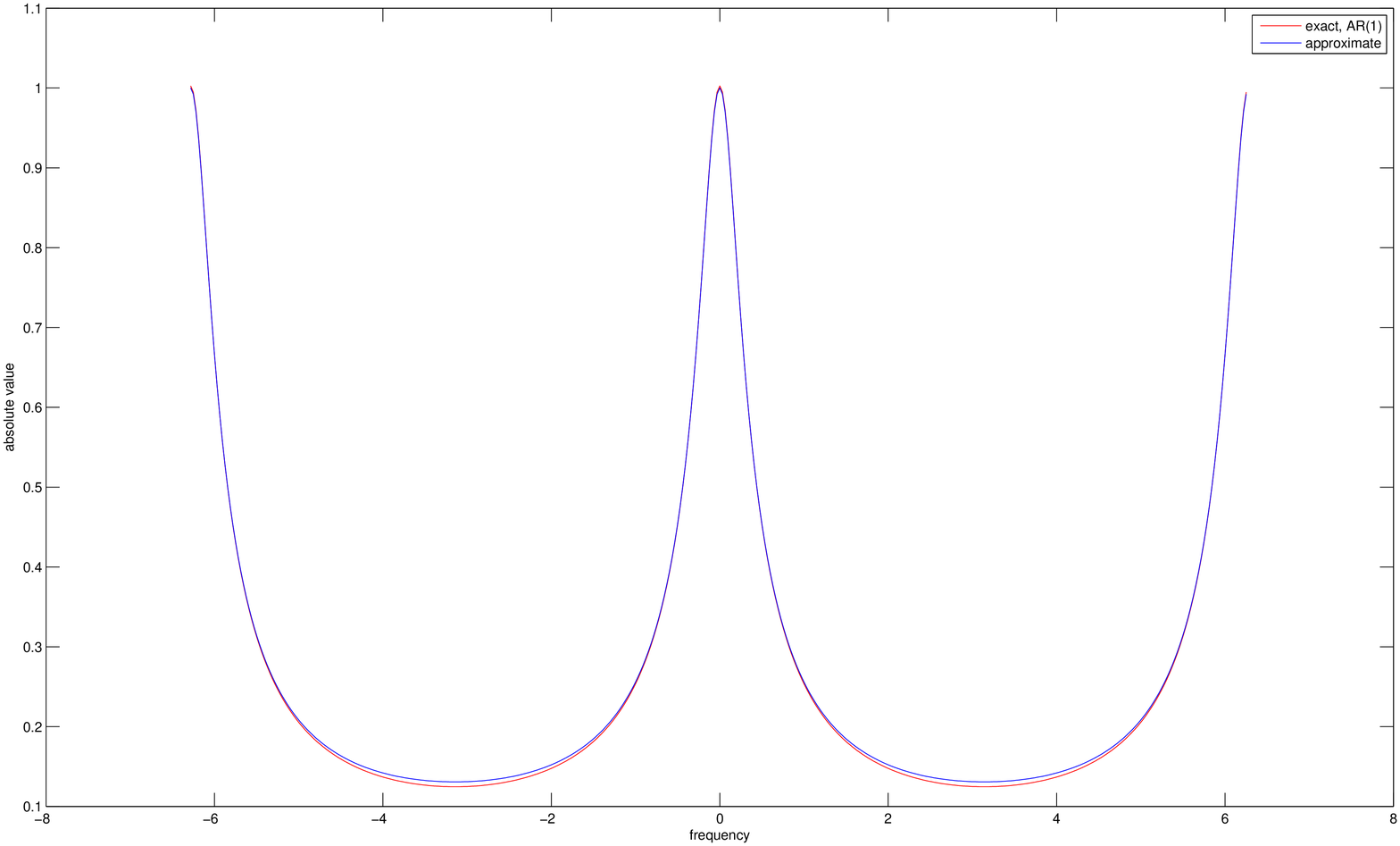} \ \includegraphics[height=2in,width=2.5in]{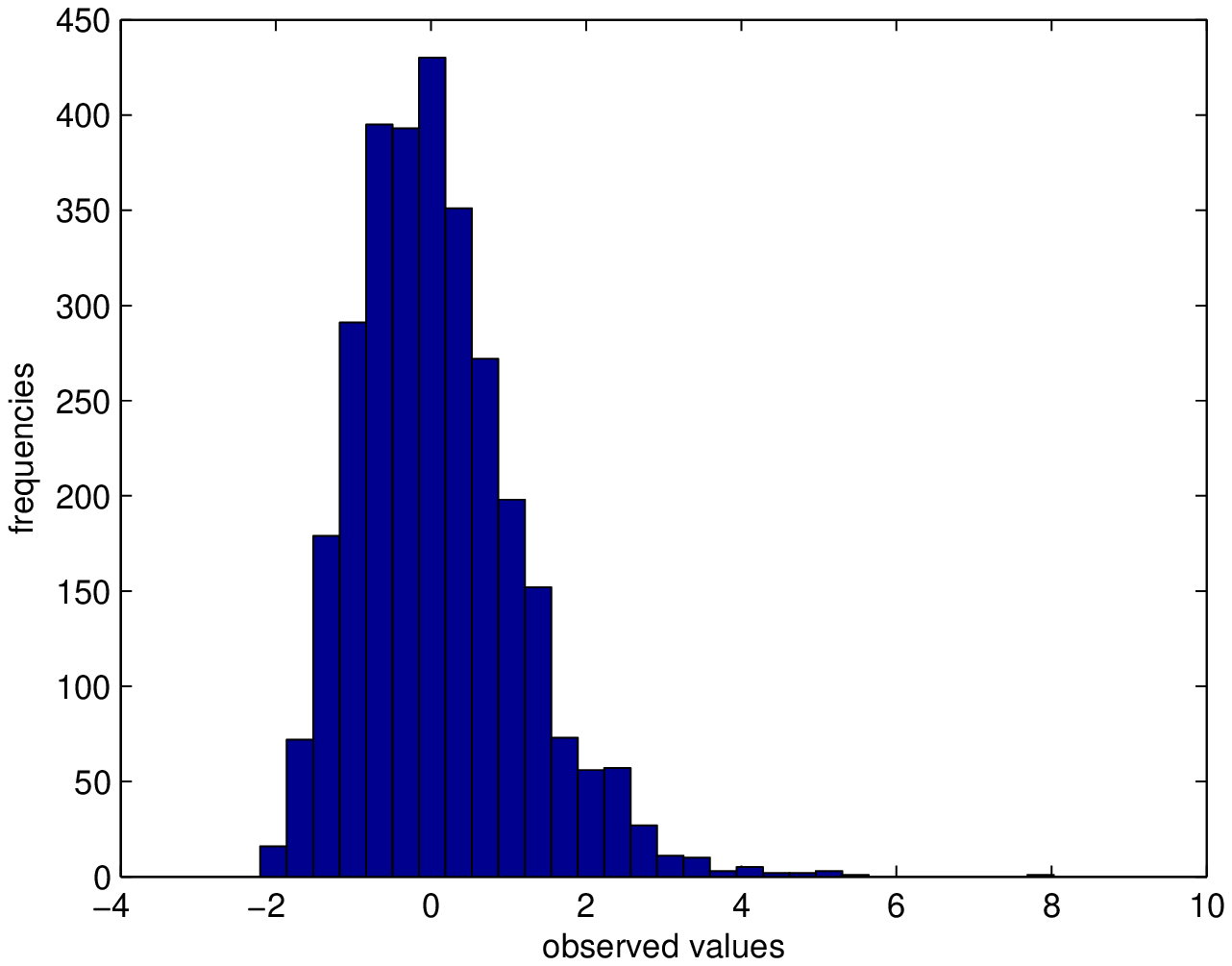}
	\caption{\label{f:spec_density_truncpatch} Left plot: OU process, spectral filter $\widehat{g}_j$ for $j=2$ ($\zeta = 1$). Right plot:
fOU process ($\zeta = 1$, $d = 0.25$, time series length $2^9$) histogram for $3000$ Monte Carlo runs of the statistic $\widetilde{T}$ as generated via the wavelet method.}
	\end{center}
\end{figure}

%
%The proposed filter for the fOU process is
%$$
%\widehat{g}_{j}(x) = \widehat{g}_{j,a}(x)\widehat{g}_{j,d}(x),
%$$
%where $\widehat{g}_{j,a}(x)$ is as in (\ref{e:gj_patched}) and
%\begin{equation}\label{e:fracOU_patched}
%\widehat{g}_{j,d}(x) = 2^{-jd} \Big| \frac{x^{*}_d(j)x}{\pi}\Big|^{-d} \exp\Big(\frac{\textnormal{sign}(d)}{2 \pi^2} \Big( \frac{x^{*}(j)x}{\pi}\Big)^2 \Big), \quad x^{*}_d(j)= \pi \sqrt{d}, \quad x \in [-\pi,\pi).
%\end{equation}

In the case of the fOU, the resulting filters $\widehat{g}_j$, $\widehat{v}_j$ are displayed in Figure \ref{f:fracOU_spec_density_truncpatch} for $d = 0.25$ and $j = 2$. The left and right plots illustrate the effect of multiplication by the high pass wavelet filter $\widehat{v}$. For $d > 0$, the filters $\widehat{g}_{j,d}$ and $\widehat{g}_{j}$ shows a singularity at the origin. This makes the role of the compact support high pass wavelet filter $\widehat{v}$ quite important for the numerical stability of the computation of the associated time domain filters. Analogously, the two plots in Figure \ref{f:fracOU_timedomain_truncpatch_low} illustrate the effect of the low pass wavelet filter $\widehat{u}$. Even before multiplying by $\widehat{u}$, there is no singularity at $x = 0$, because the singularities in the individual terms $\widehat{g}_{j}$, $\widehat{g}_{j-1}$ cancel out in the ratio $\widehat{g}_{j}(x)/\widehat{g}_{j-1}(2x)$. The resulting low and high pass filters $u_j$, $v_j$ in the time domain are shown in Figure \ref{f:fracOU_timedomain_truncpatch}. A generated sample path can be viewed in Figure \ref{f:OU_fracOU_samplepaths}. The remarkable persistence in the sample path, especially in comparison with that of the OU process, is due to the long range dependence of the fOU process when $d > 0$.

\begin{figure}
	\begin{center}
	\includegraphics[height=2in,width=2.5in]{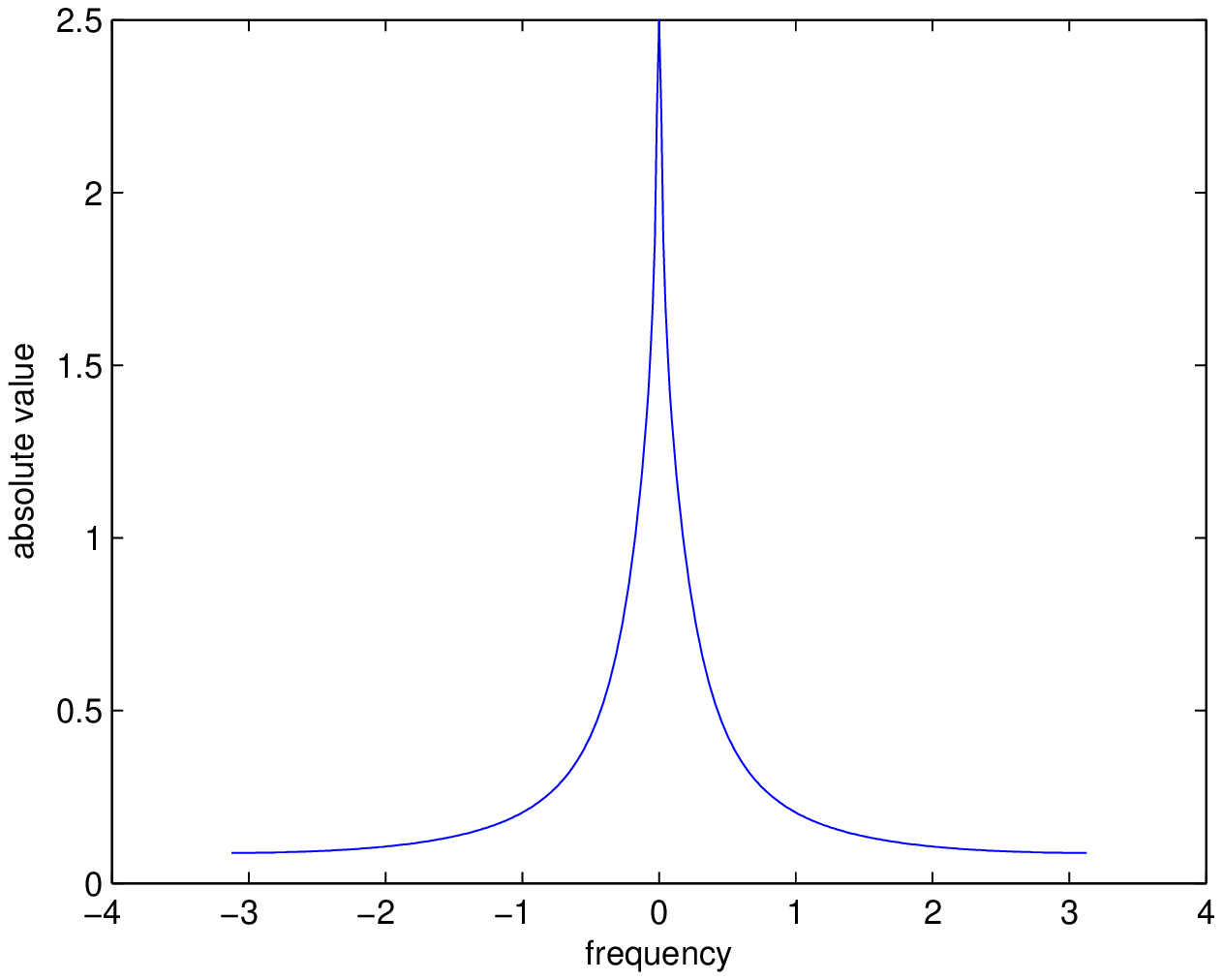} \ \includegraphics[height=2in,width=2.5in]{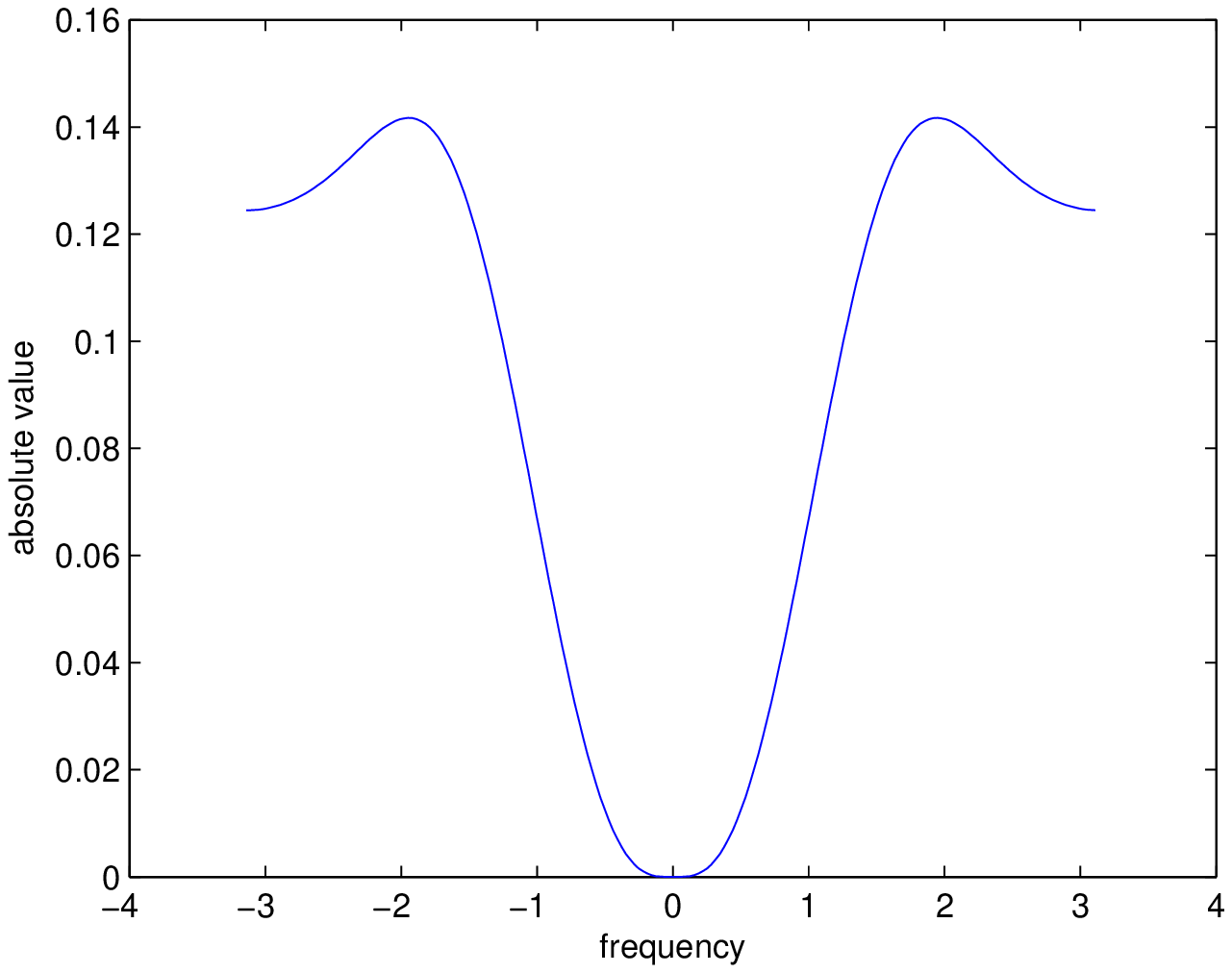}
	\caption{\label{f:fracOU_spec_density_truncpatch} fOU process, high pass spectral filters at $j=2$ ($\zeta = 1$, $d =0.25$). Left: $\widehat{g}_j$. Right: $\widehat{v}_j$ (4 zero moments).
	}
	\end{center}
\end{figure}

\begin{figure}
	\begin{center}
	\includegraphics[height=2in,width=2.5in]{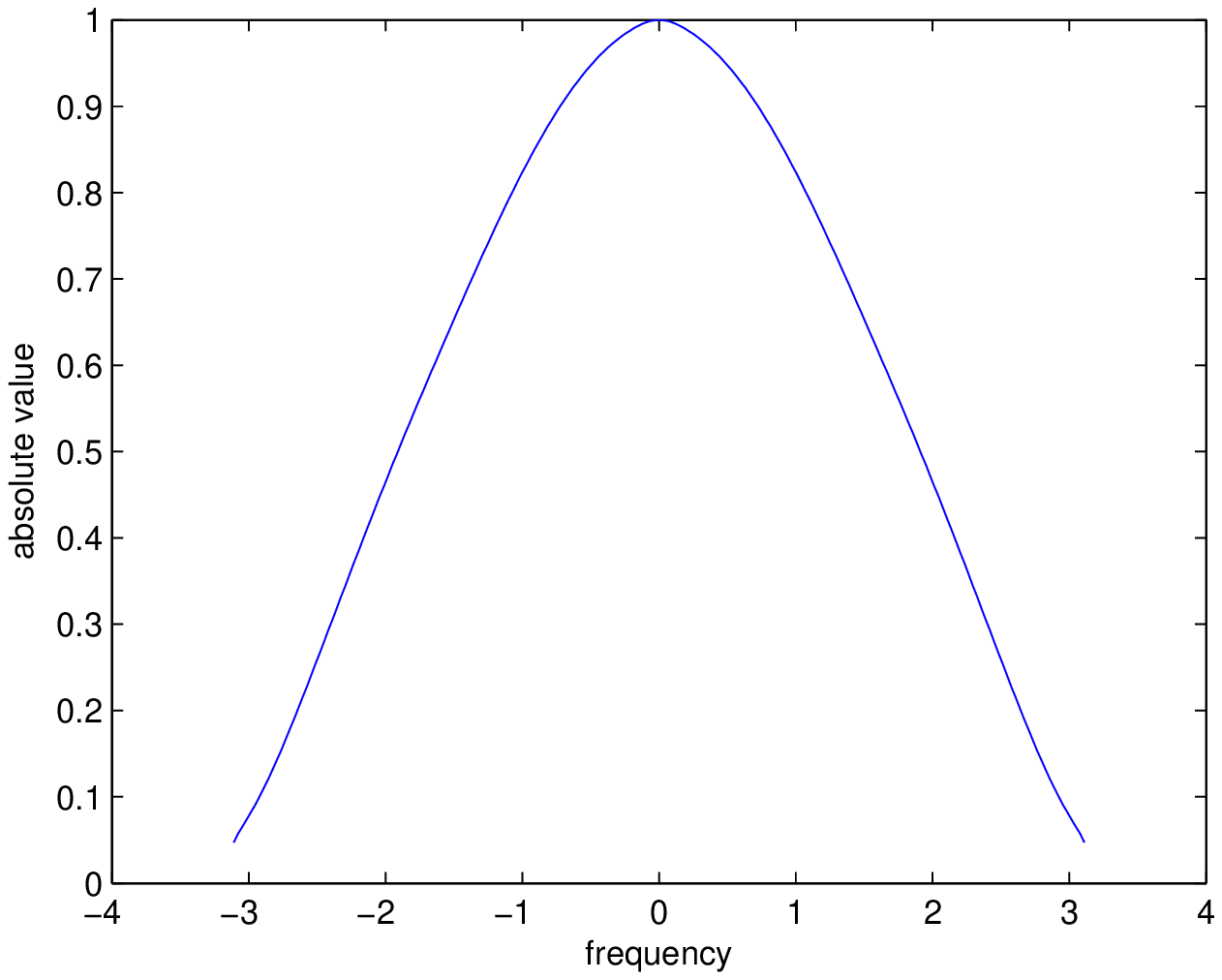} \ \includegraphics[height=2in,width=2.5in]{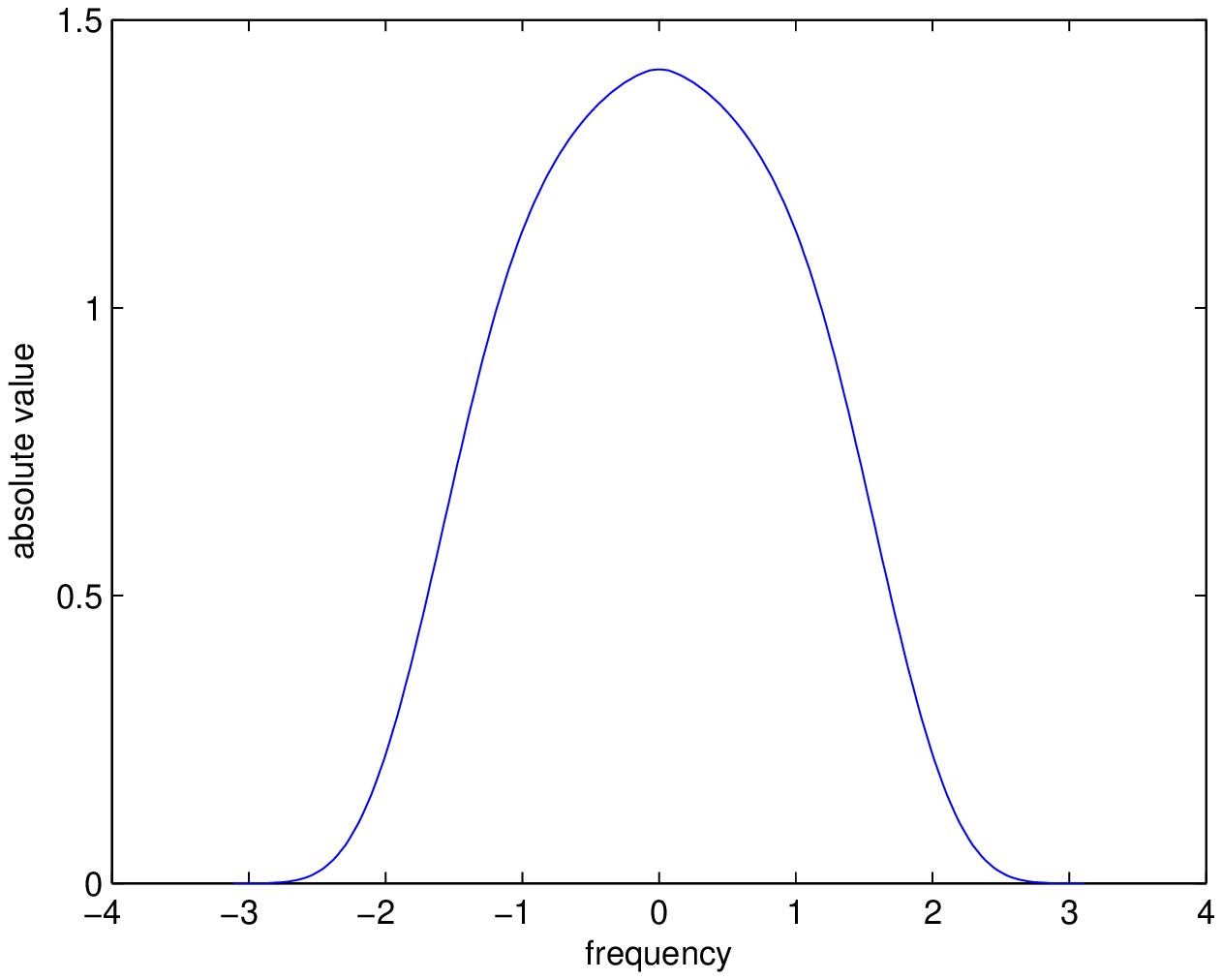}
	\caption{\label{f:fracOU_timedomain_truncpatch_low} fOU process, low pass spectral filters at $j=2$ ($\zeta = 1$, $d =0.25$). Left: $\widehat{g}_{j}(\cdot)/\widehat{g}_{j-1}(2\cdot)$. Right: $\widehat{u}_j$ (4 zero moments).
	}
	\end{center}
\end{figure}

\begin{figure}
	\begin{center}
	\includegraphics[height=2in,width=2.5in]{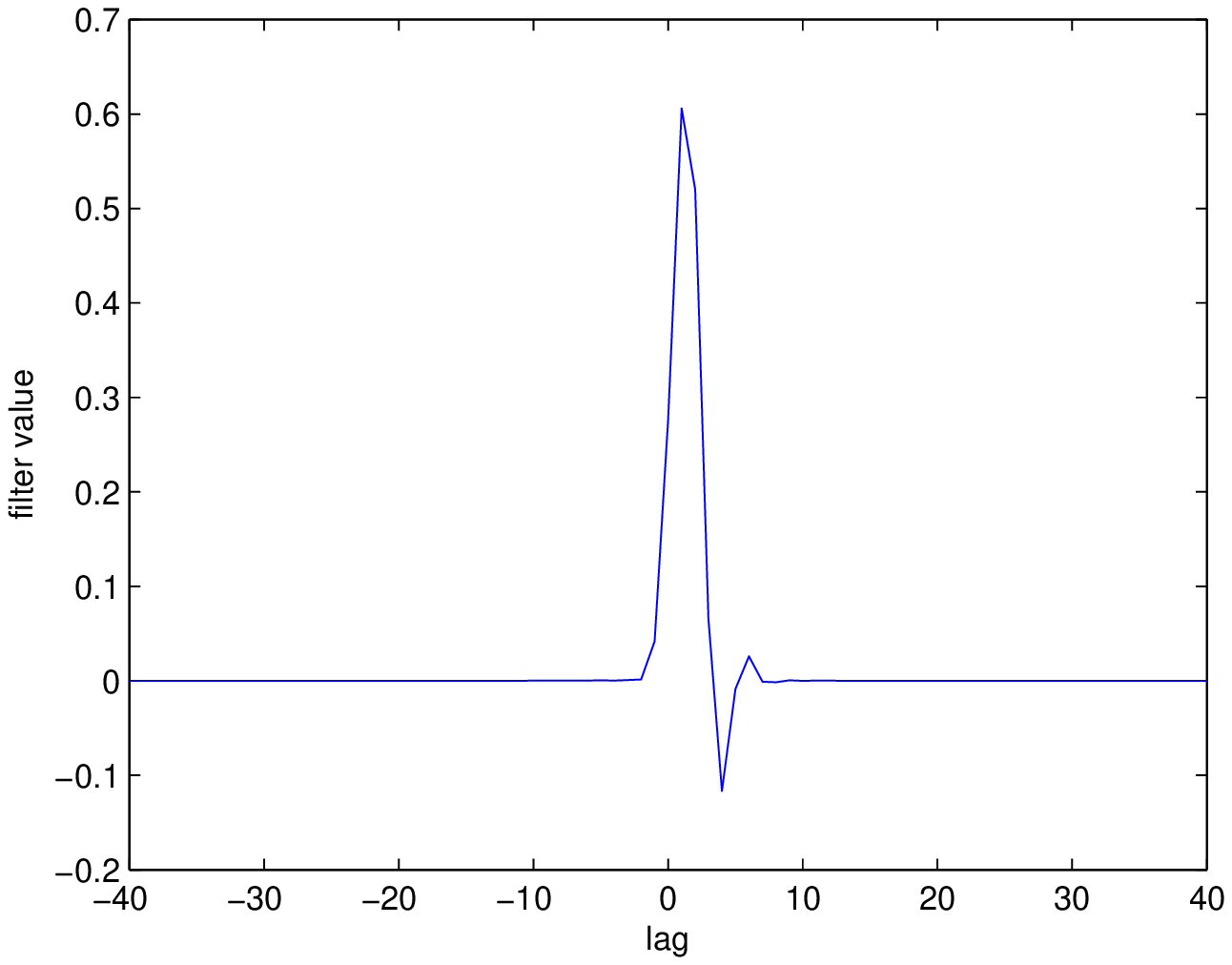} \ \includegraphics[height=2in,width=2.5in]{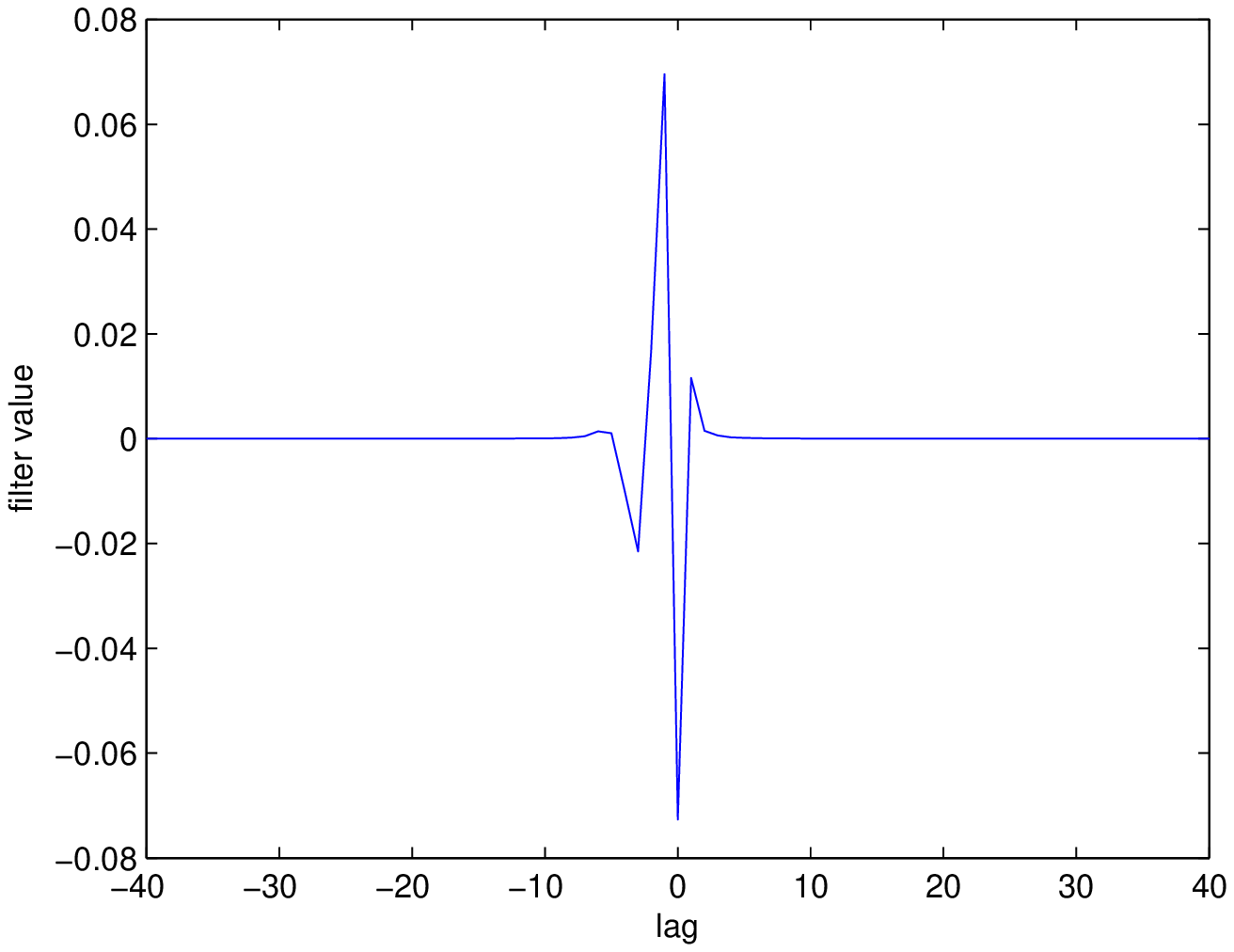}
	\caption{\label{f:fracOU_timedomain_truncpatch} fOU process, time domain filters at $j=2$ ($\zeta = 1$, $d =0.25$, 4 zero moments). Left: $u_j$. Right: $v_j$.
	}
	\end{center}
\end{figure}

%However, the coupling of the two terms $|x|^{\beta}$, $|x|^{2\beta}$ may create a critical point between 0 and $\pi$. Instead of trying to smooth (\ref{e:g_j_gamma}), we propose the simpler, monotonic filter
%\begin{equation}\label{e:g_jab_surrogate}
%\widehat{g}_{j,\gamma_0,\gamma_2,d}(x) = \frac{1}{\gamma_0 + \gamma_2 2^{2j \beta}|x|^{2\beta}}, \quad x \in [-\pi,\pi).
%\end{equation}
%The filter (\ref{e:g_jab_surrogate}) is simpler to deal with, and
%$$
%\lim_{j \rightarrow \infty}\frac{\widehat{g}_{j,\gamma_0,\gamma_1,\gamma_2,d}(x)}{\widehat{g}_{j,\gamma_0,\gamma_2,d}(x)} = 1, \quad x \in [-\pi,\pi).
%$$
%For smoothing purposes, we consider the expression
%$$
%\widehat{h}_{j}(x) = \exp\Big(\frac{\beta}{2 \pi^2} x^2 \Big) \widehat{g}_{j,\gamma_0,\gamma_2,d}(x).
%$$
%The first order condition for $\log \widehat{h}_{j}(x)$ is
%$$
%\frac{\gamma_0}{\pi^2 2^{2j \beta}} + \frac{\gamma_2}{\pi^2} x^{2\beta} = \gamma_2 x^{2\beta-2}, \quad x > 0.
%$$
As for the fGLE, smoothing is more challenging due to the presence of spikes in its spectral density (see Figure \ref{f:fGLE_correl_structure}).  Also, its more complicated functional form makes manipulation more difficult; however, we propose to continue to smooth via an exponential term due to the ensuing analytical simplicity. For large enough $j$, $\pm \pi$ become \textit{approximate }critical points of the component $\widehat{g}_{j,\gamma,d}(x)$, thus dispensing with centering. In fact, from Table \ref{t:truncated_filters-smoothed}, the first order condition for the logarithm of the proposed filter in the range $x > 0$ is that
\begin{equation}\label{e:FOC}
\beta \frac{\gamma_0}{2^{2j \beta}}x + \beta \frac{\gamma_1}{2^{j \beta}} x^{\beta+1} + \beta \gamma_2 x^{2\beta+1} =
\frac{\pi^2}{2} \frac{\beta \gamma_1 }{2^{j \beta}}x^{\beta-1} + \frac{\pi^2}{2} \gamma_2 2 \beta x^{2 \beta-1}.
\end{equation}
The first and second terms on the left-hand side, and the first term on the right-hand side of (\ref{e:FOC}) are close to zero for sufficiently large $j$. Thus, solving the resulting expression for $x$ gives $x^*(j) = x^* = \pi$ as an approximation. Moreover, the approximation \eqref{e:GJ_approx_1} holds as pointwise limit.

%The proposed filter is
%$$
%\widehat{g}_{j}(x) = \widehat{h}_{j}(x)\widehat{g}_{j,d}(x),
%$$
%where $\widehat{g}_{j,d}(x)$ is as in (\ref{e:fracOU_patched}).% and
%\begin{equation}\label{e:subdiff_SRD_patched}
%\widehat{g}_{j,\gamma,d,\ast}(x) = \frac{1}{\gamma_0 + \gamma_1 2^{j \beta} |x|^{\beta}+ \gamma_2 2^{2j \beta}|x|^{2\beta}} , \quad x^{*}:= \pi , \quad x \in [-\pi,\pi).
%\end{equation}

\begin{remark}
The simplification behind (\ref{e:FOC}) requires large enough $j$ so that the critical point is close to $\pi$. However, this depends on the values of the parameters $\zeta$ and $m$. For instance, the high pass filter $\widehat{u}_1(x)$ for $(\zeta,m) = (2,1)$ is already quite smooth. However, for $(\zeta,m) = (10,1)$, $\widehat{u}_1(x)$ displays a kink at $\pi/2$, which is already much less visible in $\widehat{u}_2(x)$ (not shown).
\end{remark}

\begin{remark}
As discussed in Pipiras \cite{pipiras:2005}, increasing the number of vanishing moments of the underlying MRA can improve the time domain decay of certain filters. Strictly speaking, the improvement in the decay depends on the specific Fourier domain form of the filter. (However, see Didier and Pipiras \cite{didier:pipiras:2010}, especially Remarks 6 and 7.)

In the context of the present paper, we can give a slightly different explanation for the potentially positive effect of the number of vanishing moments. Increasing the latter has the effect of increasing the regularity (Fourier domain smoothness) of the proposed filters. With a greater number of vanishing moments, $\widehat{v}_j(x)$ becomes flatter over a wider vicinity of zero in the Fourier domain, which can more efficiently make up for a singularity or kink at the origin (e.g., for the fOU). In numerical studies, we compared the time domain decay of fOU low- and high-pass filters $u_j$ and $v_j$ for $N = 4$ or 8 vanishing moments and parameters values $\zeta = 1$, $d = 0.25$. In general, for $j=2$, 5, 8 and 11 the tail values (lags $T = 31$ through 40) of $u_j$ and $v_j$ under $N = 8$ was on average of the order $10^{-8}$ below those obtained when $N = 4$. Due to this small effect, we used $N=4$ throughout the paper.
\end{remark}

\begin{remark}
All the time domain filters and covariance functions used in this paper were numerically calculated using the adaptive Lobatto quadrature method. For computational simplicity, we used Daubechies filters, instead of Meyer. In Matlab, the quadrature method is implemented via the \texttt{quadl.m} function. Section \ref{s:accuracy_numerical_integration} provides a numerical study of the accuracy of the \texttt{quadl.m} function comparing the deviation of the numerically computed AR(1) and FARIMA filters from their closed form expressions. We also performed experiments with the adaptive Gauss-Kronrod quadrature, which is more suitable for functions with moderate singularities at endpoints. This method is implemented in Matlab via the \texttt{quadgk.m} function and is also studied in Section \ref{s:accuracy_numerical_integration}.  Note that, alternatively, the computation of Fourier transforms of functions displaying a singularity at the origin can be dealt with via a change of variables. See Helgason et al.\ \cite{helgason:pipiras:abry:2011}, Section 3.4.
\end{remark}

%
%\begin{remark}
%It is possible that, after smoothing, the truncated filters become very well-behaved, so that the theoretical underpinning provided by the developments in Appendix \ref{s:adaptation_proofs} is not necessary. This is the case for the filters for the OU process, for instance, but not for the fGLE, depending on the parameter values or on $j$. This does not exclude the possibility of attempting other smoothing procedures, which are analytical convenient in some parameter regime.
%\end{remark}

%\begin{remark}
%\textbf{ JF: Is this really necessary here? An inline comment might be better.}
%
%\end{remark}

\section{Evaluating the simulation method}\label{s:eval_sim}

In this section, we evaluate the accuracy of the simulation method. Most simulation techniques are supported by theorems that establish some sort of convergence, equality in law and so on. However, the finite sample performance can be disparate across methods in practice. One approach is to use estimators  to compare the simulation methods. Nevertheless, only the asymptotic distribution of estimators are available in most cases. The finite sample performance of estimators, e.g., bias, is then studied based on simulation, which creates a circularity.  In view of this, we study the performance of the methods relative to one another and compare with three other methods: simple iteration (OU process), Cholesky and CME.

Cholesky decompositions provide a classical and simple simulation method. If $V$ is a target zero mean Gaussian stationary process with a given, known covariance matrix $\Sigma$ over a finite set of time points $N$, a Cholesky decomposition of $\Sigma = LL^{*} $ is performed, where $L$ is a lower triangular matrix, and  vector $Z$ of i.i.d.\ standard Gaussian variables is generated.  Then $V \stackrel{d}= LZ$, as desired. Cholesky-based simulation is exact up to the accurate calculation of the covariance and can be implemented recursively (see Asmussen and Glynn \cite{asmussen:glynn:2000}; see also Bardet et al.\ \cite{bardet:lang:oppenheim:philippe:taqqu:2003}, Craigmile \cite{craigmile:2005}). However, it is slow in terms of computational complexity: $O(N^3)$, where $N$ denote the length of the resulting stochastic vector.

Another popular method is the CME (see Davies and Harte \cite{davies:harte:1987}, Wood and Chan \cite{wood:chan:1994}, Dietrich and Newsam \cite{dietrich:newsam:1997}, Johnson \cite{johnson:1994}, Beran \cite{beran:1994}, Asmussen and Glynn \cite{asmussen:glynn:2000}, Percival and Constantine \cite{percival:constantine:2002}, Craigmile \cite{craigmile:2003}). The algorithm involves embedding the covariance matrix in a non-negative definite circulant matrix of size $M \geq 2(N-1)$. This is computationally convenient, since the diagonalization of circulant matrices can be carried out by means of the Fast Fourier Transform (FFT), which has complexity $O(N \log(N))$. Like Cholesky-based simulation, CME is exact. For a description of the CME, see Bardet et al.\ \cite{bardet:lang:oppenheim:philippe:taqqu:2003}, p.\ 582.

 Since the OU process can be simulated based on a simple loop, we choose this method to provide the baseline for the CME and the wavelet-based method. In the cases of the fOU and fGLE process, the baseline method is Cholesky-based simulation, since it is also a simple and exact procedure. A two-sample $t$ statistic is used to assess the difference between the values of the estimator when generated by two of the methods. For the OU process, we evaluate the quality of the simulation based on the Yule-Walker estimator, whereas for the fOU and fGLE we use the Local Whittle estimation of the parameter $d$. The latter is of special interest in the framework of subdiffusion, since the Local Whittle is a good estimator for the subdiffusivity parameter $\alpha$ (see Didier et al.\ \cite{didier:mckinley:hill:fricks:2012}).

%
%For the OU process, in the spirit of the modifications to the wavelet-based simulation of fBm proposed in Pipiras \cite{pipiras:2005}, the simulation algorithm starts with an exact simulation of the first discrete approximation to the OU process. This amounts to simulating (through a simple loop) an AR(1) process with parameter $\phi = e^{-a}$ and white noise variance $\frac{1 - e^{-2 a}}{2 a}$, and we can interpret this as the approximation at scale $j = 0$ with the corresponding filter $g_0(x)$ (which enters into $\widehat{u}_1(x)$).

%The simulation consists of generating a long series of points representing the OU process, and then sample from those so to create an AR(1) series.

For the OU process, the initial, exact step $j=0$ amounts to simulating through a simple loop an AR(1) process with parameter $\phi = e^{- \zeta}$ and white noise variance $\frac{1 - e^{-2 \zeta}}{2 \zeta}$.  Based on both wavelet and exact simulation, we generated the OU process with parameters $\zeta = 1 $ and $\sigma = 1$ over the interval $[0,2^{8}]$, with $2^{13}$ points in each subinterval of length 1. Then, by sampling at the rate $\Delta = 2^{-3}$ (i.e., every $2^{10}$ points), the associated AR(1) process has parameter $\phi = \exp(- 1 \cdot 2^{-3}) = 0.8825$, estimated by Yule-Walker over a time series of total length $2^{11}$. In order to speed up the computations while preserving accuracy, the wavelet filters were truncated either at lag $|T| = 40$ or when a value below $10^{-9}$ is attained, whichever is first. In order to test the consistency of wavelet simulation for different values for the parameter $\zeta$ and thus different filters, we also generated the OU process with parameters $\zeta = 2 $ and $\sigma = 1$ over the interval $[0,2^{7}]$ and sampled it at $\Delta = 2^{-4}$ and then with parameters $\zeta = 1/2 $ and $\sigma = 1$ over the interval $[0,2^{9}]$ and sampled it at $\Delta = 2^{-2}$. Therefore, the associated AR(1) processes have the same parameter $\phi = \exp(- 2 \cdot 2^{-4})=\exp(- 1/2 \cdot 2^{-2}) = 0.8825$. The simulation results, found in Table \ref{t:sim_OU_08825}, suggest that the method is accurate when compared to iterative simulation and CME. The filters for $\zeta = 1/2$ seem to be less accurate than those for $\zeta = 1, 2$. When the final scale is $J = 4$, the absolute value of the $t$ statistic is above 4. However, when $J$ is increased to 6 and 8, the latter drops below 2. This is indicative of increasing quality of the discretization as an increasing function of $J$. In further, unshown computational work, we obtained similar results for $\zeta = 1$ when $\Delta = 2^{-2}$ and $\Delta = 2^{-4}$ (thus, $\phi = \exp(-2^{-2})$ and $\phi = \exp(-2^{-4})$, respectively), and also for time series of  total length $2^{9}$ instead of $2^{11}$

%When $a = 1$ and we sample at $\Delta = 2^{-2}$, the associated AR(1) process has parameter $\phi = \exp(- 2^{-2})$. Also, when sampling at $\Delta = 2^{-4}$, the associated AR(1) process has parameter $\phi = \exp(-  2^{-4})$. Again with $2^{13}$ points in each subinterval of length 1, we consider the interval $[0,2^{9}]$ in the former case, and the interval $[0,2^{7}]$ in the latter case. The results are displayed in .

%The next table relates the relative bias of wavelet-based simulation to the number of the runs of the FWT-like algorithm ($J$), the estimate $\widehat{\phi}$ and the time of the overall procedure. A greater $J$ is expected to provide more accurate simulations. All simulations were performed on a LeNovo T61 \textbf{[GD: change this]}.

%To compare the time performance of wavelet with respect to CME, we used the latter to generate series of a certain length and then sampled from it so to obtain the same desired target parameter $\exp(-1 \cdot 2^{-3})$ and (resampled) series of length $2^{11}$ (i.e., the same length used for wavelet). For instance, if $M = 2^{15}$, then we used CME to generate an original AR(1) process of length $N = M/2 = 2^{14}$ with parameter $\phi_1 := \exp(-1 \cdot 2^{-6})$. By sampling from this series at every $2^{3}$ points, we obtain $2^{11}$ points of an AR(1) process with parameter $\exp(-1 \cdot 2^{-3})$. The results - bias-wise and time-wise - should then be compared with those for the wavelet method with simulation parameters $J = 6$ and $K = 8$. \textbf{[GD: should we keep this?]}

For the fOU process the comparison is made over the integer time points $0,1,2, \hdots$. For all simulations, filters were truncated at lag $T = 40$ giving entries on the order $10^{-6}$ at the point of truncation in the worst cases, typically in the low pass filters $u_j$. We experimented with two different initializations: either via the convolution of the filter $g_0 = v_0$ (i.e., at $j = 0$) with white noise or directly via CME with the associated autocovariance function calculated by means of numeric integration. The results can be seen in Table \ref{t:sim_fracOU_d_length9_initj0procCME} for the parameter values $d = 0.10, 0.25, 0.45$. In practice, the disadvantage of initialization via direct convolution with white noise is that in principle it might require storage of fairly long filters when $d > 0$ (i.e., under long range dependence). For this reason,  $g_0$ is truncated for some cases at $T = 1,200$ in Table \ref{t:sim_fracOU_d_length9_initj0procCME} yielding a quite large overall length of the filter $g_0$ at 2,401. The absolute value of the $t$ statistics is less than 2 regardless of the initialization (convolution or CME), thus yielding results rather similar to CME for the simulation of the fOU. Though not displayed in the tables, simulation initialized with truncated filters at $T = 400$ or $600$ seem to give rather similar results.

For $d < 0$, the discrete time fOU process is not anti-persistent (see Corollary \ref{c:increm_subdiff_specdens}). To evaluate the wavelet-based simulation procedure, we took the discrete time increment $Y(n) = X(n) - X(n-1)$ of the associated position process $X(t) = \int^{t}_{0}V(s)ds$. The spectral density of $Y(n)$ is, indeed, anti-persistent for $d < 0$ (see Corollary \ref{c:increm_subdiff_specdens}). For wavelet simulation purposes, the process $X(\cdot)$ was approximated by the simulated $V(\cdot)$ based on the expression (\ref{e:sup|X - sum V|}): for some fixed $J$, a sequence $V_{J,k}$ was generated, and the Riemann sum sequence in \eqref{e:sup|X - sum V|}
%\begin{equation}\label{e:Riemann_sum}
%\sum^{\lfloor (2^J-1) t \rfloor }_{k=0}2^{J/2}V_{J,k}\frac{1}{2^J}
%\end{equation}
was then calculated and sampled. The Cholesky sequences were simulated based on the covariance function of the process $\Delta X(t)$ with $\sigma =1$,
$$
E \Delta X(s)\Delta X(s+t) = \frac{\Gamma(2d+2) \sin(\pi (d+1/2))}{2 \pi}  \int_{\bbR} e^{itx} \Big| \frac{1 - e^{-ix}}{ix} \Big|^2 \frac{1}{\zeta^2 + x^2} \frac{1}{|x|^{2d}} dx.
$$
The results are also shown in Table \ref{t:sim_fracOU_d_length9_initj0procCME} for different parameter values $d = -0.10, -0.25, -0.45$. Once again, similarly to CME, the absolute value of the $t$ statistics is less than 2 in all cases.

Table \ref{t:sim_fracOU_length9_initj0procCME_compareJ} displays a study of the accuracy of the wavelet-based simulation of fOU as a function of the finest scale $J$ when $d = 0.25, -0.25$. Theoretically, as $J \rightarrow \infty $, the quality of the discretization improves. However, the results show that the relative bias does not change much as a function of $J$ for $J = 2,4,6,8,10$. This potentially indicates that the quality of the simulation is already good enough at low values of $J$.

\begin{figure}
	\begin{center}
	\includegraphics[height=2in,width=2.5in]{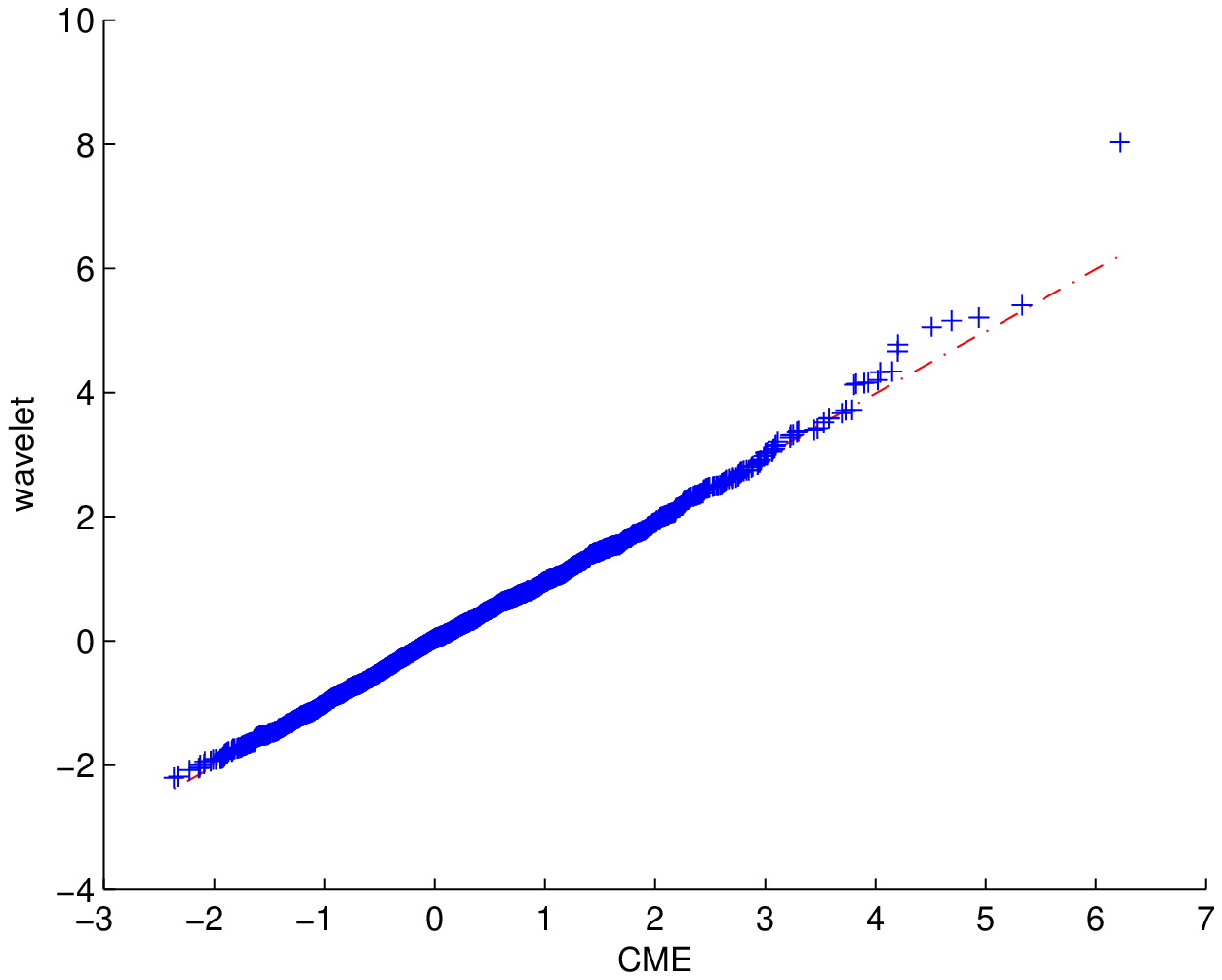} \ \includegraphics[height=2in,width=2.5in]{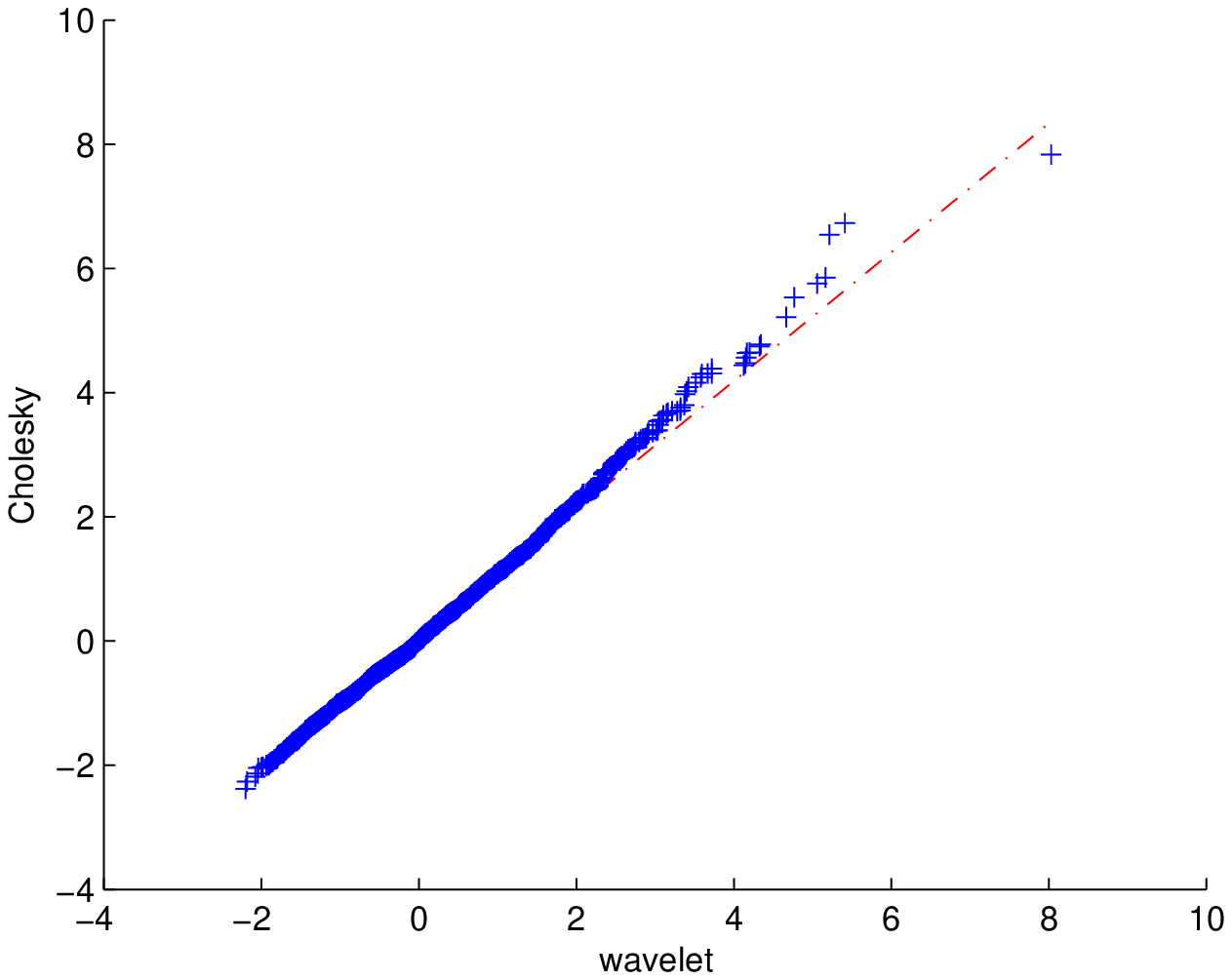}
	\caption{\label{f:qqplots} fOU ($\zeta = 1$, $d = 0.25$, time series length $2^9$), qq-plots for 3000 Monte Carlo runs of the statistic $\widetilde{T}$. Left plot: CME versus wavelet. Right plot: wavelet versus Cholesky.
	}
	\end{center}
\end{figure}

Since the fOU has long range dependence for $d=0.25$, which creates a more difficult case for simulation, we chose to also provide a non-parametric test for the full correlation structure of the process
$$
H_0: f(x) = f_V(x|\zeta = 1, d =0.25) \quad \textnormal{versus} \quad H_A: f(x) \neq f_V(x|\zeta = 1, d =0.25) ,
$$
where $f(x)$ is the actual spectral density and $f_V(x)$ is as in \eqref{e:sampled_specdens_fOU}. The test is given by the spectral statistic $\widetilde{T}$ as defined in Bardet et al.\ \cite{bardet:lang:oppenheim:philippe:taqqu:2003}, p.\ 595, and Chen and Deo \cite{chen:deo:2004}. This statistic satisfies the asymptotic relation $\widetilde{T} \stackrel{d}\rightarrow N(0,1)$ as the length of the time series goes to infinity. For each simulation method CME, Cholesky and wavelets, we did 3000 Monte Carlo runs of $\widetilde{T}$ via $2^9$ observations of the fOU process. The histogram of the distribution generated by the wavelet method can be seen in Figure \ref{f:spec_density_truncpatch}: at the chosen length $2^9$, the distribution is already mound-shaped, but still slightly skewed to the right. Though not shown, the histograms for CME and Cholesky look similar, with the latter displaying values somewhat more concentrated around the mode. For the sake of comparison, we show the qq-plots for CME versus wavelet method, and wavelet method versus Cholesky in Figure \ref{f:qqplots} (though not displayed, the qq-plot for CME versus Cholesky looks similar to these). The qq-plots do not indicate a substantial discrepancy in the generated distributions of $\widetilde{T}$. This seems to be confirmed by the $p$-values as given by the two-sample Kolmogorov-Smirnov statistic: for CME versus wavelet method, wavelet method versus Cholesky, and CME versus Cholesky, they were 0.712, 0.402 and 0.498, respectively.

For the fGLE, the filters displayed slower decay than those for the fOU process. For this reason, truncation was performed at lag $|T| = 80$, which gave entries of the order $10^{-6}$ at the point of truncation in most cases, typically in the low pass filters, and entries of $10^{-5}$ only for the low pass filter $u_1$ for different values of $d$. For the same reason as for the fOU, the wavelet simulation was performed based on the Riemann sum in \eqref{e:sup|X - sum V|} (see Corollary \ref{c:increm_subdiff_specdens}). Also, the Cholesky sequences were simulated based on the covariance function of the process $\Delta X(t)$,
$$
E \Delta X(s)\Delta X(s+t) = c  \int_{\bbR} e^{itx} \Big| \frac{1 - e^{-ix}}{ix} \Big|^2 \frac{1}{\gamma_0 + \gamma_1 |x|^{\beta} + \gamma_2 |x|^{2 \beta}} \frac{1}{|x|^{2d}} dx
$$
for an appropriate $c > 0$. The results are shown for different parameter values and final scales $J$ in Table \ref{t:sim_fracGLE_length9_initj0procCME_compareJ}. In all cases, the absolute value of the $t$ statistic is less than two and close to the corresponding value obtained from CME simulation. Also, in contrast with the OU process, increasing $J$ does not seem to affect considerably the quality of the simulation.

\begin{remark}
Other simulation studies were carried out for the fOU, $0 < d < 1/2$, with filters calculated via the adaptive Gauss-Kronrod quadrature. The results were comparable to those obtained via adaptive Lobatto quadrature, so they are not shown.
%For CME simulation, in all cases the simulation parameter $M$ was set to $N/2$, where $N$ is the time series length.
For all numerical integrals for all methods, the quadrature precision ranged from $10^{-9}$ to $10^{-13}$. Whenever applicable, the length of the initial CME-simulated series for the wavelet method was $2^{10}$.
\end{remark}

%Since
%$$
%X(t) = \lim_{\norm{{\mathcal P}} \rightarrow 0} \sum_{{\mathcal P}} V(\xi_{k}) (s_{k}- s_{k-1}),
%$$
%we can consider the simulation of $X$ through discretization. In other words, we obtain a vector $\{V(T k /N)\}_{k = 0,\hdots,N}$
%we could choose, for instance, the sequence of trivial tagged partitions
%$$
%s_k = k \frac{T}{N}= \xi_{k}, \quad k = 0,\hdots,N.
%$$
%We obtain the approximations
%\begin{equation}\label{e:sim_discretization_of_X}
%X\Big(i \frac{T}{N}\Big) \cong \sum^{i(N/T)}_{k=1}V\Big( \frac{Tk}{N}\Big)\frac{T}{N}, \quad i = 1, \hdots, N.
%\end{equation}
%The complexity of the fast wavelet transform for the generation of $\{V(\xi_{k})\}_{k = 0,\hdots,N}$ is $O(N)$. Thus, the complexity of obtaining the partial sums is still $O(N)$.
%
%So, define the discrete approximations
%\begin{equation}
%X_{J,[2^j t]}:= \sum^{[2^j t]}_{k=1} 2^{J/2}V_{J,[2^{J}\frac{Tk}{N}]}
%\end{equation}

\section{Discussion}\label{s:discussion}

Wavelet-based simulation methods have proven to be fast and efficient alternatives to FFT-based methods for rather large samples. They usually exhibit low computational complexity, since they are based on the Fast Wavelet Transform (see Percival and Walden \cite{percival:walden:2000}). In this paper, we proposed an approximate wavelet-based simulation technique for two classes of continuous time anomalous diffusion models, the fGLE and the fOU. The proposed algorithm is an iterative method that provides approximate discretizations that converge quickly to the true sample path of the target process. The simulation technique involves an appropriate sequence of filters $g_{j}$, $j=0,1,...,J$, where is $J$ the finest scale chosen. The method then amounts to recursively generating an induced sequence of (exact or approximate) discretizations $V_j$, at scale $j \in \bbN$, where $V_j = g_j \ast \xi$. One can then show that $2^{J/2}V_{J,\lfloor 2^{J}t \rfloor } \rightarrow V(t)$, $J \rightarrow \infty$, in an appropriate sense, which naturally leads to an approximation of the position process $X(t)$ via Riemann sums. As compared to previous works such as Didier and Pipiras \cite{didier:pipiras:2008}, this paper proposes a simulation procedure when the discretization of the target continuous time process at different scales does not have closed form in the time domain. Moreover, we propose smoothing procedures for the proposed filters as to speed their time domain decay, and thus minimize the border effect, which is always present in convolution-based procedures.

While this method is approximate, it has several advantages such as: computational speed; discretizations that converge uniformly over compact intervals almost surely; it is iterative; it is not intrinsically Gaussian. To study the performance of the wavelet-based simulation in comparison to exact methods such as Cholesky and CME, we performed several Monte Carlo experiments. The simulation study measured the bias of well-established estimators when compared with realizations from other methods. In most cases, the bias of the estimators when simulated using the wavelet method seems to lie within an insignificant distance from the exact methods.  Therefore, the method is nearly as accurate with potentially reduced computational complexity compared with existing methods.

\appendix

\section{Proofs}\label{s:adaptation_proofs}

In this section, we discuss the adaptation of the original proofs in Didier and Pipiras \cite{didier:pipiras:2008}.

We will need to replace Assumption 3 with the weaker Assumption $3'$, which allows for a kink in $G_{j}$ and $G^{-1}_{j}$ at $\pm \pi$,
Assumption $4'$ replaces Assumption 4 for the sake of clarification. We remind the reader that the support of the Meyer scaling function $\widehat{\phi}(x)$ is contained in the interval $[-4\pi/3,4\pi/3]$.

\medskip

%\noindent {\sc Assumption $3'$:} Suppose that for some $\varepsilon > 0 $ and for every $j \in \bbZ$,
%\begin{itemize}
%\item $G_{j}$, $G^{-1}_{j}$ are 2 times continuously differentiable on $(-4 \pi/3 - \varepsilon;4 \pi/3 + \varepsilon)\backslash\{\pm \pi\}$,
%and are continuous at $\{\pm \pi\}$;
%\item for every $j_0 \in \bbZ$,
%$$
%\max_{p=-1,1} \max_{k=0,1,2} \sup_{j \geq j_0} \sup_{\{|x| \leq 4 \pi/3\} \backslash \{\pm \pi\}} \Big|\frac{\partial^k}{\partial x^k}(G_{j}(x))^p \Big| < \infty.
%$$
%\end{itemize}

\noindent {\sc Assumption $3'$:}
$$
G_j(x),G_j(x)^{-1} \in C[-4\pi/3,4 \pi/3] \cap C^{2}([-4\pi/3,4\pi/3] \backslash \{\pm \pi\})
$$
\begin{equation}\label{e:A3'_ii}
\max_{p=-1,1} \sup_{j \geq j_0} \sup_{|x|\leq 4\pi/3}|G_{j}(x)^{p}| < \infty
\end{equation}
\begin{equation}\label{e:A3'_iii}
\max_{p=-1,1} \max_{k=1,2}\sup_{j \geq j_0} \sup_{|x|\neq \pi, |x|\leq 4\pi/3} \Big|\frac{\partial^{k}}{\partial x^k }[G_{j}(x)^{p}] \Big| < \infty
\end{equation}
\begin{equation}\label{e:A3'_iv}
\max_{p=-1,1} \max_{k=1,2}\sup_{j \geq j_0} \lim_{x \rightarrow \pi^+ ,-\pi^{-}} \Big|\frac{\partial^{k}}{\partial x^k }[G_{j}(x)^{p}] \Big| < \infty
\end{equation}

\medskip
\noindent {\sc Assumption $4'$:} $\widehat{g}$ is twice differentiable in $\bbR \backslash\{0\}$ and, for large $|x|$,
$$
\left| \frac{\partial^k \widehat g(x)}{\partial x^k}\right| \leq
\frac{\mbox{const}}{|x|^{k+1}}, \quad k=0,1,2.
$$
Assumption $4'$ is clearly satisfied by the spectral filters $\widehat{g}(x)$ of all the processes considered in this paper, so we turn to Assumption $3'$. Note that in \eqref{e:A3'_iv} the limits are assumed to exist.

In Lemma \ref{l:lemma1_in_AWD1}, we show that the conclusion of Lemma 1 in Didier and Pipiras \cite{didier:pipiras:2008} still holds if Assumption 3 is replaced with Assumption $3'$. For the reader's convenience, we discuss the modification of the proof. Before that, we establish a simple auxiliary lemma.

\begin{lemma}\label{l:integ_by_parts}
Let $f$,$g$ be two real functions in $C^1(a,b)$ such that $f'$, $g'$ are bounded over $[a,b]$ and the limits $\lim_{x \rightarrow a^{+},b^{-}}f(x)g(x)$ exist. Then the formula for integration by parts holds, i.e.,
$$
\lim_{\varepsilon_b \rightarrow 0^{+}}f(b-\varepsilon_b)g(b-\varepsilon_b) - \lim_{\varepsilon_a \rightarrow 0^{+}}f(a+
\varepsilon_a)g(a+\varepsilon_a) = \int^{b}_{a}[f'(x)g(x) + f(x)g'(x)]dx.
$$
\end{lemma}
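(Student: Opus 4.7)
The approach is to apply the classical integration-by-parts formula on each closed subinterval $[a+\varepsilon_a,b-\varepsilon_b]\subset(a,b)$ and then let the endpoints tend to $a$ and $b$. Since $f,g\in C^{1}(a,b)$, on any such compact subinterval both functions are continuously differentiable, so classical integration by parts gives
$$
f(b-\varepsilon_b)g(b-\varepsilon_b)-f(a+\varepsilon_a)g(a+\varepsilon_a)=\int_{a+\varepsilon_a}^{b-\varepsilon_b}\bigl[f'(x)g(x)+f(x)g'(x)\bigr]\,dx.
$$

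Next I would establish that the integrand on the right-hand side is bounded on $(a,b)$, so that it belongs to $L^{1}(a,b)$. The boundedness of $f'$ on $(a,b)$ by some constant $M_f$ together with the mean value theorem yields $|f(x)-f(y)|\le M_f|x-y|$ for all $x,y\in(a,b)$; fixing any reference point $x_0\in(a,b)$ shows that $|f(x)|\le|f(x_0)|+M_f(b-a)$, so $f$ is bounded. By the same argument $g$ is bounded. Consequently $f'g+fg'$ is bounded on $(a,b)$ by a constant, and since $(a,b)$ has finite length, $f'g+fg'\in L^{1}(a,b)$.

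Finally I would pass to the limit. For the right-hand side, write
$$
\int_{a+\varepsilon_a}^{b-\varepsilon_b}\bigl[f'(x)g(x)+f(x)g'(x)\bigr]\,dx=\int_{a}^{b}\mathbf{1}_{[a+\varepsilon_a,b-\varepsilon_b]}(x)\bigl[f'(x)g(x)+f(x)g'(x)\bigr]\,dx,
$$
and invoke the dominated convergence theorem with the bounded dominating function identified above, obtaining convergence to $\int_{a}^{b}[f'(x)g(x)+f(x)g'(x)]\,dx$ as $\varepsilon_a,\varepsilon_b\to 0^{+}$. For the left-hand side, the existence of the one-sided limits $\lim_{x\to a^{+}}f(x)g(x)$ and $\lim_{x\to b^{-}}f(x)g(x)$ is exactly the hypothesis (in fact it is automatic from the Lipschitz estimates above, which extend $f$ and $g$ continuously to $[a,b]$). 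Equating limits on both sides yields the desired identity.

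There is no substantial obstacle here; the only delicate point is to ensure that the individual boundedness of $f$ and $g$ — needed for $f'g+fg'\in L^{1}(a,b)$ — follows from the hypotheses, and this is handled by the Lipschitz argument sketched above.
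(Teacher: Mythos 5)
Your proposal is correct and follows essentially the same route as the paper's own proof: apply the classical integration-by-parts identity on the compact subinterval $[a+\varepsilon_a,\,b-\varepsilon_b]$ and then let $\varepsilon_a,\varepsilon_b\to 0^{+}$. The only difference is that you spell out the justification for passing to the limit on the integral side (boundedness of $f,g$ from the bounded derivatives, hence $f'g+fg'\in L^1(a,b)$, followed by dominated convergence), which the paper leaves implicit in the phrase ``by the remaining assumptions''; this is a welcome but not essentially different elaboration.
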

\begin{proof}
Since $f, g \in C^1(a,b)$, then for small $\varepsilon_a, \varepsilon_b> 0$,
$$
f(b-\varepsilon_b)g(b-\varepsilon_b) - f(a+\varepsilon_a)g(a+\varepsilon_a) = \int^{b-\varepsilon_b}_{a + \varepsilon_a}[f'(x)g(x) + f(x)g'(x)]dx.
$$
By the remaining assumptions and taking the limits $\lim_{\varepsilon_a \rightarrow 0^+}$, $\lim_{\varepsilon_b \rightarrow 0^+}$, the claim follows. %$\Box$\\
\end{proof}

\begin{lemma}\label{l:lemma1_in_AWD1}
Under Assumptions $3'$ and $4'$,
\begin{equation}\label{e:decay_Phi_j_Phi^j}
|2^{-j/2} \Phi_{j}(2^{-j}u)|, |2^{-j/2} \Phi^{j}(2^{-j}u)| \leq \frac{C}{1 + |u|^2}, \quad u \in \bbR
\end{equation}
\begin{equation}\label{e:decay_Psi^j}
|\Phi^{j}(2^{-j}u)| \leq \frac{C 2^{-j/2}}{1 + |u|^2}, \quad u \in \bbR,
\end{equation}
where $\Phi_{j}$, $\Phi^{j}$ and $\Psi^{j}$ are defined in the Fourier domain as $\widehat{\Phi}_{j}(x) = \overline{G_{j}(2^{-j}x)}2^{-j/2}\widehat{\phi}(2^{-j}x)$, $\widehat{\Phi}^{j}(x) = G_{j}(2^{-j}x)^{-1}2^{-j/2}\widehat{\phi}(2^{-j}x)$, $\widehat{\Psi}^j(x)= \widehat{g}(x) 2^{-j/2}\widehat{\psi}(2^{-j}x)$.
\end{lemma}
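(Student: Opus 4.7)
}

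The plan is to mirror the argument of Lemma 1 in Didier and Pipiras \cite{didier:pipiras:2008}, replacing the two classical integrations by parts with the generalized version (Lemma \ref{l:integ_by_parts}) adapted to the kinks at $\pm\pi$. I would first apply the dilation $y=2^{-j}x$ inside the Fourier inversion formulas for $\widehat\Phi_j$ and $\widehat\Phi^j$. Since $\widehat\phi$ is a Meyer scaling function with $\mathrm{supp}(\widehat\phi)\subseteq[-4\pi/3,4\pi/3]$, this rewrites the target quantities as
$$
2^{-j/2}\Phi_j(2^{-j}u)=\tfrac{1}{2\pi}\!\int_{-4\pi/3}^{4\pi/3}\!\overline{G_j(y)}\,\widehat\phi(y)\,e^{iuy}\,dy,\qquad 2^{-j/2}\Phi^j(2^{-j}u)=\tfrac{1}{2\pi}\!\int_{-4\pi/3}^{4\pi/3}\!G_j(y)^{-1}\widehat\phi(y)\,e^{iuy}\,dy,
$$
so the problem reduces to showing that each of these oscillatory integrals is $O(1/(1+|u|^2))$ uniformly in $j$. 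For $|u|\le 1$ the bound is immediate from \eqref{e:A3'_ii} and the compact support of $\widehat\phi$.

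For $|u|>1$ I would partition the domain as $[-4\pi/3,-\pi]\cup[-\pi,\pi]\cup[\pi,4\pi/3]$ so that, by Assumption $3'$, $f_j:=\overline{G_j}\,\widehat\phi$ (resp.\ $G_j^{-1}\widehat\phi$) is $C^2$ on each piece with one-sided limits at $\pm\pi$. Two applications of Lemma \ref{l:integ_by_parts} on each subinterval produce boundary contributions at $\pm 4\pi/3$ and $\pm\pi$, plus a remainder $\frac{1}{(iu)^2}\int f_j''(y)e^{iuy}dy$. The boundary terms at $\pm 4\pi/3$ vanish since $\widehat\phi$ and $\widehat\phi'$ vanish there (Meyer smoothness). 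The first-order boundary terms at $\pm\pi$ have the form $[f_j(\pi^-)-f_j(\pi^+)]e^{i\pi u}/(iu)$ and its counterpart at $-\pi$, which cancel because $G_j$ (and hence $f_j$) is continuous at $\pm\pi$ by Assumption $3'$. The second-order boundary terms at $\pm\pi$ remain but are bounded uniformly in $j$ by \eqref{e:A3'_iii}--\eqref{e:A3'_iv}; similarly, $\|f_j''\|_{L^1}$ is uniformly bounded by Assumption $3'$. Summing, one obtains $|\cdot|\le C/(1+|u|^2)$ with $C$ independent of $j$, which is \eqref{e:decay_Phi_j_Phi^j}.

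For the bound \eqref{e:decay_Psi^j} on $\Psi^j$ the strategy is different because $\widehat g$ itself need not be smooth at the origin. After the same dilation,
$$
\Psi^j(2^{-j}u)=\tfrac{2^{j/2}}{2\pi}\!\int\widehat g(2^j y)\,\widehat\psi(y)\,e^{iuy}\,dy,
$$
with the integration restricted to $\mathrm{supp}(\widehat\psi)\subset\{2\pi/3\le|y|\le 8\pi/3\}$, which is a fixed positive distance from $0$. Hence Assumption $4'$ applies pointwise for all large $j$: each derivative satisfies $|\widehat g^{(k)}(2^j y)|\le C(2^j|y|)^{-(k+1)}$, so the chain rule gives $|(d^k/dy^k)[\widehat g(2^j y)\widehat\psi(y)]|\le C\,2^{-j}$ on the support, for $k=0,1,2$. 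Integrating by parts twice (the boundary terms vanish because $\widehat\psi$ is $C^\infty$ and vanishes together with its derivatives at the endpoints of its Meyer support) yields an estimate of order $2^{-j}/(1+|u|^2)$ for the integral, which after multiplying by the $2^{j/2}$ prefactor delivers $|\Psi^j(2^{-j}u)|\le C\,2^{-j/2}/(1+|u|^2)$.

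The main obstacle I expect is bookkeeping the boundary contributions at $\pm\pi$ after the two integrations by parts, i.e.\ verifying that the continuity of $G_j^{\pm 1}$ makes the $1/u$-terms cancel while the $1/u^2$-terms can be controlled by the uniform-in-$j$ bounds \eqref{e:A3'_iii}--\eqref{e:A3'_iv} on the one-sided second derivatives. Everything else is a careful bookkeeping version of the standard $\int f e^{iuy}dy=O(u^{-2})$ oscillatory integral argument.
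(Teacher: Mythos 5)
Your proposal is correct and follows essentially the same route as the paper's proof: splitting the Fourier inversion integral at $\pm\pi$, applying Lemma \ref{l:integ_by_parts} twice on each piece so that the $O(1/u)$ boundary terms cancel by continuity of $G_j^{\pm1}$ at $\pm\pi$ while the remaining terms are controlled uniformly in $j$ by Assumption $3'$, and then handling $\Psi^j$ via the fact that $\mathrm{supp}(\widehat\psi)$ is bounded away from the origin so Assumption $4'$ yields the extra factor $2^{-j}$. Your explicit treatment of $|u|\le 1$ and of the exact cancellation of the first-order boundary contributions only makes explicit what the paper leaves implicit.
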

\begin{proof}
We first look at $\Phi^{j}$. Consider initially $x > 0$. From the properties of the Meyer MRA, $G_{j}(x)^{-1} \widehat{\phi}(x)$ also satisfies Assumption $3'$. Thus, by Lemma \ref{l:integ_by_parts},
$$
\int^{4\pi/3}_{\pi} e^{iux} G_{j}(x)^{-1} \widehat{\phi}(x) dx = \Big|^{4 \pi/3}_{\pi} \frac{e^{iux}}{iu} G_{j}(x)^{-1} \widehat{\phi}(x) - \int^{4\pi/3}_{\pi} \frac{e^{iux}}{iu} \frac{\partial}{\partial x}(G_{j}(x)^{-1} \widehat{\phi}(x)) dx
$$
\begin{equation}\label{e:bound_Phi^j}
= \frac{-1}{iu} \Big(e^{iu \pi} G_{j}(\pi)^{-1} \widehat{\phi}(\pi) + \int^{4\pi/3}_{\pi} e^{iux} \frac{\partial}{\partial x}(G_{j}(x)^{-1} \widehat{\phi}(x)) dx \Big).
\end{equation}
In turn, again by Lemma \ref{l:integ_by_parts}, the integral on the right-hand side of \eqref{e:bound_Phi^j} is
$$
\Big|^{4 \pi/3}_{\pi} \frac{e^{iux}}{iu} \frac{\partial}{\partial x}(G_{j}(x)^{-1} \widehat{\phi}(x)) -  \int^{4\pi/3}_{\pi}
\frac{e^{iux}}{iu} \frac{\partial^2}{\partial x^2}(G_{j}(x)^{-1} \widehat{\phi}(x))dx
$$
\begin{equation}\label{e:bound_Phi^j_second_integ_parts}
= - \frac{e^{iu \pi}}{iu} \lim_{x \rightarrow \pi^+} \frac{\partial}{\partial x}(G_{j}(x)^{-1} \widehat{\phi}(x)) -  \int^{4\pi/3}_{\pi}
\frac{e^{iux}}{iu} \frac{\partial^2}{\partial x^2}(G_{j}(x)^{-1} \widehat{\phi}(x))dx.
\end{equation}
As for $x < 0$, an analogous expression holds for
\begin{equation}\label{e:bound_Phi^j_-4pi/3_-pi}
\int^{-\pi}_{-4\pi/3} e^{iux} G_{j}(x)^{-1} \widehat{\phi}(x) dx.
\end{equation}
On the other hand,
$$
\int^{\pi}_{-\pi} e^{iux} G_{j}(x)^{-1} \widehat{\phi}(x) dx
$$
\begin{equation}\label{e:bound_Phi^j_-pi_pi}
= \frac{1}{iu }\Big( e^{iu\pi}G_{j}(\pi)^{-1}\widehat{\phi}(\pi) - e^{-iu\pi}G_{j}(-\pi)^{-1}\widehat{\phi}(-\pi) -  \int^{\pi}_{-\pi}
e^{iux} \frac{\partial}{\partial x} (G_{j}(x)^{-1}\widehat{\phi}(x))dx \Big).
\end{equation}
Once again by Lemma \ref{l:integ_by_parts}, the integral on the right-hand side of \eqref{e:bound_Phi^j_-pi_pi} is
$$
\frac{1}{iu} \Big( e^{iu \pi} \lim_{x \rightarrow \pi^{+}}  \frac{\partial}{\partial x}(G_{j}(x)^{-1} \widehat{\phi}(x)) - \lim_{x \rightarrow -\pi^{-}} e^{-iu \pi} \frac{\partial}{\partial x}(G_{j}(x)^{-1} \widehat{\phi}(x)\Big)
$$
\begin{equation}\label{e:bound_Phi^j_-pi_pi_second_integ_parts}
- \int^{\pi}_{-\pi} e^{iux} \frac{\partial^2}{\partial x^2}(G_{j}(x)^{-1} \widehat{\phi}(x)) dx \Big).
\end{equation}
Consequently, by adding together \eqref{e:bound_Phi^j}, \eqref{e:bound_Phi^j_second_integ_parts}, \eqref{e:bound_Phi^j_-4pi/3_-pi}, \eqref{e:bound_Phi^j_-pi_pi}, \eqref{e:bound_Phi^j_-pi_pi_second_integ_parts} and by Assumption $3'$,
$$
|2^{-j/2} \Phi^{j}(2^{-j}u)| = \frac{1}{2\pi} \Big| \Big( \int^{4\pi/3}_{\pi} + \int^{\pi}_{-\pi}  + \int^{-\pi}_{-4\pi/3}\Big)  e^{iux} G_{j}(x)^{-1} \widehat{\phi}(x) dx \Big| \leq \frac{C}{2 \pi u^2 }
$$
for a constant $C$ that does depend on $j$, which gives the inequalities in \eqref{e:decay_Phi_j_Phi^j} for $\Phi^j$. The remaining inequality, for $\Phi_j$, can be obtained by a similar procedure.

To show \eqref{e:decay_Psi^j}, start from
$$
\Psi^j (2^{-j}u) = \frac{2^{j/2}}{2 \pi} \int_{\bbR}e^{iux}\widehat{g}(2^{j}x)\widehat{\psi}(x) dx, \quad u \in \bbR.
$$
Since the only possible singularity of $\widehat{g}$ is at the origin by Assumption $4'$ and $\textnormal{supp}\{\widehat{\psi}\}\subseteq \{2 \pi/3 \leq |x| \leq 8 \pi/3\}$, the same argument as in the proof of Lemma 1 in Didier and Pipiras \cite{didier:pipiras:2008} applies, thus yielding \eqref{e:decay_Psi^j}. %$\Box$\\
\end{proof}

Proposition \ref{p:decay_AWD_filters} is a claim in the proof of Proposition 1 in Didier and Pipiras \cite{didier:pipiras:2008}. Since we use filters under slightly different assumptions in this paper, we state it and provide a more detailed proof.\\

{\sc Proof of Proposition \ref{p:decay_AWD_filters}}:
The Meyer low-pass filter $u$ satisfies $\textnormal{supp}\{\widehat{u}\} \subseteq \{|x| \leq 2\pi/3\}$.
On the other hand,  the possible kinks of $G_{j+1}(x)$ in $[-4\pi/3,4\pi/3]$ lie at $\pm \pi$ by Assumption $3'$. Therefore, the potential kinks of $G_{j}(2x)$ in $[-2\pi/3,2\pi/3]$ lie at $\pm \pi/2$. However, those points lie outside $\textnormal{supp}\{\widehat{u}\}$. Therefore, we can write
$$
u_{j,k} = \Big(\int^{-\pi/2}_{-2\pi/3} + \int^{\pi/2}_{-\pi/2} + \int^{2\pi/3}_{\pi/2}\Big) e^{ikx} \frac{G_{j+1}(x)}{G_{j}(2x)} \widehat{u}(x) dx,
$$
and again by Assumption $3'$, one can use the same type of argument as in the proof of Lemma \ref{l:lemma1_in_AWD1} to establish \eqref{e:u,v_decay} for $u_{j,k}$.

As for $v_{j,k}$, $\widehat{g}_{j+1}(x)$ is smooth except possibly at the origin by Assumption $4'$. Since $\widehat{v}_{j}(x) = \widehat{g}_{j+1}(x) \widehat{v}(x)$, $\textnormal{supp}\{\widehat{v}\}\subseteq \{\pi/3 \leq |x| \leq 5 \pi/3\}$ and the fact that $\widehat{v}(x) \in C^{2}[-\pi,\pi)$, then by applying integration by parts twice we arrive at the claim. $\Box$\\

We now describe the necessary modifications to the remaining claims in Didier and Pipiras \cite{didier:pipiras:2008}:
\begin{itemize}
\item Theorem 2, section 5, expression (5.13) we still have that
$$
\widehat{F}_{m}(x) = \Big( \sum^{m}_{k=-m}g_{J,k}e^{-i 2^{-J}kx}\Big)\frac{\widehat{g}(x)}{\widehat{g}_{J}(2^{-J}x)}2^{-J/2}\widehat{\phi}(2^{-J}x) \rightarrow \widehat{\theta}^{J}(x), \quad m \rightarrow \infty
$$
in $L^2(\bbR)$, since $\sum^{m}_{k=-m}g_{J,k}e^{-i k x}$ converges to $\widehat{g}_{J}(x)$ in $L^{2}[-\pi,\pi)$ and, by Assumption $3'$, $\frac{\widehat{g}(x)}{\widehat{g}_{J}(2^{-J}x)}$ is bounded on the compact support of $\widehat{\phi}(2^{-J}x)$.
\item Proposition 1, section 6, expression (6.5): as an immediate consequence of Lemma \ref{p:decay_AWD_filters},
$$
\widehat{u}_{j}(x) = G_{j+1}(x) (G_{j}(2x))^{-1} \widehat{u}(x) \in L^2[-\pi,\pi),
$$
$$
\widehat{v}_{j}(x) = G_{j+1}(x) \widehat{g}(2^{j+1}x) \widehat{v}(x) \in L^2[-\pi,\pi).
$$
By a similar proof to that for Lemma \ref{p:decay_AWD_filters}, we also conclude that
$$
\widehat{u}^{d}_{j}(x) = \overline{G_{j+1}(x)^{-1}} \overline{G_{j}(2x)} \widehat{u}(x) \in L^2[-\pi,\pi),
$$
$$
\widehat{v}^{d}_{j}(x) = \overline{G_{j+1}(x)^{-1}}  \overline{\widehat{g}(2^{j+1}x)^{-1}} \widehat{v}(x) \in L^2[-\pi,\pi).
$$
\end{itemize}

We now show that the proposed filters satisfy Assumption $3'$. Let
\begin{equation}\label{e:f_j(x)}
f_{j}(x) = G_{j,\gamma,d}(x)^{2}.
\end{equation}
Then $f_{j}(x)$ satisfies \eqref{e:A3'_ii} (i.e., $f_{j}$ and its inverse is uniformly bounded over $|x| \leq 4\pi/3$ and large $j$). Moreover, since
$$
G^{'}_{j,\gamma,d}(x) = \frac{1}{2}f_{j}(x)^{-1/2}f^{'}_{j}(x), \quad  G^{''}_{j, \gamma,d}(x) = \frac{1}{2}\Big( -\frac{1}{2}f_{j}(x)^{-3/2}f^{'}_{j}(x)^{2} + f_{j}(x)^{-1/2}f^{''}_{j}(x) \Big),
$$
then it suffices to look at $f^{'}_{j}(x)$ and $f^{''}_{j}(x)$.

\begin{lemma}\label{l:Gj_satisfies_A2_A3'_A5}
Let $G_{j,\zeta}(x)$, $G_{j,d}(x)$ and $G_{j,\gamma,d}(x)$ be as in \eqref{e:G_{j,a}}, \eqref{e:G_{j,d}} and \eqref{e:G_{j,gamma,d}}, respectively. Then all these filters satisfy Assumptions $2$, $3'$, and $5$.
\end{lemma}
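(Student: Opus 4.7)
The whole argument rests on a single elementary observation: for any $f\colon [-\pi,\pi)\to\bbR$, its periodic extension $f_p$ coincides with $f$ on $[-\pi,\pi)$. Inspecting \eqref{e:G_{j,a}}, \eqref{e:G_{j,d}}, \eqref{e:G_{j,gamma,d}}, the denominators (which are the unperiodized versions of the numerators, with $2^j x$ as argument) agree with the numerators on $(-\pi,\pi)$, so
$$
G_{j,\zeta}(x) = G_{j,d}(x) = G_{j,\gamma,d}(x) = 1, \qquad x\in(-\pi,\pi),
$$
for every $j$. This immediately gives Assumption 5 with $|G_j(0)-1| = 0$, and reduces Assumption $3'$ to an analysis on the intervals $[\pi,4\pi/3]$ and $[-4\pi/3,-\pi]$.

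By evenness of each filter it suffices to work on $[\pi,4\pi/3]$. There the periodic extension corresponds to the shift $x\mapsto x-2\pi$, so inside the numerators $|x|$ is replaced by $2\pi-x$. This produces the closed-form expressions
$$
G_{j,d}(x) = \left(\tfrac{2\pi-x}{x}\right)^{d}, \qquad G_{j,\zeta}(x) = \sqrt{\tfrac{\zeta^2 + 2^{2j}x^2}{\zeta^2 + 2^{2j}(2\pi - x)^2}},
$$
and an analogous expression for $G_{j,\gamma,d}$, most conveniently handled through the squared filter $f_j$ of \eqref{e:f_j(x)}. Each is $C^\infty$ on the open interval $(\pi,4\pi/3)$, so the one-sided limits in \eqref{e:A3'_iv} at $\pi^+$ and $(4\pi/3)^-$ exist; at $x=\pi$ each filter evaluates to $1$, matching the constant value from the interior and giving the continuity on $[-4\pi/3,4\pi/3]$ required by Assumption $3'$. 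The same analysis applies to $G_j^{-1}$, which has the reciprocal structure (swap numerator and denominator).

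The only substantive remaining point is the uniform-in-$j$ bound in \eqref{e:A3'_ii}--\eqref{e:A3'_iv}. For $G_{j,d}$ there is nothing to do, since $(2\pi-x)^d/x^d$ is independent of $j$. For $G_{j,\zeta}$ and $G_{j,\gamma,d}$ the approach is a limit argument: as $j\to\infty$ the constants $\zeta^2$ and $\gamma_0$ become negligible compared with the $2^{2j}(2\pi-x)^2$ and $2^{2j\beta}(2\pi-x)^{2\beta}$ terms, so $G_{j,\zeta}(x)\to x/(2\pi-x)$ and $G_{j,\gamma,d}(x)$ tends to a limit involving only the leading powers. These pointwise limits, together with their first two derivatives, are bounded on $[\pi,4\pi/3]$ by direct computation, and the $j$-dependent corrections to the derivatives are $O(2^{-2j})$, uniform in $x$. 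Combining this uniform convergence with the continuity of each $G_{j,\cdot}^{(k)}$ on a compact interval for every finite $j$ gives the required uniform bounds; the extension to $G_{j,\cdot}^{-1}$ is the mirror computation.

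Finally, Assumption $2$ is a byproduct: each $G_j$ is $2\pi$-periodic and uniformly bounded on $[-4\pi/3,4\pi/3]$ by Assumption $3'$, hence bounded on all of $\bbR$, so $G_j\in L^2_{\text{loc}}(\bbR)$, and the same holds for $G_j^{-1}$. The main obstacle is the algebraic bookkeeping for the second derivative of $G_{j,\gamma,d}$, where the mixed powers $|x|^\beta$ and $|x|^{2\beta}$ with growing coefficients $2^{j\beta}$ and $2^{2j\beta}$ must cancel cleanly in the ratio uniformly in $j$; this reduces via the identity $(f_j^{1/2})'' = \tfrac{1}{2} f_j^{-3/2}\bigl(f_j f_j'' - \tfrac{1}{2}(f_j')^2\bigr)$ to the uniform boundedness of rational functions with $j$-dependent coefficients, which can be controlled through the same limit argument as above.
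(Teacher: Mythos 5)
Your proposal is correct and follows essentially the same route as the paper's proof: observe that each $G_j\equiv 1$ on $|x|\le\pi$ (so Assumptions 2 and 5 are immediate), pass to the explicit expressions on $[\pi,4\pi/3]$ via the shift $x\mapsto 2\pi-x$, and obtain uniform-in-$j$ bounds on $f_j=G_{j,\gamma,d}^2$ and its first two derivatives using that $\gamma_2=m^2>0$ keeps the denominator bounded below while the lower-order terms carry negative powers of $2^{j\beta}$. The paper simply carries out the term-by-term estimates of $f_j'$ and $f_j''$ that you defer to "algebraic bookkeeping," so nothing essential is missing.
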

\begin{proof}
It is clear that Assumptions 2 and 5 are satisfied in all cases, so we focus on Assumption $3'$.

The argument is straightforward for $G_{j,\zeta}(x)$ and $G_{j,d}(x)$. As for $G_{j,\gamma,d}(x)$, let $f_j(x)$ be as in \eqref{e:f_j(x)}. Then it suffices to show that it satisfies \eqref{e:A3'_iii} and \eqref{e:A3'_iv} (with $f_{\cdot}$ in place of $G_{\cdot}$). Note that $f_{j}(x) = 1$, $|x| \leq \pi$, and thus the first and second derivatives are trivial in this range. Without loss of generality, we now only look at the range $\pi \leq x \leq 4\pi/3$, where $x$ goes to $\pi^{+}$ as a side limit. In this case,
$$
f_{j}(x) = \frac{\gamma_0 + \gamma_1 2^{j\beta} x^{\beta} + \gamma_2 2^{2j\beta}x^{2 \beta}}{\gamma_0 + \gamma_1 2^{j \beta }(2\pi - x)^{\beta} + \gamma_2 2^{2 j \beta}(2\pi - x)^{2 \beta}}.
$$
Thus,
$$
f^{'}_{j}(x) = (\gamma_1 2^{j \beta} \beta x^{\beta-1} + \gamma_2 2^{2j\beta}2\beta x^{2\beta-1})
(\gamma_0 + \gamma_1 2^{j\beta}(2\pi - x)^{\beta} + \gamma_2 2^{2j\beta}(2\pi-x)^{2 \beta})^{-1}
$$
$$
+ (-1)^2 (\gamma_0 + \gamma_1 2^{j\beta} x^{\beta} + \gamma_2 2^{2j\beta}x^{2 \beta})
 (\gamma_0 + \gamma_1 2^{j \beta}(2\pi -x)^{\beta} + \gamma_2 2^{2j\beta}(2\pi - x)^{2\beta})^{-2}
 $$
 $$
 \cdot (\gamma_1 2^{j \beta}\beta (2\pi -x)^{\beta-1} + \gamma_2 2^{2j\beta}2\beta (2\pi-x)^{2 \beta-1})
 $$
\begin{equation}\label{e:f'j(x)}
=: a_{j}(x)b_j(x) + c_j(x)d_{j}(x)e_j(x).
\end{equation}
Note that $\lim_{x \rightarrow \pi^{+}}$ exists, a requirement in \eqref{e:A3'_iv}. As for the first term in the sum \eqref{e:f'j(x)}, note that, for any $\varepsilon > 0$, for large $j$, $|w_j |= |\gamma_0/ 2^{2 j \beta} + \gamma_1/2^{j \beta}(2\pi -x)^{\beta}| < \varepsilon$ uniformly in $x$ over $\pi \leq x \leq 4 \pi/3$. Moreover, since $\gamma_2 = m^2 > 0$, then $\gamma_2 (2\pi -x)^{2 \beta}$ attains its (constrained) minimum at $x = 4\pi/3$. Therefore,
$$
|b_{j}(x)| \leq |\gamma_2 (2 \pi - 4\pi/3)^{2 \beta} - |w_j||^{-1}2^{-2 j \beta},
$$
whereas
\begin{equation}\label{e:f'j(x)_1st_term_bound}
|a_{j}(x)| \leq 2^{2j\beta} \Big( \frac{|\gamma_1|}{2^{j \beta}} \beta \Big(\frac{4\pi}{3}\Big)^{\beta-1} + |\gamma_2| 2 \beta \Big(\frac{4\pi}{3}\Big)^{2 \beta-1} \Big).
\end{equation}
Now consider the absolute value of the second term in the sum \eqref{e:f'j(x)}. By a similar reasoning,
$$
|c_j(x) d_j(x) e_j(x)|\leq 2^{2j \beta} \Big( \frac{\gamma_0}{2^{2j\beta}} + \frac{|\gamma_1|}{2^{j\beta}} \Big( \frac{4 \pi}{3}\Big)^{\beta} + \gamma_2
\Big( \frac{4 \pi}{3}\Big)^{\beta}\Big)
2^{(-2)2j\beta} |\gamma_2(2\pi - 4\pi/3)^{2 \beta} - |w_j||^{-2}
$$
\begin{equation}\label{e:f'j(x)_2st_term_bound}
\cdot 2^{2j\beta}\Big( \frac{|\gamma_1|}{2^{j \beta}}\beta (2\pi - 4\pi/3)^{\beta-1} + \gamma_2 2 \beta (2 \pi - 4\pi/3)^{2 \beta-1}\Big).
\end{equation}
By \eqref{e:f'j(x)_1st_term_bound} and \eqref{e:f'j(x)_2st_term_bound}, $| f^{'}_{j}(x) |$ (and thus also $(f_{j}(x)^{-1})^{'}$) is uniformly bounded over $\pi \leq x \leq 4 \pi/3$ and large $j$, i.e., it satisfies \eqref{e:A3'_iii} and \eqref{e:A3'_iv} for $k=1$.

We now turn to the second derivative. We obtain
$$
f^{''}_{j}(x) = \Big[ (\gamma_1 2^{j \beta} \beta (\beta -1) x^{\beta-2} + \gamma_2 2^{2j \beta}2 \beta(2\beta -1) x^{2\beta-2})
b_j(x)
+ a_j(x) (-1)^2 (\gamma_0 + \gamma_1 2^{j \beta}(2 \pi -x)^{\beta}
$$
$$
 + \gamma_2 2^{2j \beta}(2\pi -x)^{2\beta})^{-2}
\cdot (\gamma_1 2^{j \beta}\beta(2\pi - x)^{\beta-1}
+ \gamma_2 2^{2j\beta}2\beta(2\pi-x)^{2\beta-1})\Big]
$$
$$
+
$$
$$
(\gamma_1 2^{j \beta}\beta x^{\beta-1} + \gamma_2 2^{2j\beta} 2\beta x^{2\beta-1})[d_j(x)e_j(x)] + c_j(x)[d_j(x)e_j(x)]^{'},
$$
where
$$
[d_j(x)e_{j}(x)]^{'} = (-2) (\gamma_0 + \gamma_1 2^{j \beta}(2\pi-x)^{\beta} + \gamma_2 2^{2j \beta}(2\pi-x)^{2 \beta})^{-3}
$$
$$
(\gamma_1 2^{j \beta}\beta (2 \pi - x)^{\beta-1}(-1) + \gamma_2 2^{2j\beta}2\beta(2\pi-x)^{2 \beta-1}(-1)) e_j(x)
$$
$$
+ d_j(x) (\gamma_1 2^{j\beta} \beta (\beta-1)(2 \pi -x)^{\beta-2}(-1) + \gamma_2 2^{2 j \beta}2\beta (2 \beta-1) (2 \pi -x)^{2 \beta-2}(-1)).
$$
As with the first derivative, $\lim_{x \rightarrow \pi^{+}}$ exists. Moreover, by a similar reasoning to that for $f^{'}_{j}(x)$, we therefore conclude that $| f^{''}_{j}(x) |$ (and thus $(f_{j}(x)^{-1})^{''}$) is uniformly bounded over $\pi \leq x \leq 4 \pi/3$ and large $j$. Therefore, \eqref{e:A3'_iii} and \eqref{e:A3'_iv} are also satisfied for $k=2$. %$\Box$\\
\end{proof}

We are now in position to prove Theorem \ref{t:VJ_conv_unif}.\\
{\sc Proof of Theorem \ref{t:VJ_conv_unif}}:
By the proof of Proposition 2, section 7, in Didier and Pipiras \cite{didier:pipiras:2008}, we have to show that $V$ satisfies Assumptions 2, 5, and a H\"{o}lder condition, and make use of Assumption $3'$ instead of Assumption $3$. In fact, Assumptions 2 and 5 are satisfied by Lemma \ref{l:Gj_satisfies_A2_A3'_A5}. In Lemma \ref{l:X=int_V} we establish the H\"{o}lder condition \eqref{e:Holder} for some $\nu \in (0,1)$ in the cases of the fGLE and fOU. Finally, we can bound $|2^{-J/2}\Phi_{J}(2^{-J}v)|$ based on Lemma \ref{l:lemma1_in_AWD1}, which is a consequence of Assumptions $3'$ and $4'$. The latter is satisfied for the spectral filters $\widehat{g}(x)$ of the processes in question, whereas the former also holds in view of Lemma \ref{l:Gj_satisfies_A2_A3'_A5}. Thus, the claim follows. $\Box$\\

\begin{remark}Note that, in the case of this paper, Assumptions 6, $3^*$, $5^*$, 7 of Didier and Pipiras \cite{didier:pipiras:2008} are not used since the H\"{o}lder continuity order given by Lemma \ref{l:X=int_V} is $\nu \in (0,1)$.
\end{remark}

%The next proposition shows that the position process $X$ can also be uniformly approximated by a partial sum process defined based on the discrete approximation to $V$.
To prove Corollary \ref{c:X_Riemann}, without loss of generality we assume that $T \in \bbN$.\\
{\sc Proof of Corollary \ref{c:X_Riemann}}
By Lemma \ref{l:X=int_V}, $X(t)$ is well-defined as the integral \eqref{e:X=Riemann_V}.
Fix $J > 0$ and form an associated partition $\Big\{\frac{k}{2^{J}T}\Big\}_{k = 0,\hdots,2^{J}T-1}$ of $[0,T]$. Assume that $T \geq t \geq 1$. The case where $t < 1$ can be handled similarly.

On the one hand,
$$
\Big| \sum^{\lfloor(2^J-1)t\rfloor}_{k=0} V\Big( \frac{k}{2^J}\Big)\frac{1}{2^J} - \sum^{\lfloor(2^J-1)t\rfloor}_{k=0} 2^{J/2}V_{J,k}\frac{1}{2^J} \Big|
\leq \sum^{\lfloor(2^J-1)t\rfloor}_{k=0} \Big| V\Big( \frac{k}{2^J}\Big) -  2^{J/2}V_{J,k} \Big| \frac{1}{2^J}
$$
\begin{equation}\label{e:conv_X_1st_term}
\leq A_1 2^{-J\nu} \frac{\lfloor(2^J-1)t\rfloor + 1}{2^J},
\end{equation}
where the last inequality follows from Proposition \ref{t:VJ_conv_unif}.

On the other hand, since $t > 1$, then $\frac{\lfloor(2^{J}-1)t \rfloor + 1}{2^J} \leq t$. Therefore,
$$
\Big| \int^{\frac{\lfloor(2^{J}-1)t \rfloor + 1}{2^J}}_{0}V(s)ds - \sum^{\lfloor(2^J-1)t\rfloor}_{k=0}V\Big(\frac{k}{2^J}\Big)\frac{1}{2^J} \Big|
$$
$$
= \Big| \int^{\frac{\lfloor(2^{J}-1)t \rfloor + 1}{2^J}}_{0}V(s)ds - \sum^{\lfloor(2^J-1)t\rfloor}_{k=0} \int^{(k+1)/2^J}_{k/2^J}V\Big(\frac{k}{2^J}\Big) ds \Big|
$$
$$
\leq \sum^{\lfloor(2^J-1)t\rfloor}_{k=0} \int^{(k+1)/2^J}_{k/2^J} \Big| V(s) - V\Big(\frac{k}{2^J}\Big) \Big| ds \leq \sum^{\lfloor(2^J-1)t\rfloor}_{k=0} \sup_{s \in \lfloor\frac{k}{2^J},\frac{k+1}{2^J}\rfloor} \Big| V(s) - V\Big(\frac{k}{2^J}\Big) \Big| \int^{(k+1)/2^J}_{k/2^J} ds
$$
$$
\leq \max_{k=0,1,\hdots,\lfloor(2^J-1)t\rfloor}\sup_{s \in \lfloor\frac{k}{2^J},\frac{k+1}{2^J}\rfloor} \Big| V(s) - V\Big(\frac{k}{2^J}\Big) \Big| \frac{\lfloor(2^J-1)t\rfloor+1}{2^J}
$$
\begin{equation}\label{e:conv_X_2nd_term}
\leq \max_{k=0,1,\hdots,\lfloor(2^J-1)t\rfloor}\sup_{s \in \lfloor\frac{k}{2^J},\frac{k+1}{2^J}\rfloor} A_2 \Big| s - \frac{k}{2^J} \Big|^{\nu} \frac{\lfloor(2^J-1)t\rfloor+1}{2^J} \leq A_3 2^{-j \nu}
\end{equation}
by the H\"{o}lder condition \eqref{e:Holder}, where $A_2$, $A_3$ are random variables that depend only on $T$. Also,
by the H\"{o}lder continuity of $V$ (Lemma \ref{l:X=int_V}),
\begin{equation}\label{e:convergence_proof_residual_integ_term}
\Big| \int^{t}_{\frac{\lfloor(2^{J}-1)t \rfloor + 1}{2^J}} V(s)ds \Big| \leq \sup_{s \in [0,T]}{|V(s)|} \Big( t - \frac{\lfloor(2^{J}-1)t \rfloor + 1}{2^J} \Big) \leq \sup_{s \in [0,T]}{|V(s)|} \hspace{1mm}C 2^{-J}
\end{equation}
for some constant $C>0$.

Now take $\sup_{t \in [0,T]}$ on both sides of \eqref{e:conv_X_1st_term}, \eqref{e:conv_X_2nd_term}, \eqref{e:convergence_proof_residual_integ_term}. The claim follows from the triangle inequality. $\Box$\\

%Then, by Proposition \ref{e:conv_speed},
%$$
%\Big| \sum^{[\frac{2^J}{T}t]}_{k=1}V(\xi_{k}(J))\frac{T}{2^J} - \sum^{[\frac{2^J}{T}t]}_{k=1}V_{J,k}\frac{T}{2^J}\Big| \leq A_{1}\Big[\frac{2^J t}{T}\Big]2^{-J\nu}\frac{T}{2^{J}},
%$$
%where the random variable $A_1$ does not depend on $J$ or $t$. Therefore,
%\begin{equation}\label{e:coefs_approx_partialsum}
%\sup_{t \in [0,T]}\Big| \sum^{[\frac{2^J}{T}t]}_{k=1}V(\xi_{k}(J))\frac{T}{2^J} - \sum^{[\frac{2^J}{T}t]}_{k=1}V_{J,k}\frac{T}{2^J}\Big| \leq A_1 2^{-J\nu} T.
%\end{equation}
%On the other hand,
%$$
%\Big| \int^{t}_{0}V(t) dt - \sum^{[\frac{2^J}{T}t]}_{k=1}V(\xi_{k}(J))\frac{T}{2^J} \Big|\leq \Big[\frac{2^J t}{T}\Big]\sup_{k = 0,\hdots,2^J}\sup_{s \in [\frac{kT}{2^J},\frac{(k+1)T}{2^J}]}|V(s) - V(\xi_{k}(J))|\frac{1}{2^J}
%$$
%\begin{equation}\label{e:integ_approx_partialsum}
%\leq A_2 \sup_{k = 0,\hdots,2^J}\sup_{s \in [\frac{kT}{2^J},\frac{(k+1)T}{2^J}]}|s - \xi_{k}(J)|^{\nu} \leq A_2 2^{-J \nu},
%\end{equation}
%where the last equality follows from the proof of Lemma \ref{l:X=int_V}, and the random variable $A_2$ depends only on $[0,T]$. Now take $\sup_{t \in [0,K]}$ on both sides of (\ref{e:integ_approx_partialsum}) and combine it with (\ref{e:coefs_approx_partialsum}) to arrive at the claim. $\Box$

\section{Auxiliary results}\label{s:aux}

In this section, we develop the Fourier domain integral representations for $X$. We first establish that $X$ can be regarded as the integral of $V$.

\begin{lemma}\label{l:X=int_V}
Let $\{V(t)\}_{t \geq 0}$ be the velocity process for the fGLE \eqref{p:V_gle_spec_repres} or the fOU \eqref{e:fracOU_spec}. Let
\begin{equation}\label{e:X=Riemann_V}
X(t) = \int^{t}_{0}V(s)ds,
\end{equation}
where the integral (\ref{e:X=Riemann_V}) is taken in the Lebesgue sense. Then (\ref{e:X=Riemann_V}) is well-defined in the sense that there exists a process $\eta(t)$ which is equivalent to $V(t)$, and which satisfies the H\"{o}lder condition
\begin{equation}\label{e:Holder}
|\eta(t) - \eta(s)| \leq A |t-s|^{\nu} \quad a.s.
\end{equation}
for some random variable $A$ that only depends on $[0,T]$ and some $\nu \in (0,1)$.

Let $\{X(t)\}_{t \geq 0}$ be the position process associated with $\{V(t)\}_{t \geq 0}$. Then
\begin{equation}\label{e:X_integ_repres}
X(t) \stackrel{{\mathcal L}}= \int_{\bbR} \Big(\frac{e^{itx}-1}{ix}\Big) \widehat{g}(x) \widetilde{B}(dx), \quad 0 < \delta
< \frac{1}{2},
\end{equation}
holds, where $\widehat{g}(x)$ is a spectral filter and $\widetilde{B}(dx)$ is given in \eqref{e:Brownian_measure}.
\end{lemma}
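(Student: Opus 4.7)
The lemma contains two distinct claims: (i) the existence of a H\"{o}lder-continuous modification $\eta$ of $V$, which makes the Lebesgue integral \eqref{e:X=Riemann_V} well-defined pathwise, and (ii) the spectral integral representation \eqref{e:X_integ_repres} for $X$. For (i), Kolmogorov's continuity criterion (for Gaussian processes) reduces matters to bounding $L^2$-increments. Using \eqref{e:Cramer-Wold_spec} together with the Brownian-measure isometry \eqref{e:Brownian_measure},
\[
E\,|V(t) - V(s)|^2 \;=\; 4 \int_{\bbR} \sin^2\!\Big(\tfrac{(t-s)x}{2}\Big) |\widehat g(x)|^2\,dx,
\]
and I would then rescale via the substitution $y = (t-s)x$.

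For the fOU with spectral filter \eqref{e:fracOU_spec}, this rescaling produces
\[
E\,|V(t) - V(s)|^2 \;=\; 4C\,|t-s|^{2d+1} \int_{\bbR} \frac{\sin^2(y/2)\,|y|^{-2d}}{\zeta^2(t-s)^2 + y^2}\,dy,
\]
and dominated convergence shows that the $y$-integral remains bounded as $(t-s)\to 0$: the integrand behaves like $|y|^{-2d}$ near the origin (integrable iff $d<1/2$) and like $4|y|^{-2d-2}$ at infinity (integrable iff $d>-1/2$). An analogous rescaling for the fGLE via \eqref{p:V_gle_spec_repres} (using $\beta = 1+2d$) produces $E\,|V(t)-V(s)|^2 \sim C|t-s|^{2d+1}$ as well. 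Writing the common exponent as $2H$ with $H = d + 1/2$, and noting that $H \in (0,1)$ in every admissible case, one obtains $E\,|V(t)-V(s)|^2 \leq C_T |t-s|^{2H}$ uniformly on any compact $[0,T]$. Gaussianity extends this $L^2$ bound to all even moments, so Kolmogorov's theorem yields a modification $\eta$ a.s.\ H\"{o}lder continuous of any exponent $\nu \in (0,H) \subset (0,1)$, which is precisely \eqref{e:Holder} with random constant depending only on $T$; the pathwise Lebesgue integral $\int_0^t \eta(s)\,ds$ is then well-defined and continuous in $t$.

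For (ii), the natural route is stochastic Fubini applied to
\[
\int_0^t V(s)\,ds \;=\; \int_0^t \!\int_{\bbR} e^{isx}\,\widehat g(x)\,\widetilde B(dx)\,ds.
\]
The required joint integrability reduces to $\int_0^t \!\int_{\bbR} |\widehat g(x)|^2\,dx\,ds = t\,\|\widehat g\|_2^2 < \infty$; this holds because $|\widehat g|^2 \sim |x|^{-2\delta}$ near the origin with $|\delta| < 1/2$ (by \eqref{e:specdens_powerlaw}) and $|\widehat g|^2$ decays polynomially fast at infinity by \eqref{e:fracOU_spec} and \eqref{p:V_gle_spec_repres}. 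Evaluating the deterministic inner integral $\int_0^t e^{isx}\,ds = (e^{itx}-1)/(ix)$ then produces \eqref{e:X_integ_repres} with almost-sure (hence in-law) equality. The kernel $(e^{itx}-1)\,\widehat g(x)/(ix)$ sits in $L^2(\bbR)$ because $|(e^{itx}-1)/(ix)|^2 = 4\sin^2(tx/2)/x^2$ is bounded near the origin (taming the $|x|^{-2\delta}$ singularity there) and decays like $x^{-2}$ at infinity (reinforcing the tail decay of $|\widehat g|^2$).

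The main technical obstacle, as I see it, lies in the careful accounting in the rescaled increment estimate, so as to verify that the exponent really is $2H$ with $H<1$ in every admissible parameter range, including $d$ close to $\pm 1/2$ for the fOU; once the H\"{o}lder bound is established, Kolmogorov's theorem and the stochastic Fubini step are standard.
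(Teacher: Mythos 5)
Your proposal is correct, and its second half (stochastic Fubini to pass from $X(t)=\int_0^t V(s)\,ds$ to the kernel $(e^{itx}-1)/(ix)$, justified by $\|\widehat g\|_{L^2}<\infty$ and the square-integrability of the resulting kernel) is essentially what the paper does: the paper bounds $E|V(s)V(s')|$ by stationarity and invokes Fubini. Where you genuinely diverge is the H\"{o}lder part. The paper does not compute increment variances at all: it cites the Cram\'{e}r--Leadbetter spectral criterion and reduces everything to checking the single integral $\int_0^\infty x^{2\nu}\log(1+x)\,|\widehat g(x)|^2\,dx<\infty$, which it verifies near $0$ and near $\infty$ directly from \eqref{e:fracOU_spec} and \eqref{p:V_gle_spec_repres}, obtaining $\nu<d+1/2$ for the fGLE. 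You instead prove the bound $E|V(t)-V(s)|^2\leq C|t-s|^{2H}$, $H=d+1/2$, by the rescaling $y=(t-s)x$ and then apply Kolmogorov--Chentsov via Gaussian hypercontractivity. Both routes land on the same exponent range; yours is more self-contained and makes the modulus of continuity explicit, at the cost of having to control the rescaled integral uniformly in $h=t-s$ (for the fGLE this needs a uniform lower bound on the denominator $\gamma_0h^{2\beta}+\gamma_1h^{\beta}|y|^{\beta}+\gamma_2|y|^{2\beta}$, which holds because its discriminant $\gamma_1^2-4\gamma_0\gamma_2=-4m^2a^2$ is negative, a point worth spelling out since $\gamma_1<0$), while the paper's route is a one-line verification once the external theorem is accepted. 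Either argument proves the lemma.
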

\begin{proof}
We only look at the fGLE, since the argument for the fOU can be developed along the same lines.

We first show \eqref{e:X=Riemann_V}. From Cram\'er and Leadbetter \cite{cramer:leadbetter:1967}, pp.\ 181-182, the conclusion follows from verifying the condition
$$
\int^{\infty}_{0}x^{2 \nu}\log(1+x)|\widehat{g}(x)|^2 dx <
\infty
$$
from some $\nu \in (0,1)$. Let $\varepsilon > 0$. Since $|\widehat{g}(.)|^2$ in \eqref{p:V_gle_spec_repres} is continuous, the only potentially
problematic points are the origin or $\infty$.
As $x \rightarrow 0^+$, $|\widehat{g}(x)|^2
\sim |x|^{2d}$, so $\int^{\varepsilon}_{0}x^{2 \nu} \log(1+x)|\widehat{g}(x)|^2dx < \infty$ for $\nu > 0$.
As $x \rightarrow \infty$, $|\widehat{g}(x)|^2 \sim x^{-2(\beta - d)}$, so $\int^{\infty}_{\varepsilon} x^{2 \nu}
\log(1+x)|\widehat{g}(x)|^2dx < \infty$ for $\nu < d + 1/2$.

To show \eqref{e:X_integ_repres}, note that $s,s' \geq 0$, $E|V(s)V(s')| \leq \sqrt{E(V(s))^2}\sqrt{E(V(s'))^2}$, which is finite and constant, by stationarity. Now apply Fubini's Theorem and formula \eqref{e:X=Riemann_V}. %$\Box$\\
\end{proof}

The following is a corollary to Lemma \ref{l:X=int_V}.

\begin{corollary}\label{c:increm_subdiff_specdens}
Let $\{X(t)\}_{t \geq 0}$ be the position process associated with the fGLE \eqref{p:V_gle_spec_repres}. Denote the discrete-time first difference process by $Y_{n}=\Delta X(n) = X(n+1) - X(n)$, $n \in \bbN \cup \{0\}$. Thus, its spectral density $f_Y$ is
$$
f_{Y}(x) = c(d)^2 \hspace{1mm}
\Big( \Big|\frac{e^{ix}-1}{ix}\Big|^2 \Big|\frac{1}{\zeta
\kappa(x)|x|^{-2d}-imx}\Big|^2 |x|^{-2 d}
$$
$$ +
|e^{ix}-1|^2 \sum_{k \in \bbZ\backslash\{0\}} \Big|\frac{1}{\zeta
\kappa(x+2\pi k)|x+2\pi k|^{-2d}-im(x+2\pi k)}\Big|^2 |x+2\pi k
|^{-2(d+1)}\Big).
$$
Let $\{V(t)\}_{t \geq 0}$ and $\{X(t)\}_{t \geq 0}$ be the velocity and position processes, respectively, associated with the fOU \eqref{e:fracOU_spec}. Then the spectral density of the discrete time process $\{V(t)\}_{t = 0, 1, \hdots}$ is
$$
f_{V}(x) = \sigma^2 \Gamma(2d+2) \sin(\pi(d + 1/2))
$$
\begin{equation}\label{e:sampled_specdens_fOU}
\cdot \Big( \frac{1}{\zeta^2 + x^2} \frac{1}{|x|^{2d}} + \sum_{k \in \bbZ \backslash\{0\}} \frac{1}{\zeta^2 + (x+2k\pi)^2} \frac{1}{|x+2k\pi|^{2d}} \Big), \quad x \in [-\pi,\pi).
\end{equation}
Denote the discrete-time first difference process by $Y_{n}=\Delta X(n) = X(n+1) - X(n)$, $n \in \bbN \cup \{0\}$. Thus, its spectral density $f_Y$ is
$$
f_{Y}(x) = \sigma^2 \Gamma(2d+2) \sin(\pi (d+1/2))\Big( \Big| \frac{1 - e^{-ix}}{ix} \Big|^2 \frac{1}{\zeta^2 + x^2} \frac{1}{|x|^{2d}}
$$
$$
+ | 1 - e^{-ix}|^2  \sum_{k \in \bbZ \backslash\{0\} } \frac{1}{\zeta^2 + (x+2k\pi)^2} \frac{1}{|x+2k\pi|^{2d+2}}\Big),  \quad x \in [-\pi,\pi).
$$
\end{corollary}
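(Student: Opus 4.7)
The plan is to deduce each spectral density from the Fourier-domain integral representations of Lemma \ref{l:X=int_V} by composing three elementary operations: $(a)$ take increments in the time domain, which in the spectral representation multiplies the continuous-time filter by $(e^{ix}-1)/(ix)$; $(b)$ restrict to the integer lattice, which by the classical aliasing/periodization identity replaces the continuous-time spectral density by its $2\pi$-folded sum; $(c)$ simplify using $e^{i(x+2\pi k)}=e^{ix}$ so that the factor $|e^{ix}-1|^{2}$ can be pulled out of the $k\neq 0$ tail.

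More precisely, I would start from Lemma \ref{l:X=int_V}, which yields
$$
X(t)\stackrel{\mathcal L}= \int_{\bbR}\frac{e^{itx}-1}{ix}\,\widehat g(x)\,\widetilde B(dx).
$$
Taking the first difference, $Y_n=X(n+1)-X(n)$, I get
$$
Y_{n}\stackrel{\mathcal L}=\int_{\bbR} e^{inx}\,\frac{e^{ix}-1}{ix}\,\widehat g(x)\,\widetilde B(dx), \qquad n\in\bbN\cup\{0\},
$$
which identifies $\widehat h(x):=\frac{e^{ix}-1}{ix}\widehat g(x)$ as the continuous-time spectral filter of the stationary sequence $\{Y_n\}$. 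The standard aliasing lemma for sampled second-order stationary processes (see, e.g., the Poisson summation argument in any textbook covering stochastic processes with spectral representations) then gives
$$
f_{Y}(x)=\sum_{k\in\bbZ}\bigl|\widehat h(x+2\pi k)\bigr|^{2},\qquad x\in[-\pi,\pi),
$$
provided $\widehat h\in L^{2}(\bbR)$ and $\sum_{k}|\widehat h(x+2\pi k)|^{2}<\infty$ a.e. The identity $e^{i(x+2\pi k)}=e^{ix}$ lets me write $|\widehat h(x+2\pi k)|^{2}=|e^{ix}-1|^{2}|\widehat g(x+2\pi k)|^{2}/(x+2\pi k)^{2}$ for $k\neq 0$, and the $k=0$ term stays as $|(e^{ix}-1)/(ix)|^{2}|\widehat g(x)|^{2}$. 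Substituting $\widehat g$ from \eqref{p:V_gle_spec_repres} in the equivalent pre-rationalized form $\widehat g(x)=\{1/(\zeta\kappa(x)|x|^{-2d}-imx)\}|x|^{-d}$ (up to the constant $c(d)$) gives the fGLE expression. The two fOU expressions follow the same pattern: $f_{V}$ comes directly from sampling $V(t)=\int e^{itx}\widehat g(x)\widetilde B(dx)$ with $\widehat g$ from \eqref{e:fracOU_spec}, while $f_{Y}$ for the fOU repeats the increment-plus-aliasing argument.

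The main substantive point to verify is the applicability of the aliasing identity, i.e.\ that $\widehat h\in L^{2}(\bbR)$ so that both the continuous-time variance and the pointwise sum are finite. Near the origin, $\widehat g(x)\sim|x|^{-\delta}$ with $\delta\in(-1/2,1/2)$ and the factor $(e^{ix}-1)/(ix)$ is bounded, so integrability at $0$ reduces to $\int_{0}^{\varepsilon}|x|^{-2\delta}\,dx<\infty$, i.e.\ $\delta<1/2$. At infinity, for the fGLE, $|\widehat g(x)|^{2}\sim|x|^{-2(\beta-d)}$ and for the fOU, $|\widehat g(x)|^{2}\sim|x|^{-2d-2}$, while $|(e^{ix}-1)/(ix)|^{2}\lesssim x^{-2}$; in both cases the tail behaviour yields a summable series uniformly in $x\in[-\pi,\pi)$. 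These are exactly the integrability properties already used in Lemma \ref{l:X=int_V} to guarantee \eqref{e:X=Riemann_V}, so no new estimate is required.

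The hard part is really only the bookkeeping: keeping the constant $c(d)$ and the factor $\Gamma(2d+2)\sin(\pi(d+1/2))$ correctly placed through $(a)$--$(c)$, and making sure the equivalent forms of $|\widehat g|^{2}$ for the fGLE (rationalized and not) are reconciled so that the corollary statement is reproduced verbatim. Once the aliasing identity is invoked, the three spectral-density formulas are immediate.
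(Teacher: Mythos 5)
Your proposal is correct and matches the paper's intent: the paper gives no explicit proof, merely declaring the statement ``a corollary to Lemma \ref{l:X=int_V}'', and your increment-plus-aliasing derivation (filter $\frac{e^{ix}-1}{ix}\widehat g(x)$ from the integral representation of $X$, then the $2\pi$-periodization of $|\widehat h|^2$ for the sampled sequence) is exactly the implicit argument, with the convergence of the folded sums already guaranteed by the integrability estimates of Lemma \ref{l:X=int_V}. The only caveat is the purely notational reconciliation of the two equivalent forms of the fGLE filter and of the $2\pi$ normalization, which you correctly flag as bookkeeping.
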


\newpage

\section{Tables}\label{s:tables}

\begin{table}[h]
\caption{fOU: Local Whittle estimation of $d$ ($\zeta=1$) (wavelet filters cut off at lag $|T| = 40$, time series length $2^{9}$ for all methods)}
%{\centerline {(cut off at lag $|T| = 40$, \texttt{quadl.m} precision $10^{-13}$ , singularity radius $10^{-12}$, time series length $2^{9}$,}}
%{\centerline {Cholesky \texttt{quadl.m} integration interval radius 250, singularity radius $10^{-10}$,}}
%{\centerline {Cholesky \texttt{quadgk.m} integration interval radius 500, rel.\ tol.\ = $10^{-6}$, abs.\ tol.\ = $10^{-10}$,}}
%\centerline{init.\ CME \texttt{quadgk.m} rel.\ tol.\ $10^{-8}$, abs.\ tol.\ $10^{-12}$, time series length $2^{10}$)}
\centering
\begin{tabular}{ccccc}
\hline
$d = 0.10$ & $\widehat{d}$ & $s$ & $N$ & $|t|$ statistic \\ \hline
wavelet ($g_0$ length $2\times 400+1$, $J=6$) & 0.12106833 & 0.09259765 & 5000 & 0.01292907\\
wavelet (CME at  $j = 0$,  $J=6$) & 0.11820450 & 0.09406420 & 5000 & 1.65850145 \\
CME & 0.11992104 & 0.09378235 &  5000 & 0.74242733 \\
Cholesky  & 0.12130840 &  0.09308411 &  5000 & -\\
\hline
$d = 0.25$ & $\widehat{d}$ & $s$ & $N$ & $|t|$ statistic \\ \hline
wavelet ($g_0$ length  $2\times 1200+1$, $J=6$) & 0.27379919 & 0.09391947 & 5000 & 1.19695166\\
wavelet (CME at  $j = 0$,  $J=6$) & 0.27623054 & 0.09390228 & 5000 & 0.10736144\\
CME & 0.27474721 & 0.09585494 &  5000 & 0.68122236 \\
%CME ($M = 2^{10}$, $N = M/2$) & 0.27730265 & 0.09259387 &  5000 & 0.68739480 \\
Cholesky  & 0.27603043 & 0.09248460 &  5000 & -\\
%Cholesky (\texttt{quadgk.m}) & 0.27494326 & 0.09399631 &  5000 & 0.24150997
\hline
$d =0.45$ & $\widehat{d}$ & $s$ & $N$ & $|t|$ statistic \\ \hline
wavelet ($2\times 1200+1$, $J=6$) & 0.48484741 & 0.09462295 & 5000 & 1.04322102\\
wavelet ($2\times 1400+1$, $J=6$) & 0.48596691 & 0.09366874 & 5000 & 1.64788134\\
wavelet (CME at  $j = 0$,  $J=6$) & 0.48238424 & 0.09478310 & 5000 & 0.26842086 \\
CME & 0.48504209 & 0.09394773 &  5000 & 1.15107347\\
Cholesky  & 0.48288866 & 0.09313095  &  5000 & -\\
\hline
$d = -0.10$ & $\widehat{d}$ & $s$ & $N$ & $|t|$ statistic \\ \hline
wavelet (CME at  $j = 0$,  $J=6$) & $-0.06958397$ & 0.09376265 & 5000 & 1.87458648 \\
CME & $-0.07254477$ & 0.09363777  &  5000 & 0.28587914\\
Cholesky  & $-0.07307713$ & 0.09257674 &  5000 & -\\
\hline
$d = -0.25$ & $\widehat{d}$ & $s$ & $N$ & $|t|$ statistic \\ \hline
wavelet (CME at  $j = 0$,  $J=6$) & $-0.21858996$ & 0.09392950 & 5000 & 0.93216739\\
CME & $-0.21875428$ & 0.09171082 &  5000 & 1.03200374 \\
Cholesky  & $-0.21684272$ & 0.09350826  &  5000 & -\\
\hline
$d = -0.45$ & $\widehat{d}$ & $s$ & $N$ & $|t|$ statistic \\ \hline
wavelet (CME at  $j = 0$,  $J=6$) & $-0.40363919$ & 0.09449238   & 5000 & 1.07109047 \\
CME & $-0.40688947$ & 0.09378554 &  5000 &  0.64835641\\
Cholesky & $-0.40566672$ & 0.09480350 &  5000 & -\\

\end{tabular}\label{t:sim_fracOU_d_length9_initj0procCME}
\end{table}

\begin{table}[h]
\caption{Yule-Walker estimation of the AR(1) parameter $\phi = \exp(-1 \cdot  2^{-3}) = \exp(-2 \cdot 2^{-4}) = \exp(- 1/2 \cdot 2^{-2})= 0.8825$
(wavelet filters cut off at lag $|T| = 40$ or at value $10^{-9}$, time series length $2^{11}$ for all methods) }
%{\centerline {(cut off at lag $|T| = 40$ or at value $10^{-9}$, \texttt{quadl.m} precision $10^{-13}$ , time series length $2^{11}$)}}
\centering
\begin{tabular}{ccccc}
\hline
$\phi = 0.8825$ & $\widehat{\phi}$ & $s$ & $N$ & $|t|$ statistic \\ \hline
wavelet ($\zeta =1$, $J = 6$) & 0.88016217 & 0.01047668 & 5000 & 0.00300676\\
wavelet ($\zeta=2$, $J = 8$) &  0.88017601 & 0.01072520  & 5000 & 0.27145002 \\
wavelet ($\zeta=1/2$, $J = 4$) & 0.87919411 & 0.01072708 & 5000 & 4.34765925 \\
wavelet ($\zeta=1/2$, $J = 6$) & 0.87978815 & 0.01064935 & 5000 & 1.55891496\\
wavelet ($\zeta=1/2$, $J = 8$) & 0.88023961 & 0.01058475  & 5000 & 0.57449953\\
CME  & 0.88051697 & 0.01055337 & 5000 &  0.60412089\\
iterative  & 0.88011831 & 0.01052941 & 5000 & -
\end{tabular}\label{t:sim_OU_08825}
\end{table}

\newpage

\begin{table}[h]
\caption{fOU: Local Whittle estimation of $d$ ($\zeta=1$), comparison across values of $J$ (wavelet filters cut off at lag $|T| = 40$, time series length $2^{9}$ for all methods)}
%{\centerline {cut off at lag $|T| = 40$, \texttt{quadl.m} precision $10^{-13}$ , time series length $2^{9}$,}}
%{\centerline {For $d = 0.25$: singularity radius $10^{-12}$, }}
%\centerline{init.\ CME \texttt{quadgk.m} rel.\ tol.\ $10^{-8}$, abs.\ tol.\ $10^{-12}$, time series length $2^{10}$)}
%{\centerline {Cholesky \texttt{quadl.m} integration interval radius 250, singularity radius $10^{-10}$}}
%{\centerline {For $d = - 0.25$: singularity radius $0$ ($\Delta X(n)$), }}
%{\centerline{init.\ CME \texttt{quadl.m} precision $10^{-9}$, singularity radius $0$, time series length $2^{10}$}}
%{\centerline {Cholesky ($\Delta X(t)$) \texttt{quadl.m} integration interval radius 500, singularity radius $0$}}

%{\centerline {Cholesky \texttt{quadgk.m} integration interval radius 500, rel.\ tol.\ = $10^{-6}$, abs.\ tol.\ = $10^{-10}$,}}
%\centerline{init.\ CME \texttt{quadgk.m} rel.\ tol.\ $10^{-8}$, abs.\ tol.\ $10^{-12}$, time series length $2^{10}$)}
\centering
\begin{tabular}{ccccc}
\hline
$d=0.25$ & $\widehat{d}$ & $s$ & $N$ & $|t|$ statistic \\ \hline
wavelet (CME at  $j = 0$,  $J=2$) & 0.27497237 & 0.09367323 & 5000 & 0.56835559\\
wavelet (CME at  $j = 0$,  $J=4$) & 0.27473605 & 0.09096327 & 5000 & 0.70556038\\
wavelet (CME at  $j = 0$,  $J=6$) & 0.27430310 & 0.09215599 & 5000 & 0.93550788\\
wavelet (CME at  $j = 0$,  $J=8$) & 0.27408226 & 0.09386958 & 5000 & 1.04538369\\
wavelet (CME at  $j = 0$,  $J=10$) & 0.27750365 & 0.09482862 & 5000 & 0.78643924 \\
%CME ($M = 2^{10}$, $N = M/2$) & 0.27730265 & 0.09259387 &  5000 & 0.68739480 \\
Cholesky & 0.27603043 & 0.09248460 &  5000 & -\\
%Cholesky (\texttt{quadgk.m}) & 0.27494326 & 0.09399631 &  5000 & 0.24150997
\hline
$d=-0.25$ & $\widehat{d}$ & $s$ & $N$ & $|t|$ statistic \\ \hline
wavelet (CME at  $j = 0$,  $J=2$) & $-0.21832655$ & 0.09358734 & 5000 & 0.79308646 \\
wavelet (CME at  $j = 0$,  $J=4$) & $-0.21692926$ & 0.09455304  & 5000 & 0.04601620\\
wavelet (CME at  $j = 0$,  $J=6$) & $-0.21911564$ & 0.09132082  & 5000 & 1.22965544\\
wavelet (CME at  $j = 0$,  $J=8$) & $-0.21530451$ & 0.09444787 & 5000 & 0.81837755\\
wavelet (CME at  $j = 0$,  $J=10$) & $-0.21680078$ & 0.09295220 & 5000 & 0.02249260 \\
Cholesky  & $-0.21684272$ & 0.09350826  &  5000 & -\\
\end{tabular}\label{t:sim_fracOU_length9_initj0procCME_compareJ}
\end{table}

\begin{table}[h]
\caption{fGLE: Local Whittle estimation of $d$ ($\zeta=2$, $m=1$), comparison across values of $J$ (wavelet filters for $\Delta X(n)$ cut off at lag $|T| = 80$; time series of length $2^{9}$ for all methods)
}
%{\centerline {($\Delta X(n)$, cut off at lag $|T| = 80$, \texttt{quadl.m} precision $10^{-13}$, singularity radius $10^{-12}$, time series length $2^{9}$,}}
%{\centerline {Cholesky ($\Delta X(t)$) \texttt{quadl.m} integration interval radius 500, singularity radius $10^{-12}$,}
%\centerline{init.\ CME \texttt{quadl.m} precision $10^{-9}$, singularity radius $0$, time series length $2^{10}$)}
%}\vspace{0.5mm}
\centering\begin{tabular}{ccccc}
\hline
$d = 0.10$ & $\widehat{d}$ & $s$ & $N$ & $|t|$ statistic \\ \hline
wavelet (CME at  $j = 0$,  $J=6$) & $0.12044083$ &  0.09245007 & 5000 & 1.05431590\\
wavelet (CME at  $j = 0$,  $J=8$) & $0.12241433$ & 0.09102200 & 5000 & 0.02227624\\
wavelet (CME at  $j = 0$,  $J=10$) & $0.11932462$ & 0.09357999 & 5000 & 1.65282860\\
CME                         & $0.12221790$ & 0.09220211  &  5000  & 0.08515547\\
Cholesky  & $0.12237381$ & 0.09088329  &  5000 & -\\
\hline
$d = 0.25$ & $\widehat{d}$ & $s$ & $N$ & $|t|$ statistic \\ \hline
wavelet (CME at  $j = 0$,  $J=6$) & $0.27857299$ & 0.09217255  & 5000 & 0.25582876\\
wavelet (CME at  $j = 0$,  $J=8$) & $0.27998323$ & 0.09268102  & 5000 & 0.50777309\\
wavelet (CME at  $j = 0$,  $J=10$) & $0.28029380$ & 0.09627634 & 5000 & 0.66273074 \\
CME                         & $0.27818405$ &  0.09112314 &  5000  & 0.46947509\\
Cholesky  & $0.27904459$ &  0.09144735 &  5000 & -\\
\hline
$d = 0.45$ & $\widehat{d}$ & $s$ & $N$ & $|t|$ statistic \\ \hline
wavelet (CME at  $j = 0$,  $J=6$) & $0.43156246$ & 0.10538368  & 5000 & 0.78943448\\
wavelet (CME at  $j = 0$,  $J=8$) & $0.43199573$ &  0.10660877 & 5000 & 0.58088823\\
wavelet (CME at  $j = 0$,  $J=10$) & $0.43166818$ & 0.10678505 & 5000 & 0.73448769\\
CME                         & $0.43338219$ & 0.10403330  &  5000  & 0.07274950\\
Cholesky  & $0.43322954$ &  0.10579033 &  5000 & -\\
\end{tabular}\label{t:sim_fracGLE_length9_initj0procCME_compareJ}
\end{table}

%%%\newpage

\section{A study of the accuracy of numerical integration}\label{s:accuracy_numerical_integration}

We studied the accuracy of \texttt{quadl.m} by comparing it to the closed-form
$$
\psi_j = \frac{1}{2 \pi}\int^{\pi}_{-\pi}e^{ijx}(1 - \phi e^{-ix})^{-1}dx = \phi^{j} 1_{\{j \geq 0\}}, \quad - 1 < \phi < 1.
$$
For $\phi = -0.95$, $-0.50$, 0.50, 0.90, 0.95, and $|T| = 35$ or 100 depending on the decay of the filter, we obtained mean absolute deviation values $|\psi^{\textnormal{quadl}}_{t}- \psi_{t}|$ of the order of $10^{-17}$, which is comparable to machine precision.

%%%%%%%%%%%%%%%%%%%%%%%%%%%%%%%%%%%%%%%%%%%%%%%%%%%%%%%%%%%%%%%%%%%%%%%%%%%%%%%%%%%%%%%%%%%%%%%%%%%%%%%%%%%%%%%%%%%
% The Table below was present in the original submitted version of the paper

%\begin{table}[h]
%\caption{Numerical accuracy, AR(1) filters: \texttt{quadl.m} versus closed-form (cut off at lag $|T|$, $\textnormal{deviation}_{t}$ = $\psi^{\textnormal{quadl}}_{t}- \psi_{t}$) }
%%{\centerline {(cut off at lag $|T|$, $\textnormal{deviation}_{t}$ = $\psi^{\textnormal{quadl}}_{t}- \psi_{t}$, \texttt{quadl.m} precision $10^{-13}$)}}
%\centering
%\begin{tabular}{ccccc}
%\hline
%$\phi$ & $|T|$ & mean abs.\ dev.\ ($|t| \leq |T|$) \\ \hline
%$-0.95$ & 100 & 6.80 $\times$ $10^{-17}$\\
%$-0.50$ & 35 & 8.33 $\times$ $10^{-17}$\\
%0.50 & 35 & 3.04 $\times$ $10^{-17}$\\
%0.90 & 100 & 3.22 $\times$ $10^{-17}$\\
%0.95 & 100 & 3.73 $\times$ $10^{-17}$\\
%\end{tabular}\label{t:quadl_accuracy_AR1}
%\end{table}
%%%%%%%%%%%%%%%%%%%%%%%%%%%%%%%%%%%%%%%%%%%%%%%%%%%%%%%%%%%%%%%%%%%%%%%%%%%%%%%%%%%%%%%%%%%%%%%%%%%%%%%%%%%%%%%%%%%

In Table \ref{t:quadl_accuracy_FARIMA}, we also study the accuracy of \texttt{quadl.m} by comparing it to the closed-form
$$
\psi_j = \frac{1}{2 \pi}\int^{\pi}_{-\pi}e^{ijx}(1 - e^{-ix})^{-d}dx = \frac{\Gamma(j+d)}{\Gamma(d) \Gamma(j+1)}1_{\{j \geq 0\}}, \quad d \in \Big( -\frac{1}{2},\frac{1}{2}\Big)\backslash\{0\}.
$$
The value $\rho$ gives the radius of the ball around the singularity.
\begin{table}[h]
\caption{Numerical accuracy, FARIMA filters: \texttt{quadl.m} versus closed-form cut off at lag $|T|$, $\textnormal{deviation}_{t}$ = $\psi^{\textnormal{quadl}}_{t}- \psi_{t}$)}
%{\centerline {(cut off at lag $|T|$, singularity radius $\rho$, $\textnormal{deviation}_{t}$ = $\psi^{\textnormal{quadl}}_{t}- \psi_{t}$, \texttt{quadl.m} precision $10^{-13}$)}}
\centering
\begin{tabular}{ccccc}
\hline
$d$ & $\rho$ & $|T|$ & mean abs.\ dev.\ ($|t| \leq |T|$) \\ \hline
$-0.45$ & $10^{-12}$ & 100 & 2.26 $\times$ $10^{-14}$\\
%-0.35 & $10^{-12}$ & 100 & 2.39 $\times$ $10^{-14}$\\
$-0.25$ & $10^{-12}$ & 100 & 3.01 $\times$ $10^{-14}$\\
$-0.10$ & $10^{-12}$ & 100 & 4.24 $\times$ $10^{-14}$\\
 0.10 & $10^{-12}$ & 100 & 5.56 $\times$ $10^{-12}$\\
 0.25 & $10^{-12}$ & 100 & 3.92 $\times$ $10^{-10}$\\
% 0.35 & $10^{-12}$ & 100 & 6.62 $\times$ $10^{-9}$\\
% 0.40 & $10^{-12}$ & 100 & 2.71 $\times$ $10^{-8}$\\
 0.45 & $10^{-11}$ & 100 &  3.92 $\times$ $10^{-7}$\\
\end{tabular}\label{t:quadl_accuracy_FARIMA}
\end{table}
The results show a high degree of accuracy in all depicted instances of FARIMA(0,$d$,0). As expected, the integration procedure becomes less accurate in the case of FARIMA as $d$ approaches 0.5. For $d = 0.45$, for instance, we picked $\rho = 10^{-11}$ due to convergence problems. To compute the covariance function of FARIMA(0,$d$,0)
$$
\gamma(j) = \frac{1}{2 \pi}\int^{\pi}_{-\pi}e^{ijx}|1 - e^{-ix}|^{-2d}dx = \frac{\Gamma(1-2d)}{\Gamma(d)\Gamma(1-d)}\frac{\Gamma(j+d)}{\Gamma(j-d+1)}1_{\{j \geq 0\}},
$$
$d \in \Big( -\frac{1}{2},\frac{1}{2}\Big)\backslash\{0\}$, \texttt{quadl.m} required a wider radius around the singularity at zero as $d$ approaches $0.5$ (not shown). The command \texttt{quadgk.m} is suitable when the singularity at zero behaves like $x^{p}$ for $p \geq -0.5$. The results are displayed in Table \ref{t:quadl_accuracy_FARIMAcov}. The deviation for the lag $t = 0$ is shown separately because it is the only one of a different order of magnitude.

\begin{table}[h]
\caption{Numerical accuracy, FARIMA autocovariance: \texttt{quadgk.m} versus closed-form (cut off at lags 1 and $T$, $\textnormal{deviation}_{t}$ = $\gamma^{\textnormal{quadl}}_{t}- \gamma_{t}$)}
%{\centerline {(cut off at lags 1 and $T$, rel.\ tol.\ = $10^{-8}$, abs.\ tol.\ = $10^{-12}$, $\textnormal{deviation}_{t}$ = $\gamma^{\textnormal{quadl}}_{t}- \gamma_{t}$)}}
\centering
\begin{tabular}{ccccc}
\hline
$d$ & $ T$ & mean abs.\ dev.\ ($1 \leq t \leq T$) & abs.\ dev.\ ($t = 0$) \\ \hline
 0.10 & 100 & 9.66 $\times$ $10^{-12}$ & 1.95 $\times$ $10^{-2}$\\
 0.15 & 100 & 1.98 $\times$ $10^{-11}$ & 4.88 $\times$ $10^{-2}$\\
 0.20 & 100 & 2.02 $\times$ $10^{-10}$ & 9.87 $\times$ $10^{-2}$\\
 0.25 & 100 & 1.36 $\times$ $10^{-13}$ & 1.80 $\times$ $10^{-1}$\\
\end{tabular}\label{t:quadl_accuracy_FARIMAcov}
\end{table}

\section{Pseudocode}\label{s:pseudocode}

{\small
\begin{center}
\begin{tabular}{|l|}
\hline
\\
Wavelet-based simulation over the interval $[0,1]$\\
\\
\hline
\\
\textbf{Input}: the parameter values of the stochastic process, the final scale $J$, \\
the number of zero moments of the underlying wavelet basis; \\
\\
\textbf{Step 1}: numerically calculate the Fourier transform of $\widehat{u}_{j}(x)$, $\widehat{v}_{j}(x)$, $j = 0,1,\hdots,J$, to obtain \\
the time domain filters $u_{j}$, $v_{j}$. For simplicity, let $L$ be the constant length of $u_j$, $v_j$;\\
\\
\textbf{Step 2}: numerically calculate the autocovariance function $r_0$ of the discrete time process \\ induced
by the filter $\widehat{g}_{0}(x)$ at $j = 0$;\\
\\
\textbf{Step 3}: generate an initialization sequence $V_0$ of size $L+1$ via CME with the autocovariance $r_0$;\\
\\
\textbf{Step 4}: generate a sequence of standard Gaussian white noise $\varepsilon_j$ of the same length \\ $(2^j-1) + (L+1)$ of $V_j$.
Insert zeroes between the entries of both vectors $V_j$, $\varepsilon_j$. Convolve the \\ resulting vectors
with the low- and high-pass filters $u_j$, $v_j$ to obtain \\ $u_j \ast \uparrow_2 V_j$, $v_j \ast\uparrow_2 \varepsilon_j$, respectively.\\
\\
\textbf{Step 5}: drop the $2(L-1)$ entries of the vectors $u_j \ast \uparrow_2 V_j$, $v_j \ast\uparrow_2 \varepsilon_j$ which are affected\\ by
the boundary effect (filter truncation)
and add together the two resulting vectors \\to obtain the vector $V_{j+1}$
 of length $(2^{j+1}-1) + (L+1)$.\\
\\
\textbf{Step 6}: stop if $j = J$. The resulting simulation vector $V_{J}$ has size $(2^{J} -1) +(L+1) > 2^J$.\\
\\
\textbf{Step 7}: else set $j \leftarrow j +1$ and goto \textbf{Step 4}.\\
\\
\hline
\end{tabular}
\end{center}
}

\begin{remark}
Simulation over the time interval $[0,2^{K}]$, $K \in \bbN$, can be performed in an analogous fashion.
\end{remark}

%------------------------------------------------------------------

\bibliographystyle{agsm}

\bibliography{DidierFricks_wavelet_simulation_anomalous_diffusion_revised_arxiv}

\small

%\bigskip
%
%\noindent \begin{tabular}{lcl}
%Gustavo Didier & \hspace{5.5cm} & John Fricks \\
%Mathematics Department& & Department of Statistics \\
%Tulane University & & Penn State University  \\
%6823 St.\ Charles Avenue  & & 326 Thomas Bldg \\
%New Orleans, LA 70118, USA & & University Park PA 16802, USA \\
%{\it gdidier@tulane.edu}& & {\it fricks@stat.psu.edu} \\
%\end{tabular}\\
%
%\smallskip

\end{document}